\documentclass[hidelinks,onefignum,onetabnum]{siamart250211}

\usepackage{graphicx}
\usepackage{amsmath}
\usepackage{amssymb}
\usepackage{tikz}
\usepackage{algorithm}
\usepackage{algpseudocode}
\usepackage{booktabs}
\usepackage[margin=1in]{geometry}
\usepackage[table]{xcolor} 
\usepackage{placeins} 
\usepackage{booktabs}      
\usepackage{tabularx}      
\usepackage{caption}       

\usepackage{lipsum}
\usepackage{amsfonts}
\usepackage{graphicx}
\usepackage{epstopdf}
\ifpdf
  \DeclareGraphicsExtensions{.eps,.pdf,.png,.jpg}
\else
  \DeclareGraphicsExtensions{.eps}
\fi

\newsiamremark{remark}{Remark}
\newsiamremark{hypothesis}{Hypothesis}
\crefname{hypothesis}{Hypothesis}{Hypotheses}
\newsiamthm{claim}{Claim}
\newsiamremark{fact}{Fact}
\crefname{fact}{Fact}{Facts}

\headers{Structured Analytic Mappings} {Wei Feng, Tengda Wei, and Haiyong Zheng}

\title{Structured Analytic Mappings for Point Set Registration\thanks{Submitted to the editors DATE.
}}

\author{%
Wei Feng\thanks{College of Electronic Engineering, Ocean University of China, Qingdao 266404, China 
  (\email{weifeng@stu.ouc.edu.cn}, \email{zhenghaiyong@ouc.edu.cn}).}
\and
Tengda Wei\thanks{School of Mathematics and Statistics, Shandong Normal University, Ji'nan 250014, China 
  (\email{tdwei123@sdnu.edu.cn}).}
\and
Haiyong Zheng\footnotemark[2]\hspace{.5em}\thanks{Corresponding author.}
}

\usepackage{amsopn}

\makeatletter
\DeclareFontShape{OT1}{cmr}{bx}{sc}{<-> ssub * cmr/m/sc}{}
\makeatother
\definecolor{myGreen}{HTML}{33FF00}
\definecolor{myRed}{HTML}{FF3030}
\definecolor{myGrey}{HTML}{AA5555}
\definecolor{myWhite}{HTML}{FFFFFF}
\definecolor{josef}{HTML}{FF3030}

\definecolor{maroon}{RGB}{128, 0, 0} 
\definecolor{petr}{RGB}{200, 30, 30} 

\newcommand{\figref}[1]{Fig.~\ref{#1}}
\newcommand{\tabref}[1]{Table~\ref{#1}}
\newcommand{\equref}[1]{Equ.~\ref{#1}}

\renewcommand{\algorithmicrequire}{\textbf{Input:}}
\renewcommand{\algorithmicensure}{\textbf{Output:}}
\ifpdf
\hypersetup{
  pdftitle={Structured Analytic Mappings for Point Set Registration},
  pdfauthor={Wei Feng, Tengda Wei, and Haiyong Zheng}
}

\begin{document}
\maketitle

\begin{abstract}
We present an analytic approximation model for non-rigid point set registration, grounded in the multivariate Taylor expansion of vector-valued functions. By exploiting the algebraic structure of Taylor expansions, we construct a structured function space spanned by truncated basis terms, allowing smooth deformations to be represented with low complexity and explicit form. To estimate mappings within this space, we develop a quasi-Newton optimization algorithm that progressively lifts the identity map into higher-order analytic forms. This structured framework unifies rigid, affine, and nonlinear deformations under a single closed-form formulation, without relying on kernel functions or high-dimensional parameterizations.
The proposed model is embedded into a standard ICP loop---using (by default) nearest-neighbor correspondences---resulting in Analytic-ICP, an efficient registration algorithm with quasi-linear time complexity. Experiments on 2D and 3D datasets demonstrate that Analytic-ICP achieves higher accuracy and faster convergence than classical methods such as CPD and TPS-RPM, particularly for small and smooth deformations.
\end{abstract}

\begin{keywords}
Structured Taylor expansion, Analytic approximation, Vector-valued functions, Point Set registration, Non-Rigid registration, ICP Framework
\end{keywords}

\section{Introduction}

Point set registration is a fundamental problem in computer vision and geometric data analysis, with wide-ranging applications in 3D reconstruction~\cite{2022NIMBLE}, sensor alignment~\cite{2022Survey}, motion tracking~\cite{IOSIFESCU199713}, and robotic perception~\cite{pomerleau2015review}. While rigid and affine methods have been extensively studied~\cite{ICP1992,RICP,yang2015go}, registering point sets under smooth non-rigid deformations remains a significant challenge due to the need for expressive mappings in high-dimensional spaces. Such techniques are crucial in modern systems ranging from visual metrology~\cite{2022NIMBLE} to augmented reality~\cite{1634322} and autonomous navigation~\cite{geiger2012we}.

Most existing non-rigid registration methods adopt non-parametric models, such as radial basis functions~\cite{TPS–RPM}, Gaussian mixtures~\cite{CPD}, or statistical estimation frameworks~\cite{jian2010robust,BCPD}. While flexible, these models typically require large numbers of basis functions or parameters to express complex deformations, resulting in high computational complexity and increased risk of overfitting~\cite{ma2017non,ogundare2018understanding,tam2012registration}. To ensure numerical stability, additional regularization is often introduced, which may obscure geometric structure and hinder interpretability.

From a mathematical perspective, multivariate Taylor expansions offer a classical framework for approximating smooth functions via structured polynomial bases. In particular, Vetter's work~\cite{vetter1973matrix} provides a general formulation for matrix- and vector-valued Taylor expansions, enabling analytic mappings to be expressed with controllable complexity and explicit algebraic structure. Despite its theoretical appeal, this formulation remains underexplored in computer vision and registration tasks.

In this paper, we present the first application of multivariate Taylor expansion of vector-valued functions to non-rigid point set registration. By constructing a low-dimensional admissible function space spanned by truncated Taylor basis terms, we provide a compact and expressive framework capable of representing smooth deformations without relying on kernel-based models or dense parameter sets.

To illustrate the expressive capacity of low-order analytic expansions, we show in Fig.~\ref{fig:fistdistort} a series of 2D deformations generated from a second-order Taylor model with only three parameters. Despite its simplicity, the model can produce a rich variety of nonlinear and smooth transformations.

\begin{figure}[htbp]
  \centering

  \begin{minipage}{0.95\linewidth}
    \centering
    {\scriptsize
    Analytic mapping: 
    $
    \begin{bmatrix}
    x_{1} \\ x_{2}
    \end{bmatrix} =
    \begin{bmatrix}
    0 \\ 0
    \end{bmatrix} +
    \begin{bmatrix}
    1 & a_{1} \\ 0 & 1
    \end{bmatrix}
    \begin{bmatrix}
    y_{1} \\ y_{2}
    \end{bmatrix} +
    \frac{1}{2!}
    \begin{bmatrix}
    a_{2} & 0 & 0 \\
    0 & a_{3} & 0
    \end{bmatrix}
    \begin{bmatrix}
    y_{1}^{2} \\ 2y_{1}y_{2} \\ y_{2}^{2}
    \end{bmatrix}
    $
    }
    \\
    \vspace{1mm}
    {\scriptsize Effect of parameter $(a_{1}, a_{2}, a_{3})$ perturbation in 2D second-order Taylor expansion.}
  \end{minipage}

  \vspace{2mm}

  \begin{minipage}{0.23\linewidth}
    \centering
    \includegraphics[width=0.85\linewidth]{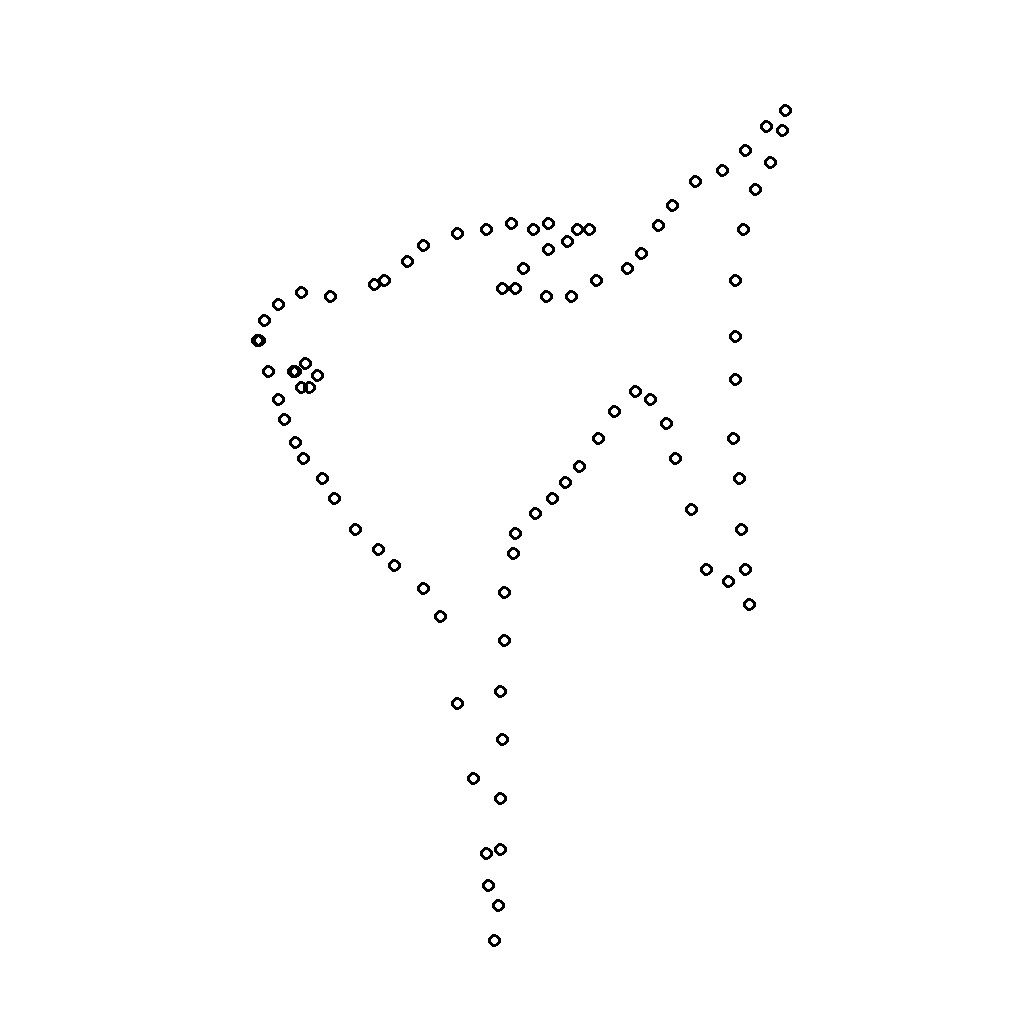}\\
    {\scriptsize (0, 0, 0)}
  \end{minipage}
  \begin{minipage}{0.23\linewidth}
    \centering
    \includegraphics[width=0.85\linewidth]{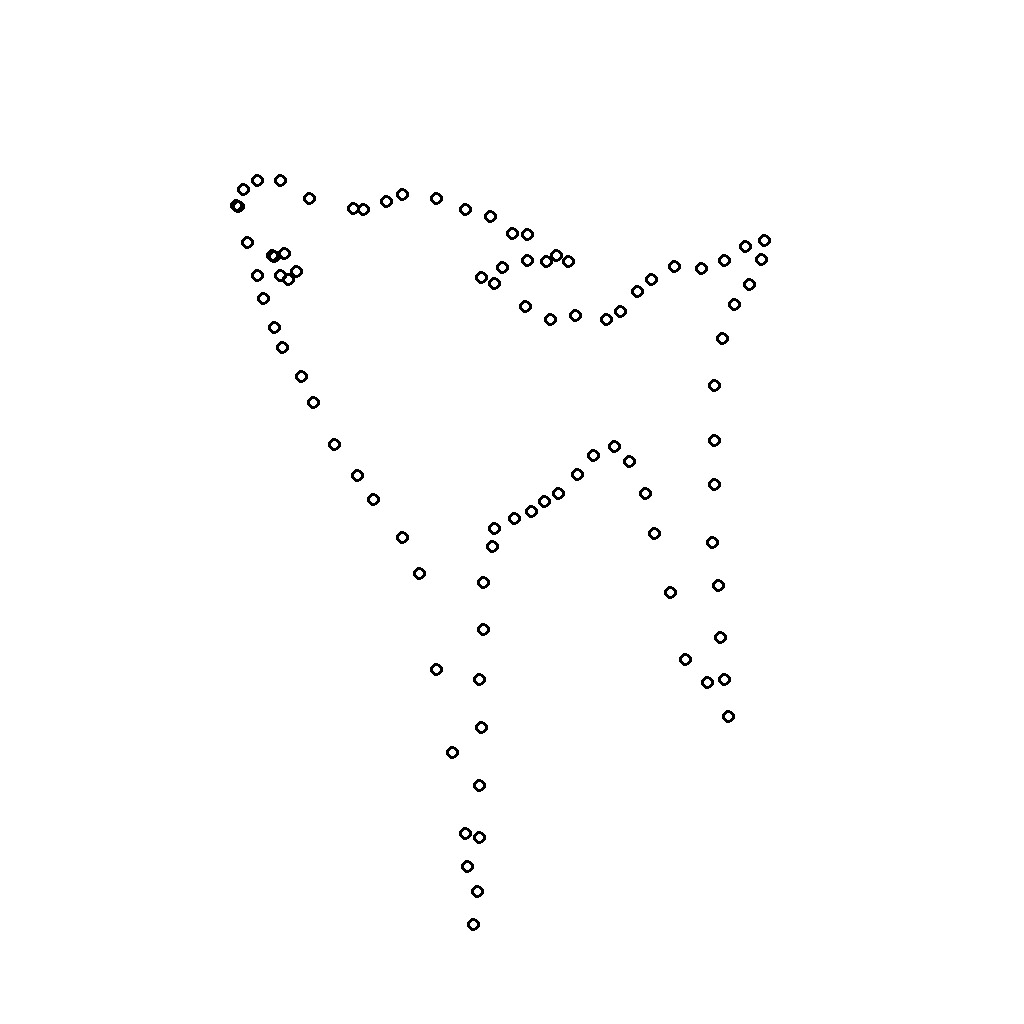}\\
    {\scriptsize (0.5, 0, 0)}
  \end{minipage}
  \begin{minipage}{0.23\linewidth}
    \centering
    \includegraphics[width=0.85\linewidth]{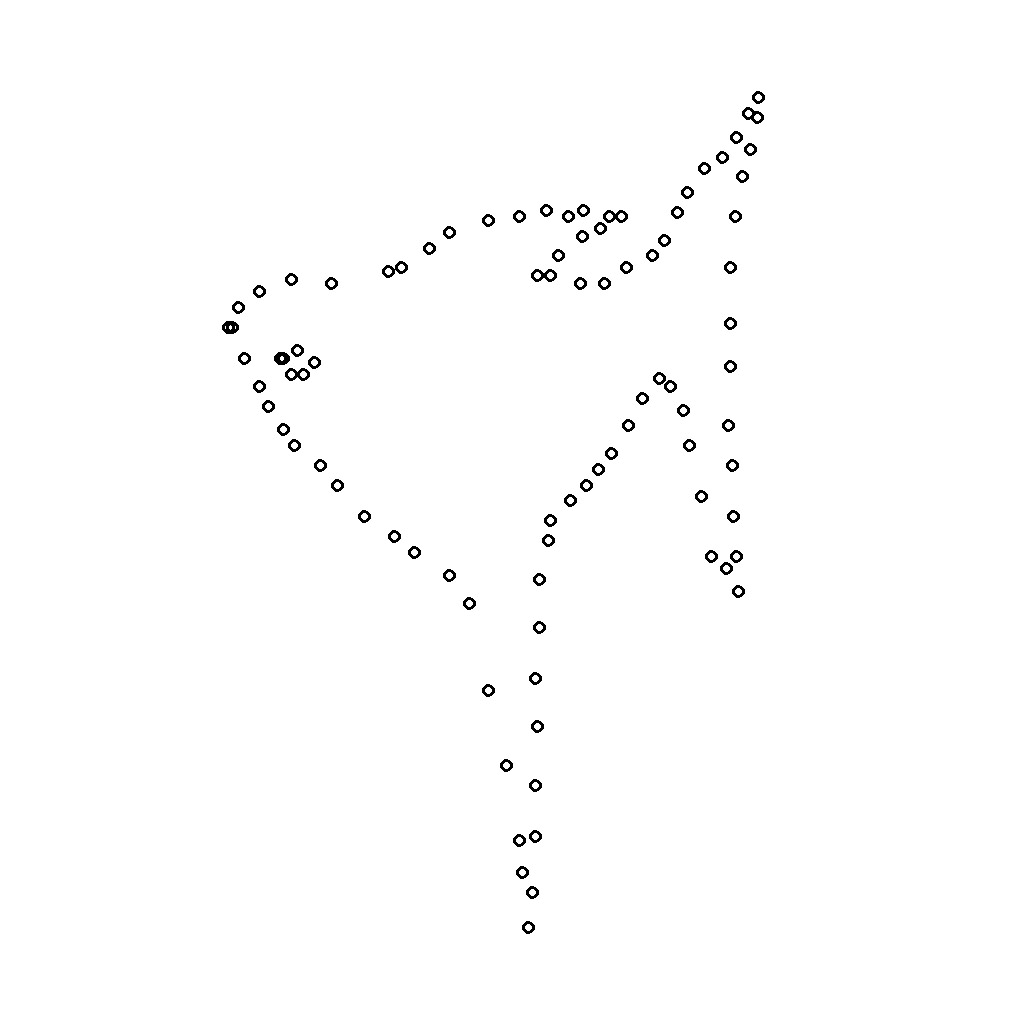}\\
    {\scriptsize (0, 0.9, 0)}
  \end{minipage}
  \begin{minipage}{0.23\linewidth}
    \centering
    \includegraphics[width=0.85\linewidth]{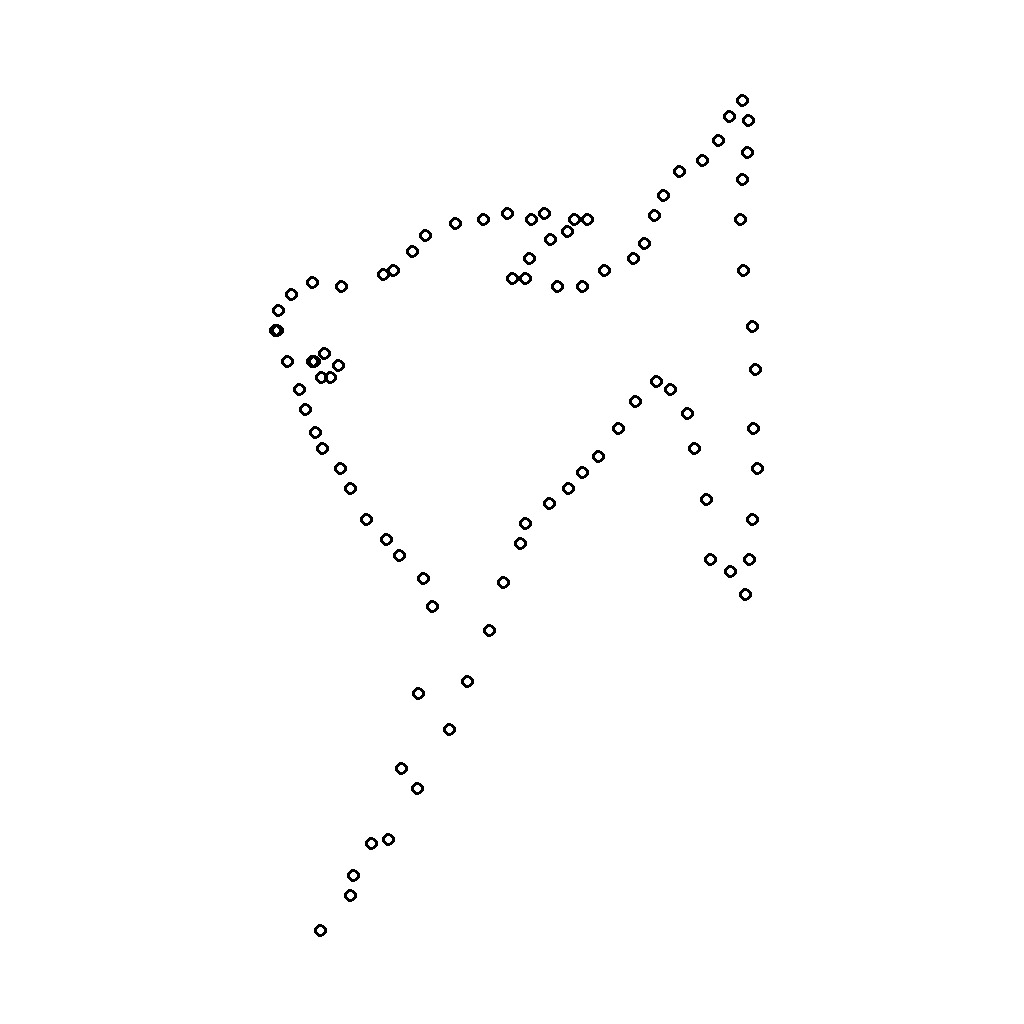}\\
    {\scriptsize (0, 0, 0.7)}
  \end{minipage}

  \vspace{1mm}

  \begin{minipage}{0.23\linewidth}
    \centering
    \includegraphics[width=0.85\linewidth]{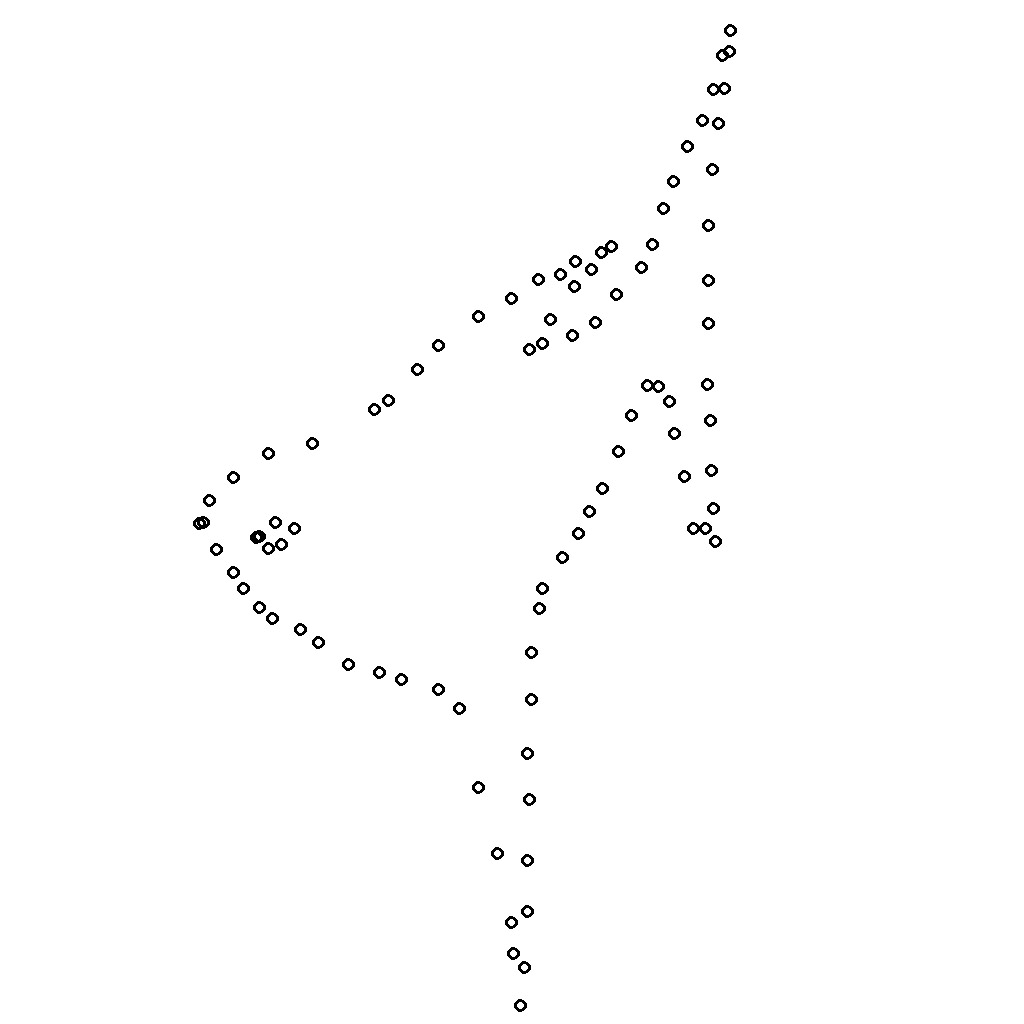}\\
    {\scriptsize (-0.5, -1.2, 0)}
  \end{minipage}
  \begin{minipage}{0.23\linewidth}
    \centering
    \includegraphics[width=0.85\linewidth]{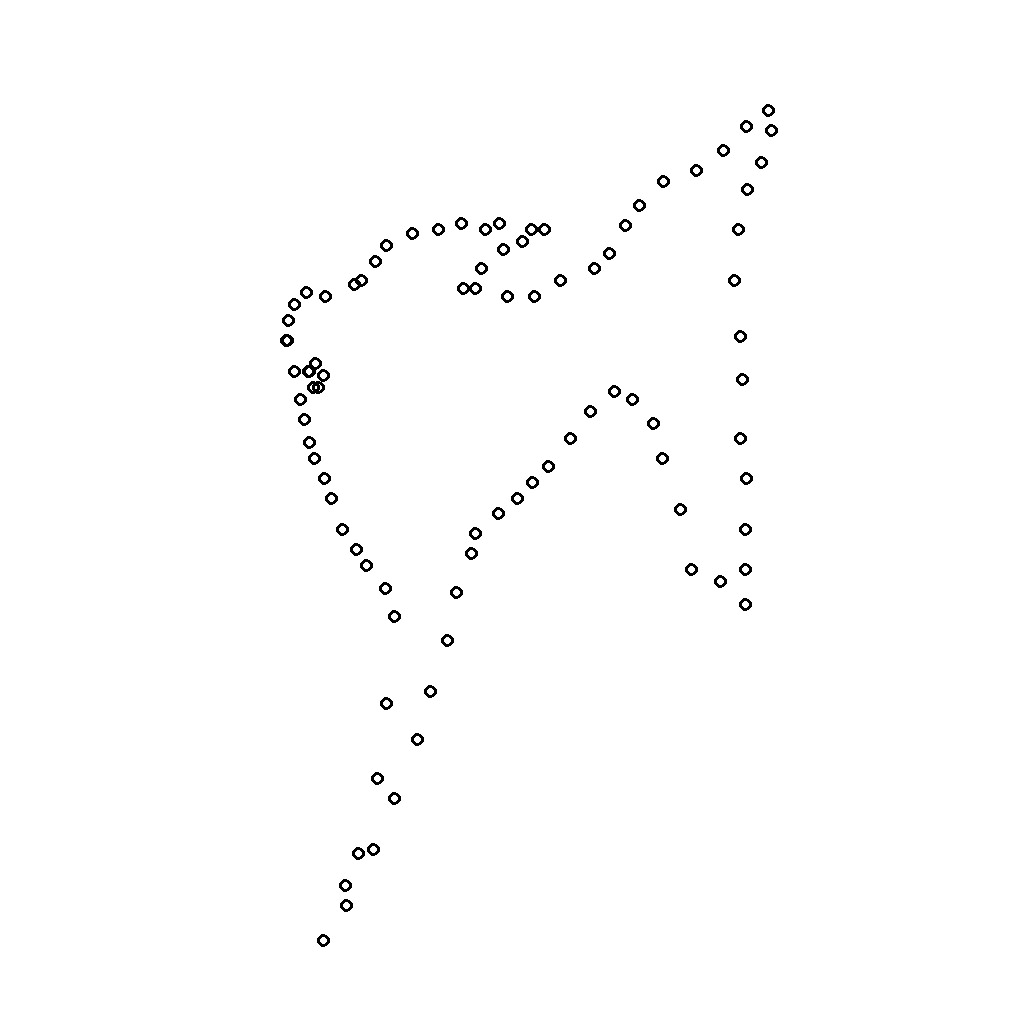}\\
    {\scriptsize (0, 0.9, -0.5)}
  \end{minipage}
  \begin{minipage}{0.23\linewidth}
    \centering
    \includegraphics[width=0.85\linewidth]{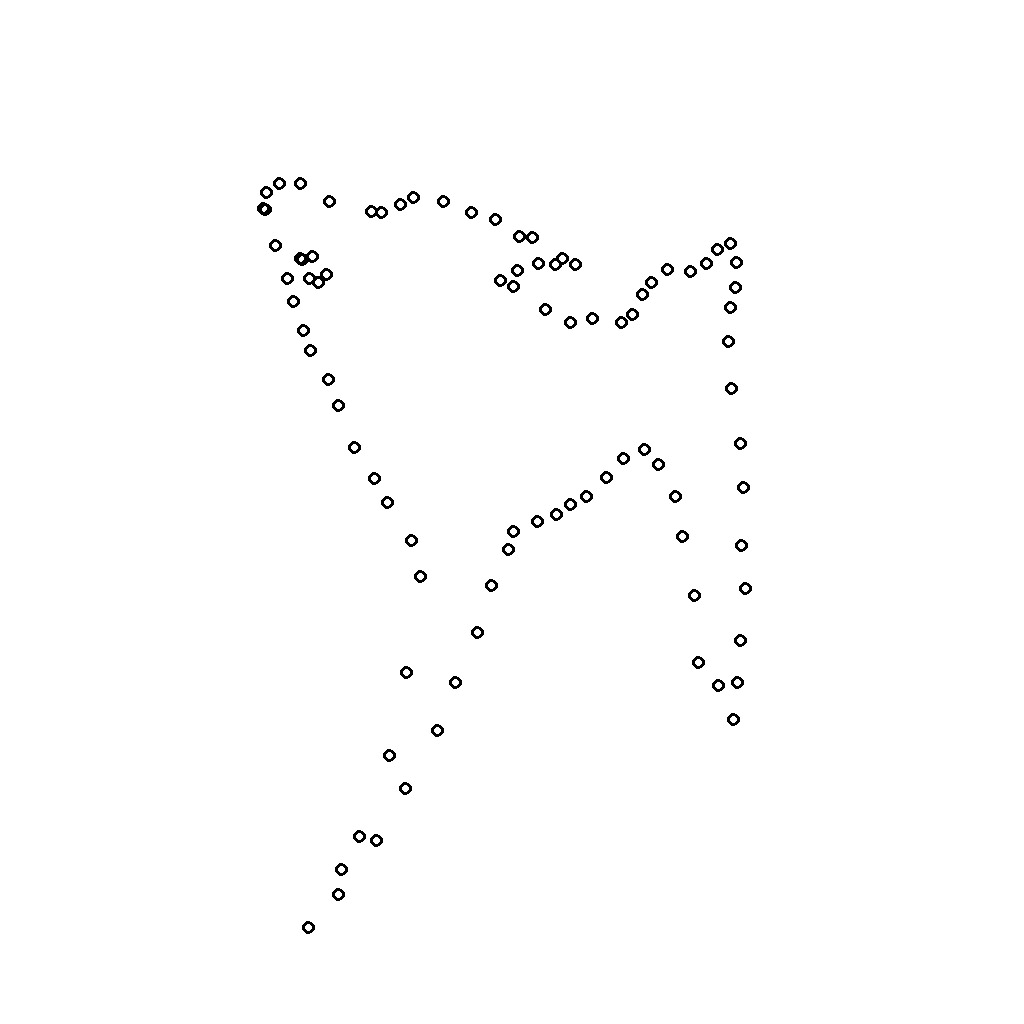}\\
    {\scriptsize (0.5, 0, -0.7)}
  \end{minipage}
  \begin{minipage}{0.23\linewidth}
    \centering
    \includegraphics[width=0.85\linewidth]{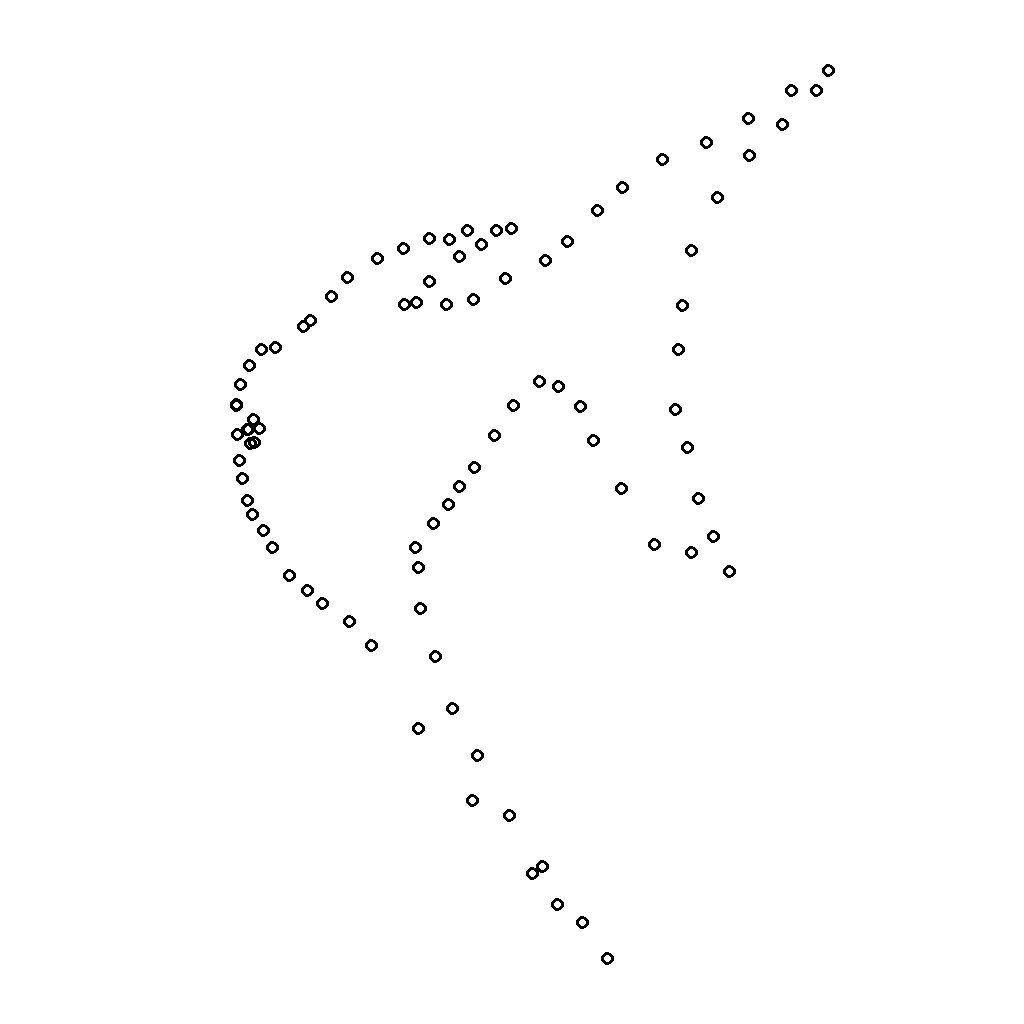}\\
    {\scriptsize (-0.2, 1.2, 0.8)}
  \end{minipage}

  \caption{\textbf{Illustrative effect of second-order analytic deformation.} Even with only three tunable coefficients $(a_1, a_2, a_3)$, the second-order Taylor expansion model produces a diverse range of smooth, nonlinear transformations.}
  \label{fig:fistdistort}
\end{figure}

Based on this representation, we design a progressive fitting algorithm that lifts the identity map to higher-order analytic forms via structured quasi-Newton optimization. This method is integrated into a standard ICP loop---with (by default) nearest-neighbor correspondences---to construct \emph{Analytic-ICP}, a fast and robust registration algorithm that supports hierarchical refinement of rigid, affine, and nonlinear mappings.

\vspace{1mm}
\noindent
\textbf{Our main contributions are summarized as follows:}
\begin{itemize}
    \item We present the first structured approximation framework based on multivariate Taylor expansions of vector-valued functions in computer vision. This formulation enables interpretable and hierarchical modeling of smooth deformations, and provides a principled foundation for registration and beyond.

    \item We formalize a matrix–vector representation of analytic mappings using \textit{generalized derivative matrices} and \textit{generalized monomial vectors}, and prove a \textbf{Structured Approximation Theorem} establishing the approximation capability of truncated Taylor models over smooth mappings.

    \item We construct a low-dimensional function space spanned by truncated Taylor basis terms, unifying rigid, affine, and nonlinear transformations within a single closed-form expression.

    \item We design a progressive quasi-Newton fitting algorithm that incrementally lifts the identity map into higher-order approximants, offering controllable model complexity and efficient convergence.

    \item We integrate the proposed framework into an ICP-style optimization routine. The resulting \textit{Analytic-ICP} algorithm achieves quasi-linear complexity and outperforms classical non-rigid methods (e.g., CPD, TPS-RPM) in both accuracy and efficiency on 2D and 3D registration tasks.
\end{itemize}

To facilitate reproducibility, source code and data are available; see Section~\ref{sec:code} for details.

\section{Related Work}
We propose a novel analytic mapping model that can be integrated into many unsupervised registration frameworks to improve their performance. In this section, we briefly review the development of rigid, affine, and non-rigid registration techniques that are relevant to our method.

\subsection{Rigid and Affine Registration}

The Iterative Closest Point (ICP) algorithm~\cite{TPS–RPM} is one of the most fundamental and widely used geometric registration methods. It alternates between estimating correspondences and fitting orthogonal transformations. Due to its simplicity and efficiency, ICP has become the de facto standard for rigid registration in industrial applications~\cite{maiseli2017recent,vizzo2023kiss}, and has inspired numerous variants~\cite{RICP,zhang2021fast,segal2009generalized,low2004linear,lv2023kss,yang2015go,trimmedICP}.

Affine registration~\cite{affine-icp,ho2007new,jenkinson2001global,du2008affine} extends the rigid model by allowing linear deformations. It preserves the simplicity of parameterization and does not require additional constraints, making it highly practical. The number of parameters involved is only the square of that in rigid registration.

\subsection{Non-Rigid Registration and Statistical Frameworks}

The Non-Rigid ICP (N-ICP)~\cite{NICP2007} was among the first to extend the ICP framework to non-rigid settings by modeling affine transformations over smooth manifolds. However, its computational complexity is significantly higher than rigid ICP, which limits its practical use.

Differential geometry-based methods~\cite{rusinkiewicz2019symmetric,anderson2022delaunay,fahandezh2020robust,zampogiannis2019topology,yao2023fast,2010Non} have made considerable progress by leveraging intrinsic surface properties and geometric feature spaces. While these methods achieve high robustness and accuracy, they typically rely on 3D reconstruction and are challenging to implement in real-world applications.

In practice, registration algorithms based on one-to-one correspondences, such as ICP, are highly efficient. However, due to the non-convexity of the mapping support and the lack of closedness in the mapping space~\cite{villani2009optimal}, such methods struggle to achieve optimal solutions in non-rigid scenarios.

To address these issues, a large class of softassign-based methods has been developed. Notably, the RPM~\cite{TPS–RPM} and EM~\cite{1998Graph} frameworks relax the hard correspondence constraint and introduce probabilistic models to improve convergence. By using Gaussian Mixture Models (GMM), these approaches reframe point set registration as a statistical estimation problem~\cite{chui2000feature}.

The TPS-RPM algorithm~\cite{TPS–RPM} combines the robustness of the TPS model~\cite{bookstein1989principal} with the annealing strategy of RPM. It is particularly effective for handling noise and outliers in 2D and 3D data, but does not generalize well to higher-dimensional or more structured deformation problems.

The Coherent Point Drift (CPD) algorithm~\cite{CPD2006,CPD} builds upon Motion Coherence Theory and models the moving point set as a GMM. It uses EM to estimate soft correspondences but lacks a unified or interpretable form for the transport mapping, which limits its analytical transparency and generalization ability.

Further GMM-based approaches~\cite{jian2005robust,jian2010robust} attempt to unify rigid and non-rigid registration by minimizing the $L^2$ distance between two GMMs. Bayesian extensions such as BCPD~\cite{BCPD}, based on Variational Bayesian inference~\cite{gelman1995bayesian}, improve convergence and interpretability but handle rigid and non-rigid cases redundantly, reducing overall efficiency. Accelerated variants like BCPD++~\cite{hirose2020acceleration} use downsampling for speed, but this compromises geometric fidelity and algorithmic robustness.

Other correspondence-free methods~\cite{tsin2004correlation,zhou2016fast} have also been proposed to relax constraints even further, but often at the cost of increasing algorithmic complexity.

\subsection{Analytic Approximation Models}

In non-rigid registration, statistical frameworks dominate due to their modeling flexibility. However, their reliance on dense basis functions, high parameter dimensionality, and computational overhead limits their practicality in time-sensitive or resource-constrained settings. Despite the efficiency advantages of geometry-based methods, few successful non-rigid frameworks follow this path.

In this work, we propose a closed-form, low-dimensional analytic mapping model that approximates smooth deformations via truncated Taylor expansions. Unlike existing basis function methods, our model requires no reproducing kernels or large parameter spaces, and achieves strong approximation capacity with minimal computational cost. Our approach combines the strengths of rigid efficiency and non-rigid flexibility, offering a new perspective for geometric registration.

\section{Overview of Structured Analytic Mappings}
\label{sec:overview}

We first give a high-level view of our structured analytic mapping framework; all mathematical details are deferred to Sec.~\ref{sec:main} and subsequent subsections. Each outer iteration alternates between (i) a \emph{correspondence update} $C^{(t)}$ (nearest neighbor as in ICP by default; optionally soft assignments in the style of RPM/CPD) and (ii) \emph{staged fitting} of the map. After the rigid and affine stages, the pipeline takes a \textbf{mutually exclusive branch}: either a restricted projective refinement (when imaging/planarity effects dominate), or a structured Taylor lifting for smooth non-rigid deformation. Coefficients are estimated via least-squares / quasi-Newton updates (described later). The loop stops when a standard criterion is met (e.g., $\Delta\mathrm{RMSE}<\varepsilon$, parameter-update norm $\|\Delta\theta\|<\eta$, or a maximal order/iteration cap); otherwise, on the Taylor branch we increase the order $m_{t+1}\!\gets m_t{+}1$ and repeat. The overall workflow is summarized in Fig.~\ref{fig:analytic_mapping_flow}.

\begin{figure*}[t]
  \centering
  \includegraphics[width=\textwidth]{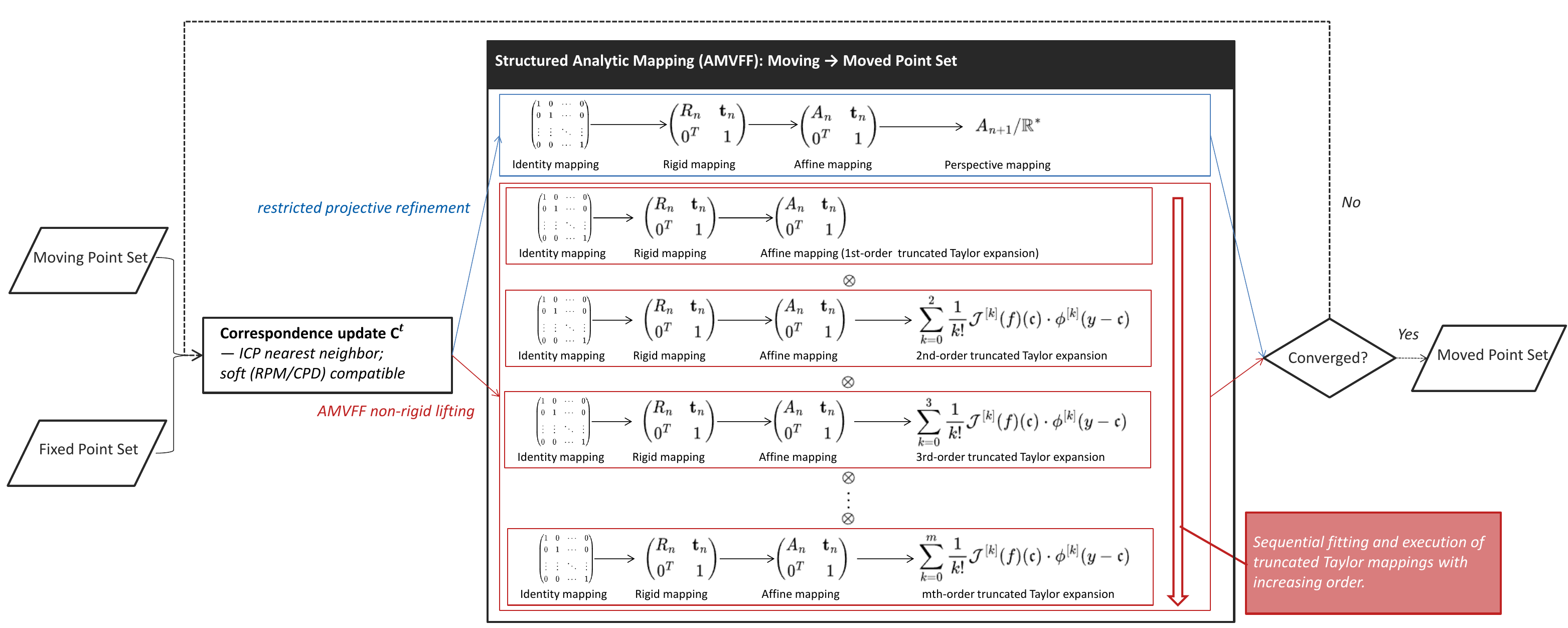}
  \vspace{-2mm}
  \caption{\textbf{Pipeline of the structured analytic mapping approach.}
At iteration $t$, we form correspondences $C^{(t)}$ and fit in stages:
rigid $\mathrm{SE}(n,\mathbb{R}) \rightarrow$ affine (residual) $\mathrm{AGL}(n,\mathbb{R})/\mathrm{SE}(n,\mathbb{R})$,
then take a \textbf{mutually exclusive branch}:
restricted projective $\mathrm{PGL}(n{+}1,\mathbb{R})/\mathrm{AGL}(n,\mathbb{R})$ for planar/homography–dominant cases,
\emph{or} structured Taylor lifting of order $m_t$ for smooth non-rigid deformation
(if the stop test fails, increase $m_{t+1}\!\leftarrow m_t{+}1$).
The diamond marks the stopping rule ($\Delta\mathrm{RMSE}$ / $\|\Delta\theta\|$ / order / iteration).
A negative test returns to correspondence (outer ICP-style loop); a positive test outputs the moved point set and the final mapping~$\tau$.
When the correspondence step uses nearest-neighbor updates (ICP-style), we refer to this instantiation as \emph{Analytic-ICP};
the coefficient fitting routine is detailed later (\emph{AMVFF}).}

  \label{fig:analytic_mapping_flow}
  \vspace{-2mm}
\end{figure*}

\section{Analytic Approximation of Smooth Mappings via Taylor Expansion}
\label{sec:main}
\subsection{Motivation: Analytic Approximation of Smooth Deformations}
Many naturally occurring deformation processes—such as those found in fluid motion, elastic distortion, or optical systems—are governed by partial differential equations with analytic solutions under mild regularity conditions~\cite{arnold1989mathematical,john1971partial}. This observation motivates the use of analytic functions as structural priors for modeling smooth transformations in vision and geometry-related tasks.

From a theoretical standpoint, the real analytic functions form a strict yet expressive subset of smooth functions. Classical approximation theory asserts that analytic functions are dense in the space of smooth functions on compact domains, either in the topology of uniform convergence of derivatives or in suitable Sobolev norms~\cite{krantz2002primer,hormander1983analysis}. That is, any smooth mapping can be approximated arbitrarily well by an analytic mapping, provided the domain is bounded.

These considerations establish a solid foundation for analytic approximation in registration tasks. Instead of relying on data-driven basis expansions (e.g., radial basis functions or kernels), we propose to directly construct analytic mappings via truncated Taylor expansions of multivariate vector-valued functions. This approach yields a low-dimensional yet expressive function space with a clear geometric structure and tractable optimization properties.

\subsection{Analytic Mapping Model via Taylor Approximation}\label{sec:deduction}

To formulate our analytic deformation model, we construct a low-dimensional function space based on multivariate Taylor expansions of vector-valued functions. This section introduces the notation used throughout the formulation and outlines the algebraic structure of the proposed mapping.

\begin{itemize}
  \item  $\left \| \cdot  \right \|$ --- the Euclidean norm.
  \item  $x=\left (x_{1}, \dots, x_{n}  \right )$, $y=\left ( y_{1}, \dots, y_{n} \right )$, $\tau y$ --- the $n$-dimensional fixed point, the $n$-dimensional moving point, the $n$-dimensional moved point.
  \item  $X$, $Y$ --- the fixed point set, the moving point set.
  \item $\tau_{\mathrm{rig}},\tau_{\mathrm{aff}},\tau_{\mathrm{proj}}$ --- rigid, affine, and projective transforms.
  \item $\widehat{Y}_{\ast}=\tau_{\ast}(Y)$ --- transformed point set under stage $\ast\in\{\mathrm{rig},\mathrm{aff},\mathrm{proj}\}$.
\item $k$, $\tau^{(k)}$, $\widehat{Y}^{(k)}$ --- truncation order, $k$-th order model, $\widehat{Y}^{(k)}=\tau^{(k)}(Y)$.
  \item $SO\left ( n,\mathbb{R} \right )$, $SE\left ( n,\mathbb{R} \right )$, $AGL\left ( n,\mathbb{R} \right )$, $PGL\left ( n,\mathbb{R} \right )$ --- the special orthogonal group, the special Euclidean group, the affine general linear group, the projective linear group, where $n$ denotes dimension, $\mathbb{R}$ denotes real number field.
  \item  $\arg\min_{\tau} \left \| \cdot \right \| ^{2} $ --- a functional minimization problem where $\tau$ represents the mapping is to find the value of $\tau$ that minimizes the square of the Euclidean norm of an expression.
  \item  $x_{m,i}$, $y_{m,i}^{n} $ --- the $i$-th component of the $m$-th point in the fixed point set, the $n$-th power of the $i$-th component of the $m$-th point in the moving point set.
  \item $R_m$ --- the Taylor remainder after truncation at order $m$.
  \item  $\Re$, $\mathbf t$, $R$, $\Lambda$ --- Rotation matrix, translation vector, R component of QR decomposition, orthogonal complement of affine transformation in perspective transformation.
  \item  $a_{i}^{o}$ --- the initial value of the $i$-th parameter.
  \item  $L$, $\hat{v}$, $\bar{\nu}$, $\delta a$ --- the observed vector, the real-time residual vector, the threshold of residual vector, adjustment value of parameter $a$.  
  \item  $\inf_{x}\left \{ M \left ( x \right )   \right \}$ --- The value of the variable $x$ when taking the infimum of $M$.
  \item  $N_{r}$, $p$ --- the number of corresponding point pairs, the number of parameters to be fitted.
  \item  $\binom{a}{b}$  ---  the binomial coefficient (i.e., the number of $b$-combinations from a set of $a$ elements), representing the count of basis monomials at order $b$.
\end{itemize}

We begin by generalizing the multivariate Taylor expansion to describe smooth vector-valued function spaces, which serve as the foundation for modeling deformations in point set registration. Our goal is to construct low-complexity analytic mappings that approximate smooth distortions between point sets.

Specifically, we consider deformations represented by diffeomorphisms of the form:
\begin{equation}\label{eq:deform}
    X_e = F(X_b),
\end{equation}
where \( X_b \) denotes the reference (or template) point set and \( X_e \) is the deformed point set after applying a smooth transformation \( F \).

In this paper, we focus exclusively on smooth (i.e., differentiable and invertible) deformations. Examples include those induced by optical distortions, elastic transformations, or incompressible flows. Discontinuous or abrupt transformations, such as occlusions or topological mutations, are considered out of scope and treated as structural noise or defects.

Our proposed mapping model is grounded in the multivariate Taylor expansion of vector-valued functions, and can be regarded as an instance of the classical Taylor theory of matrix-valued functions~\cite{vetter1973matrix}. To the best of our knowledge, this is the first application of such a formulation in the fields of computer vision and machine learning.

Within the domain of registration, the model provides a unified framework that captures rigid, affine, perspective, and non-rigid deformations using a single elementary expression. Unlike prior models that rely on kernel-based representations or heuristic constructions, our approach derives directly from foundational mathematics—without introducing structural redundancy.

As a strict representation of analytic functions, the Taylor expansion offers rich mathematical properties, including global approximation guarantees and natural extensibility to the complex domain. Compared with polynomial kernel methods~\cite{saitoh2016theory}, which primarily serve as similarity measures in reproducing kernel Hilbert spaces, our formulation defines an explicit global mapping with interpretable geometric structure. This distinguishes our model from existing techniques in both theory and practice.

In point set registration, both the source and target point sets reside in a Euclidean coordinate space, which forms a free module over \( \mathbb{R} \). This enables us to naturally model the registration mapping as a multivariate vector-valued function \( \tau : \mathbb{R}^d \to \mathbb{R}^d \), regardless of whether the basis is orthogonal.

For implementation convenience, we assume the coordinate axes to be orthogonal and origin-centered, but this is not a theoretical requirement—only a practical one. Our approximation model remains valid for any coordinate frame defined by a linearly independent basis.

We consider the registration mapping as a multivariate vector-valued function defined from the moving space \( W \subset \mathbb{R}^2 \) into a target coordinate space structured as \( H \times V \), where \( H, V \subset \mathbb{R} \) represent the horizontal and vertical coordinate spaces of the fixed point set, respectively. In this formulation, \( H \times V \subset \mathbb{R}^2 \) serves as a representation of the 2D point space, and the mapping \( \tau : W \to H \times V \) captures the geometric transformation between the two point spaces.

This formulation yields a bijective decomposition of the vector-valued registration mapping into two scalar-valued functions, each responsible for one spatial dimension:
\begin{eqnarray}\label{eq:W2XY}
\begin{aligned}
W \to H \times V \Rightarrow \left\{
\begin{matrix}
 W \to H \\
 W \to V
\end{matrix}
\right.
\end{aligned}.
\end{eqnarray}

To formalize the registration objective, we consider the problem of estimating the transformation \( \tau \) that minimizes the squared distance between the fixed and transformed moving points. For a pair of corresponding 2D points \( (x_1, x_2) \in H \times V \) and \( (y_1, y_2) \in W \), this can be written as:
\begin{eqnarray}\label{eq:xy2uv}
\begin{aligned}
\arg\min_{\tau}\left \| \left (x_{1} ,x_{2}  \right )- \tau \left ( y_{1} ,y_{2} \right )  \right \| ^{2}.
\end{aligned}
\end{eqnarray}

By combining the bijective decomposition in \eqref{eq:W2XY} with the objective above, we express the vector-valued mapping \( \tau \) in terms of its scalar components: \( \tau = (f_1, f_2) \). This reformulates the optimization problem as two independent scalar regression tasks:
\begin{eqnarray}\label{eq:x2uy2v}
\begin{aligned}
\left\{
\begin{matrix}
\arg\min_{f_{1}}\left |x_{1}- f_{1} \left ( y_{1} ,y_{2}  \right )  \right |^{2},     \\
\arg\min_{f_{2}} \left |x_{2}- f_{2} \left ( y_{1} ,y_{2}  \right )  \right | ^{2}.  
\end{matrix}
\right.
\end{aligned}
\end{eqnarray}

In the following section, we introduce a family of smooth approximators for the functions \( f_1 \) and \( f_2 \), based on multivariate Taylor expansions, which enable compact and flexible modeling of analytic mappings.

\subsection{Analytic Function Space Based on Multivariate Taylor Expansion}

According to Taylor’s theorem for functions of two independent variables, we expand \( f_1 \) and \( f_2 \) individually around a reference point \( \mathfrak{c} \). These scalar-valued expansions are then reorganized into a unified matrix-vector form to describe the vector-valued deformation mapping \( \tau(y) \). The resulting analytic mapping takes the following form:

\begin{eqnarray}\label{eq:analytic_mapping_2d}
\begin{aligned}
\tau(y) &= 
\begin{bmatrix}
 f_{1}(\mathfrak{c}) \\ f_{2}(\mathfrak{c})
\end{bmatrix}
+
\begin{bmatrix}
\frac{\partial f_{1} }{\partial y_{1}}(\mathfrak{c}) & \frac{\partial f_{1} }{\partial y_{2}}(\mathfrak{c})\\ 
\frac{\partial f_{2} }{\partial y_{1}}(\mathfrak{c}) & \frac{\partial f_{2} }{\partial y_{2}}(\mathfrak{c})
\end{bmatrix}
\begin{bmatrix}
y_{1} - \mathfrak{c}_{1} \\
y_{2} - \mathfrak{c}_{2}
\end{bmatrix} \\
&+
\frac{1}{2}
\begin{bmatrix}
\frac{\partial^{2} f_{1} }{\partial y_{1}^{2}}(\mathfrak{c}) & \frac{\partial^{2} f_{1} }{\partial y_{1} \partial y_{2}}(\mathfrak{c}) & \frac{\partial^{2} f_{1} }{\partial y_{2}^{2}}(\mathfrak{c}) \\
\frac{\partial^{2} f_{2} }{\partial y_{1}^{2}}(\mathfrak{c}) & \frac{\partial^{2} f_{2} }{\partial y_{1} \partial y_{2}}(\mathfrak{c}) & \frac{\partial^{2} f_{2} }{\partial y_{2}^{2}}(\mathfrak{c})
\end{bmatrix}
\begin{bmatrix}
(y_{1} - \mathfrak{c}_{1})^{2} \\
2(y_{1} - \mathfrak{c}_{1})(y_{2} - \mathfrak{c}_{2}) \\
(y_{2} - \mathfrak{c}_{2})^{2}
\end{bmatrix} \\
&+ \dots + \frac{1}{n!}
\begin{bmatrix}
  \frac{\partial^{n} f_{1} }{\partial y_{1}^{n}}\left ( \mathfrak{c}\right ) &\dots   & \frac{\partial^{n} f_{1} }{\partial y_{1}^{n-k}\partial y_{2}^{k}}\left ( \mathfrak{c}\right ) &\dots   & \frac{\partial^{n} f_{1} }{\partial y_{2}^{n}}\left ( \mathfrak{c}\right )\\
  \frac{\partial^{n} f_{2} }{\partial y_{1}^{n}}\left ( \mathfrak{c}\right ) &\dots   & \frac{\partial^{n} f_{2} }{\partial y_{1}^{n-k}\partial y_{2}^{k}}\left ( \mathfrak{c}\right ) &\dots   & \frac{\partial^{n} f_{2} }{\partial y_{2}^{n}}\left ( \mathfrak{c}\right )
\end{bmatrix}
\begin{bmatrix}
(y_{1} - \mathfrak{c}_{1})^{n} \\
\vdots \\
\binom{n}{k}(y_{1} - \mathfrak{c}_{1})^{n-k}(y_{2} - \mathfrak{c}_{2})^{k} \\
\vdots \\
(y_{2} - \mathfrak{c}_{2})^{n}
\end{bmatrix}
+ R_n,
\end{aligned}
\end{eqnarray}

where \( R_n \) denotes the Taylor remainder term. This representation allows us to express analytic mappings as compact, parameterized structures that are both mathematically rigorous and computationally efficient for use in registration tasks.

We now express the analytic mapping $\tau(y)$ as an observation equation in the form of a residual:

\begin{eqnarray}\label{eq:Observation_equation_2d}
\begin{aligned}
\begin{bmatrix} L_{1}\\L_{2}\end{bmatrix} = \begin{bmatrix} x_{1}\\x_{2}\end{bmatrix} - \tau(y),
\end{aligned}
\end{eqnarray}

where \( L_1, L_2 \) denote the residuals between observed target points and the predicted values produced by the truncated Taylor model.

To estimate the unknown parameters in the Taylor expansion (i.e., the coefficients of partial derivatives), we formulate the following least-squares optimization problem:

\begin{eqnarray}\label{eq:Matrix_form}
\begin{aligned}
\arg\min_{f_1, f_2} \left\| \begin{bmatrix} L_{1}\\L_{2} \end{bmatrix} \right\|^2,
\end{aligned}
\end{eqnarray}

where \( f_1, f_2 \) denote the truncated Taylor approximations of the coordinate functions. The notation \( f(\mathfrak{c}) \) abbreviates \( f(\mathfrak{c}_1, \mathfrak{c}_2) \), and \( R_n \) denotes the remainder of the Taylor expansion.

Equation~\eqref{eq:analytic_mapping_2d}, which expresses a truncated Taylor expansion for a 2D vector-valued mapping, can be naturally generalized to arbitrary dimensions. For a smooth function \( \mathrm{f} : \mathbb{R}^n \to \mathbb{R}^n \), we expand each output coordinate around a reference point \( \mathfrak{c} \in \mathbb{R}^n \) using the multivariate Taylor theorem. The complete vector-valued expansion is organized into matrix form as follows:

\begin{eqnarray}\label{eq:analytic_mapping_anydegree}
\begin{aligned}
 & \mathrm {f}\left ( y \right ) =
\begin{bmatrix}
 f_{1}(\mathfrak{c}) \\\vdots\\ f_{n}(\mathfrak{c})\end{bmatrix}+\begin{bmatrix}\frac{\partial f_{1} }{\partial y_{1}}\left ( \mathfrak{c}\right ) &\cdots& \frac{\partial f_{1} }{\partial y_{n}}\left ( \mathfrak{c}\right )
\\\vdots& \ddots &\vdots
\\ \frac{\partial f_{n} }{\partial y_{1}}\left ( \mathfrak{c}\right )& \cdots &\frac{\partial f_{n} }{\partial y_{n}}\left ( \mathfrak{c}\right )
\end{bmatrix}\begin{bmatrix}y_{1}-\mathfrak{c}_{1}\\\vdots \\y_{n}-\mathfrak{c}_{n}\end{bmatrix}
\\&+\frac{1}{2} \begin{bmatrix}
 \frac{\partial^{2} f_{1} }{\partial y_{1}^{2}}\left ( \mathfrak{c}\right )  &  \frac{\partial^{2} f_{1} }{\partial y_{1}\partial y_{2}}\left ( \mathfrak{c}\right ) & \frac{\partial^{2} f_{1} }{\partial y_{1}\partial y_{3}}\left ( \mathfrak{c}\right )& \cdots &\frac{\partial^{2} f_{1} }{\partial y_{n}^{2}}\left ( \mathfrak{c}\right )
\\\vdots &  & \ddots &  &\vdots
\\ \frac{\partial^{2} f_{n} }{\partial y_{1}^{2}}\left ( \mathfrak{c}\right )  &  \frac{\partial^{2} f_{n} }{\partial y_{1}\partial y_{2}}\left ( \mathfrak{c}\right ) & \frac{\partial^{2} f_{n} }{\partial y_{1}\partial y_{3}}\left ( \mathfrak{c}\right )& \cdots &\frac{\partial^{2} f_{n} }{\partial y_{n}^{2}}\left ( \mathfrak{c}\right )\end{bmatrix}
\begin{bmatrix}
 \left ( y_{1}-\mathfrak{c}_{1} \right )^{2} \\2\left ( y_{1}-\mathfrak{c}_{1} \right )\left ( y_{2}-\mathfrak{c}_{2} \right )\\2\left ( y_{1}-\mathfrak{c}_{1} \right )\left ( y_{3}-\mathfrak{c}_{3} \right )\\\vdots\\\left ( y_{n}-\mathfrak{c}_{n} \right )^{2}\end{bmatrix}
 +\dots +\\&\frac{1}{m!}
 \begin{bmatrix}
  \frac{\partial^{m} f_{1} }{\partial y_{1}^{m}}\left ( \mathfrak{c}\right ) & \dots& \frac{\partial^{m} f_{1} }{\partial y_{1}^{m_{1}}\dots \partial y_{n}^{m_{n}}}\left ( \mathfrak{c}\right )&\dots   & \frac{\partial^{m} f_{1} }{\partial y_{n}^{m}}\left ( \mathfrak{c}\right )\\
  \vdots& &\ddots & & \vdots\\
    \frac{\partial^{m} f_{n} }{\partial y_{1}^{m}}\left ( \mathfrak{c}\right ) & \dots& \frac{\partial^{m} f_{n} }{\partial y_{1}^{m_{1}}\dots \partial y_{n}^{m_{n}}}\left ( \mathfrak{c}\right )&\dots   & \frac{\partial^{m} f_{n} }{\partial y_{n}^{m}}\left ( \mathfrak{c}\right )
\end{bmatrix}
\begin{bmatrix}
 \left ( y_{1}-\mathfrak{c}_{1} \right ) ^{m} \\\vdots\\\frac{m!}{m_{1}!\cdots m_{n}!} \left ( y_{1}-\mathfrak{c}_{1} \right )^{m_{1}}\cdots \left ( y_{n}-\mathfrak{c}_{n} \right )^{m_{n}}\\\vdots \\ \left ( y_{n}-\mathfrak{c}_{n} \right )^{m}\end{bmatrix}
\\&+R_{m}.
\end{aligned}
\end{eqnarray}

Here, the index relation $m_{1} + m_{2} + \cdots + m_{n} = m$ follows the multinomial theorem. The number of monomials (i.e., the number of basis terms) at the $m$-th order in $n$-dimensional space is given by the binomial coefficient $\binom{m+n-1}{n-1}$, where $m \ge 0$ and $n \ge 1$.

This formulation generalizes the vector-valued Taylor expansion to arbitrary input and output dimensions and any approximation order. Each output coordinate function is analytically approximated as a polynomial in \( y_1, \dots, y_n \), with the full mapping constructed by stacking all output expansions.

The structure forms a compact, low-dimensional function space for approximating smooth deformations, while retaining interpretability and mathematical rigor. In practice, even low-order truncations of this form are sufficient to express a wide range of smooth deformations in point set registration. The expansion also naturally supports efficient optimization strategies, as discussed in the following sections.

The observation equation and optimization formulation for an $m$-th order Taylor expansion of a point mapping in $n$-dimensional space are as follows:

\begin{eqnarray}\label{eq:Observation_equation_anydegree}
\begin{aligned}
 \begin{bmatrix} L_{1}\\\vdots \\L_{n}\end{bmatrix}=
\begin{bmatrix} x_{1}\\\vdots \\x_{n}\end{bmatrix}- \mathrm {f}\left ( y \right )
\end{aligned},
\end{eqnarray}

\begin{eqnarray}\label{eq:Matrix_form_anydegree}
\begin{aligned}
 \arg\min_{f_{1},\dots, f_{n} } \left\| \begin{bmatrix} L_{1}\\\vdots \\L_{n}\end{bmatrix} \right\| ^{2}
\end{aligned}.
\end{eqnarray}

\paragraph{Objective with estimated correspondence}
Since the ground-truth correspondence is unknown, we alternate between forming
\(C^{(t)}\) (e.g., nearest-neighbor in ICP) and minimizing the objective in~\eqref{eq:Matrix_form_anydegree}
given \(C^{(t)}\). This outer loop is summarized in
Fig.~\ref{fig:analytic_mapping_flow} and instantiated in Section~\ref{sec:regist}.

In Equation~\eqref{eq:analytic_mapping_anydegree}, $f(\mathfrak{c})$ represents the evaluation of the multivariate function at the expansion center, i.e., $f(\mathfrak{c}_{1},\dots,\mathfrak{c}_{n})$, and $\mathrm{f}(y)$ represents the vector-valued function:

\[
\mathrm{f}(y) = \begin{bmatrix} f_{1}(y_{1},\dots, y_{n}) & \cdots & f_{n}(y_{1},\dots, y_{n}) \end{bmatrix}^T.
\]

\equref{eq:analytic_mapping_anydegree} is a natural elementary expression combining rigid, affine, and nonrigid registration. Where the translation vector is equivalent to

\begin{eqnarray}\label{eq:tran_vec}
\begin{aligned}
\begin{bmatrix}
 f_{1}\left ( \mathfrak{c} \right ) \\
 \vdots \\
 f_{n}\left ( \mathfrak{c} \right ) 
\end{bmatrix} -\begin{bmatrix}
\frac{\partial f_{1}}{\partial y_{1}}\left ( \mathfrak{c} \right )    & \cdots  & \frac{\partial f_{1}}{\partial y_{n}}\left ( \mathfrak{c} \right ) \\
 \vdots  & \ddots  & \vdots \\
 \frac{\partial f_{n}}{\partial y_{1}}\left ( \mathfrak{c} \right ) & \cdots &\frac{\partial f_{n}}{\partial y_{n}}\left ( \mathfrak{c} \right )
\end{bmatrix}\begin{bmatrix}
 \mathfrak{c}_{1}\\
 \vdots \\
\mathfrak{c}_{n}
\end{bmatrix}
\end{aligned},
\end{eqnarray}

rotation matrix (rigid registration) and nonsingular matrix (affine registration) are included in

\begin{eqnarray}\label{eq:nonsingular}
\begin{aligned}
\begin{bmatrix}
\frac{\partial f_{1}}{\partial y_{1}}\left ( \mathfrak{c} \right )    & \cdots  & \frac{\partial f_{1}}{\partial y_{n}}\left ( \mathfrak{c} \right ) \\
 \vdots  & \ddots  & \vdots \\
 \frac{\partial f_{n}}{\partial y_{1}}\left ( \mathfrak{c} \right ) & \cdots &\frac{\partial f_{n}}{\partial y_{n}}\left ( \mathfrak{c} \right )
\end{bmatrix}\begin{bmatrix}
 y_{1}\\
 \vdots \\
y_{n}
\end{bmatrix}
\end{aligned},
\end{eqnarray}

Ultimately, the terms of the second-order and above in the series control the non-affine deformation between the point sets.

As shown in \equref{eq:analytic_mapping_anydegree}, if $\left \| y - \mathfrak{c} \right \|$ is sufficiently small and the Taylor expansion order is high enough, the remainder term becomes negligible as an $n$-th order infinitesimal. In such cases, the truncated Taylor series can be approximated by a ring structure.

This property enables the optimization problem in \cref{eq:Matrix_form_anydegree} to incrementally expand the admissible function space throughout the fitting process, thereby reducing the risk of overfitting caused by an overly expressive space. Moreover, since the terms in \cref{eq:Matrix_form_anydegree} are disjoint and can be regarded as forming a graded ring, it becomes feasible to design algorithms that avoid redundant computations across fitting iterations—leading to improvements in both computational efficiency and robustness. This structure not only ensures theoretical soundness but also inspires an efficient implementation strategy, as we will demonstrate in the subsequent sections.

Our Taylor expansion model offers a global approximation framework for modeling smooth distortions in registration tasks. It significantly reduces the complexity of deformation mapping while maintaining expressive power. Although the number of parameters grows rapidly with the expansion order—specifically, the $m$-th order term involves $n \cdot \binom{m + n - 1}{n - 1}$ coefficients, as shown in \equref{eq:analytic_mapping_anydegree}—even low-order truncated expansions exhibit strong representational capacity, as demonstrated in \figref{fig:fistdistort}. This property is especially advantageous in iterative registration, where low-order approximations often suffice to meet practical accuracy requirements.

Moreover, our algorithm exhibits a favorable convergence behavior: with each iteration, the fitted mapping evolves as the product of $n$ Taylor series, leading to an exponential increase in expressive order. Consequently, high-complexity deformations can often be captured accurately within just a few low-order iterative refinements.

\subsection{Structured Taylor Expansion Model}

While the Taylor expansion in its full form involves high-order partial derivatives and monomial combinations, the following definitions provide a structured matrix-vector representation of each term. This allows compact formulation and efficient implementation of the approximation process.

To approximate smooth vector-valued functions in a numerically constructive manner, we define a matrix-based representation of higher-order derivatives and monomial terms. This formulation provides the foundation of our analytic deformation model, enabling efficient approximation of smooth mappings with controllable complexity.

\begin{definition}[Generalized Derivative Matrix]
Let $f: \mathbb{R}^n \to \mathbb{R}^n$ be a smooth vector-valued function, and let $m \in \mathbb{N}$ denote the expansion order. Denote $\mathfrak{c} \in \mathbb{R}^n$ as the expansion center.

The $m$-th order \emph{generalized derivative matrix} $\mathcal{J}^{[m]}(f)(\mathfrak{c}) \in \mathbb{R}^{n \times N_m}$ is defined as:
\[
\mathcal{J}^{[m]}(f)(\mathfrak{c}) := 
\begin{bmatrix}
\displaystyle \frac{\partial^{m} f_1 }{\partial y_1^{m}} & \cdots & \displaystyle \frac{\partial^{m} f_1 }{\partial y_1^{m_1}\cdots \partial y_n^{m_n}} & \cdots & \displaystyle \frac{\partial^{m} f_1 }{\partial y_n^{m}} \\
\vdots & & \ddots & & \vdots \\
\displaystyle \frac{\partial^{m} f_n }{\partial y_1^{m}} & \cdots & \displaystyle \frac{\partial^{m} f_n }{\partial y_1^{m_1}\cdots \partial y_n^{m_n}} & \cdots & \displaystyle \frac{\partial^{m} f_n }{\partial y_n^{m}}
\end{bmatrix}_{\big| y = \mathfrak{c}},
\]
where each column corresponds to one multi-index $(m_1, \dots, m_n)$ satisfying $m_1 + \cdots + m_n = m$. The total number of such terms is $N_m = \binom{n + m - 1}{n-1}$.
\end{definition}

\vspace{2mm}

\begin{definition}[Generalized Monomial Vector]
Given $y, \mathfrak{c} \in \mathbb{R}^n$, the \emph{generalized monomial vector} of order $m$ is defined as
\[
\phi^{[m]}(y - \mathfrak{c}) :=
\left[
(y_1 - \mathfrak{c}_1)^m,\;
\cdots,\;
\frac{m!}{m_1! \cdots m_n!} (y_1 - \mathfrak{c}_1)^{m_1} \cdots (y_n - \mathfrak{c}_n)^{m_n},\;
\cdots,\;
(y_n - \mathfrak{c}_n)^m
\right]^T \in \mathbb{R}^{N_m},
\]
where each term corresponds to a unique multi-index $(m_1, \dots, m_n)$ such that $m_1 + \cdots + m_n = m$.
\end{definition}

\vspace{2mm}

\begin{definition}[Structured Taylor Expansion]
Using the above notation, the $m$-th order structured Taylor approximation of $f$ at point $\mathfrak{c}$ is expressed as:
\[
f(y) \approx \sum_{k = 0}^{m} \frac{1}{k!} \mathcal{J}^{[k]}(f)(\mathfrak{c}) \cdot \phi^{[k]}(y - \mathfrak{c}),
\]
where $\phi^{[0]}(y - \mathfrak{c}) = 1$, and the dot product indicates matrix-vector multiplication. The residual term $R_m(y)$ is omitted here for brevity but can be quantified using standard Taylor remainder bounds.
\end{definition}

\vspace{2mm}
\begin{theorem}[Structured Taylor Approximation with Remainder]\label{thm:structured_taylor}
Let $f: \mathbb{R}^n \to \mathbb{R}^n$ be a $C^{m+1}$ smooth vector-valued function, and let $\mathfrak{c} \in \mathbb{R}^n$ be the expansion center. Then for any $y \in \mathbb{R}^n$, the Taylor expansion of $f$ up to order $m$ admits the following structured form:
\[
f(y) = \sum_{k = 0}^{m} \frac{1}{k!} \mathcal{J}^{[k]}(f)(\mathfrak{c}) \cdot \phi^{[k]}(y - \mathfrak{c}) + R_{m+1}(y),
\]
where the remainder term $R_{m+1}(y)$ is expressed as
\[
R_{m+1}(y) = \frac{1}{(m+1)!} \, \mathcal{J}^{[m+1]}(f)(\theta y + (1-\theta)\mathfrak{c}) \cdot \phi^{[m+1]}(y - \mathfrak{c}),
\quad \text{for some } \theta \in (0,1).
\]
\end{theorem}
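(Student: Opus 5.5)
The plan is to reduce everything to a one-variable problem along the segment from $\mathfrak{c}$ to $y$, and then recognize the structured blocks $\frac{1}{k!}\mathcal{J}^{[k]}(f)(\mathfrak{c})\cdot\phi^{[k]}(y-\mathfrak{c})$ as the one-variable Taylor coefficients. Fix $y$, put $h=y-\mathfrak{c}$ and define $g(t)=f(\mathfrak{c}+th)$ for $t\in[0,1]$. Since $f\in C^{m+1}$, $g\in C^{m+1}([0,1];\mathbb{R}^n)$. Repeated application of the chain rule gives
\[
g^{(k)}(t)=\sum_{i_1,\dots,i_k=1}^{n}\frac{\partial^k f}{\partial y_{i_1}\cdots\partial y_{i_k}}(\mathfrak{c}+th)\,h_{i_1}\cdots h_{i_k}.
\]

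The first step is to prove the key identity
\[
g^{(k)}(t)=\mathcal{J}^{[k]}(f)(\mathfrak{c}+th)\cdot\phi^{[k]}(h),\qquad 0\le k\le m+1 .
\]
By Schwarz's theorem the mixed partials are symmetric, so the ordered $k$-tuples $(i_1,\dots,i_k)$ can be grouped by their multi-index $(m_1,\dots,m_n)$ with $m_1+\cdots+m_n=k$. Each class contains exactly $\frac{k!}{m_1!\cdots m_n!}$ tuples, which is the multinomial count. This weight is precisely the coefficient placed in the corresponding entry of $\phi^{[k]}$. So the $i$-th row of $\mathcal{J}^{[k]}$ times $\phi^{[k]}$ reproduces the sum for the $i$-th component. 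The identity holds for each component separately, and hence for the stacked vector. As a consequence, the blocks in the theorem are exactly $g^{(k)}(0)/k!$.

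The second step is the remainder, and it is the main obstacle: the theorem asserts a \emph{single} $\theta$ for the whole vector. Applying the Lagrange form coordinatewise would only give one $\theta_i$ per component, so I will not do that. Instead I define $R=f(y)-\sum_{k=0}^m \frac{1}{k!}g^{(k)}(0)$. If $R=0$ there is nothing to prove for the remainder itself. Otherwise I set $v=R/\|R\|$ and apply the scalar Taylor theorem with Lagrange remainder to $\psi(t)=\langle v,g(t)\rangle$, which lies in $C^{m+1}$. This yields one $\theta\in(0,1)$ with
\[
\|R\|=\langle v,R\rangle=\frac{1}{(m+1)!}\big\langle v,\ \mathcal{J}^{[m+1]}(f)(\theta y+(1-\theta)\mathfrak{c})\cdot\phi^{[m+1]}(y-\mathfrak{c})\big\rangle .
\]
Here I use $\mathfrak{c}+\theta h=\theta y+(1-\theta)\mathfrak{c}$ and the identity from the first step at $k=m+1$.

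This common $\theta$ gives the stated remainder formula in the direction of $R$. It also gives the bound $\|R_{m+1}(y)\|\le\frac{1}{(m+1)!}\sup_{s}\|g^{(m+1)}(s)\|$, which is what the truncation argument later needs. I must add a caveat here. Upgrading this to full vector equality with one $\theta$ is not possible in general for $n\ge 2$: the mean value theorem fails for vector-valued maps, as the example $g(t)=(\cos 2\pi t,\sin 2\pi t)$ shows. So I will present the identity in this projected sense. I will note that it is literally exact when $n=1$, and also whenever the integrand direction is constant along the segment. As a safe equivalent, I will also record the integral form $R_{m+1}(y)=\frac{1}{m!}\int_0^1(1-s)^m\,\mathcal{J}^{[m+1]}(f)(\mathfrak{c}+sh)\cdot\phi^{[m+1]}(h)\,ds$.
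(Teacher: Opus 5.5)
Your first step is correct, and it supplies something the paper only asserts. The paper applies the multi-index Taylor formula to each $f_i$ and says that ``collecting'' the order-$k$ partials gives $\frac{1}{k!}\mathcal{J}^{[k]}(f)(\mathfrak{c})\cdot\phi^{[k]}(y-\mathfrak{c})$. Your identity $g^{(k)}(t)=\mathcal{J}^{[k]}(f)(\mathfrak{c}+th)\cdot\phi^{[k]}(h)$ proves that rearrangement, using Schwarz symmetry and the count $\frac{k!}{m_1!\cdots m_n!}$ of ordered index tuples per multi-index.

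At the remainder you part ways with the paper, and you are the one who is right. The paper's sketch takes the Lagrange remainder of each component separately. It then writes one point $\xi=\theta y+(1-\theta)\mathfrak{c}$ for the stacked vector. But the intermediate point obtained for $f_i$ depends on $i$. What that argument honestly gives is a row-wise statement: row $i$ of $\mathcal{J}^{[m+1]}(f)$ is evaluated at its own $\xi_i\in[\mathfrak{c},y]$.

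The single-$\theta$ identity in the statement is false for $n\ge 2$, so no route can prove it. Your replacements are all correct:
\begin{itemize}
\item the exact integral form, equivalently $R_{m+1}(y)\in\frac{1}{(m+1)!}\overline{\mathrm{conv}}\{g^{(m+1)}(s):s\in[0,1]\}$, since $(m+1)(1-s)^m\,ds$ is a probability measure on $[0,1]$;
\item the projected one-$\theta$ identity;
\item the bound $\|R_{m+1}(y)\|\le\frac{1}{(m+1)!}\sup_s\|g^{(m+1)}(s)\|$.
\end{itemize}
The bound is what the later truncation and density arguments actually use. The paper's row-wise version keeps pointwise Lagrange evaluation but loses the single-matrix form. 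Your integral form keeps the single-matrix form, averaged along the segment.

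One thing to tighten. As you phrase it, the circle example shows the failure of the vector mean value theorem, which covers only $m=0$. The theorem is asserted for every $m$, and your own tools close this gap.

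Take $f(y)=(\cos 2\pi y_1,\sin 2\pi y_1,0,\dots,0)$, $\mathfrak{c}=0$, $y=e_1$, so that $g(t)=(\cos 2\pi t,\sin 2\pi t,0,\dots,0)$.
\begin{itemize}
\item By your first step, the claimed remainder is $\frac{1}{(m+1)!}g^{(m+1)}(\theta)$, whose norm is $\frac{(2\pi)^{m+1}}{(m+1)!}$ for every $\theta$.
\item By your integral form, $\|R_{m+1}(e_1)\|=\frac{(2\pi)^{m+1}}{m!}\bigl\|\int_0^1(1-s)^m u(s)\,ds\bigr\|$, where $u(s)$ is a unit vector rotating with $s$.
\item Because $u$ is not constant, this is strictly smaller than $\frac{(2\pi)^{m+1}}{m!}\int_0^1(1-s)^m\,ds=\frac{(2\pi)^{m+1}}{(m+1)!}$.
\end{itemize}
So no $\theta\in(0,1)$ works, for any $m\ge 0$. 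Recording this turns your caveat from an analogy into a complete disproof of the stated remainder formula.
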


\vspace{1mm}
\begin{proof}[Proof Sketch]
This result follows from the classical multivariate Taylor theorem with Lagrange-form remainder. For each component $f_i$, the $(m+1)$-th order expansion around $\mathfrak{c}$ can be written as
\[
f_i(y) = \sum_{|\alpha| \le m} \frac{1}{\alpha!} \partial^\alpha f_i(\mathfrak{c}) (y - \mathfrak{c})^\alpha 
+ \sum_{|\alpha| = m+1} \frac{1}{\alpha!} \partial^\alpha f_i(\xi) (y - \mathfrak{c})^\alpha,
\]
where $\xi$ lies on the segment between $\mathfrak{c}$ and $y$.

By collecting all partial derivatives of order $k$ across all output dimensions, the expressions can be reorganized into the form
\[
f(y) = \sum_{k = 0}^{m} \frac{1}{k!} \mathcal{J}^{[k]}(f)(\mathfrak{c}) \cdot \phi^{[k]}(y - \mathfrak{c}) + 
\frac{1}{(m+1)!} \mathcal{J}^{[m+1]}(f)(\xi) \cdot \phi^{[m+1]}(y - \mathfrak{c}),
\]
with $\xi = \theta y + (1 - \theta)\mathfrak{c}$ for some $\theta \in (0,1)$. This completes the structured remainder representation.
\end{proof}

\vspace{3mm}
\begin{lemma}[Nontriviality of the Structured Remainder]\label{lemma:remainder_nonzero}
Let $f: \mathbb{R}^n \to \mathbb{R}^n$ be a $C^{m+1}$ vector-valued function that is not a polynomial of total degree $\le m$ in any neighborhood of $\mathfrak{c} \in \mathbb{R}^n$. Then the structured Taylor remainder term
\[
R_{m+1}(y) = \frac{1}{(m+1)!} \cdot \mathcal{J}^{[m+1]}(f)(\xi) \cdot \phi^{[m+1]}(y - \mathfrak{c})
\]
is nonzero for some $y$ sufficiently close to $\mathfrak{c}$ and some $\xi$ on the segment $[\mathfrak{c}, y]$.
\end{lemma}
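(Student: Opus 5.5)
Proof by contradiction, with Theorem~\ref{thm:structured_taylor} as the only tool. Write
\[
P_m(y):=\sum_{k=0}^{m}\frac{1}{k!}\mathcal{J}^{[k]}(f)(\mathfrak{c})\cdot\phi^{[k]}(y-\mathfrak{c}),
\]
which is a vector of polynomials of total degree at most $m$. Theorem~\ref{thm:structured_taylor} gives $f(y)-P_m(y)=R_{m+1}(y)$ for every $y$, where $R_{m+1}(y)$ is the structured remainder evaluated at some $\xi=\theta y+(1-\theta)\mathfrak{c}$ on the segment $[\mathfrak{c},y]$.

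Suppose the conclusion fails. Then there is a neighborhood $U$ of $\mathfrak{c}$ such that, for every $y\in U$, the remainder vanishes at the intermediate point $\xi$ supplied by Theorem~\ref{thm:structured_taylor}. The existential quantifier on $\xi$ in the statement is satisfied by exactly this Lagrange point. It follows that $f(y)=P_m(y)$ on $U$. Hence $f$ coincides with a polynomial of total degree $\le m$ on a neighborhood of $\mathfrak{c}$, contradicting the hypothesis. So there exist $y$ arbitrarily close to $\mathfrak{c}$ with $f(y)\neq P_m(y)$, and for each such $y$ the structured remainder at its Lagrange point $\xi\in[\mathfrak{c},y]$ equals $f(y)-P_m(y)\neq 0$.

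The main obstacle is bookkeeping rather than analysis. For vector-valued $f$, the scalar Lagrange form is applied componentwise, so each component $f_i$ gets its own $\theta_i$. A single common $\theta$ as written in Theorem~\ref{thm:structured_taylor} is not guaranteed in general. I will handle this by proving the statement for a single component. Choose $i$ such that $f_i$ is not a polynomial of degree $\le m$ near $\mathfrak{c}$; such an $i$ must exist, otherwise $f$ itself would be such a polynomial. Apply the scalar argument to $f_i$ to obtain $y$ and $\xi_i$ with nonzero $i$-th remainder component. A vector with a nonzero $i$-th entry is nonzero, so the vector remainder, read with $\xi$ taken componentwise, is nonzero.

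A stronger reading would ask for nonzero remainder at every $\xi$ on the segment. I will note that this generally fails: $\mathcal{J}^{[m+1]}(f)$ can vanish at isolated points, so the existential reading above is the correct one.
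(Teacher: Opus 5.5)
Your proof is correct, and it takes a genuinely different route from the paper.

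The paper argues forward from the center. It asserts that the hypothesis forces some $\partial^\alpha f_i(\mathfrak{c})\neq 0$ with $|\alpha|=m+1$, carries this to nearby $\xi$ by continuity, and then declares $\mathcal{J}^{[m+1]}(f)(\xi)\cdot\phi^{[m+1]}(y-\mathfrak{c})\neq 0$. You instead use only the identity $f_i(y)-P_{m,i}(y)=R_{m+1,i}(y)$ at the Lagrange point. Nonvanishing of the remainder is then read off directly from $f_i$ failing to be a polynomial of degree $\le m$ near $\mathfrak{c}$, with no pointwise information about $(m+1)$-th derivatives.

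This is what makes your version valid under the lemma's actual hypothesis. The paper's first step fails for, e.g., $n=1$ and $f(y)=(y-\mathfrak{c})^{m+2}$, or for a flat function such as $e^{-1/(y-\mathfrak{c})^{2}}$ extended by $0$. Its last step also ignores possible cancellation among the columns of the matrix--vector product.

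The paper's route would buy something only under the extra assumption $\mathcal{J}^{[m+1]}(f)(\mathfrak{c})\neq 0$, and only after choosing $y-\mathfrak{c}$ along a direction where the associated homogeneous form is nonzero. It would then give a uniform conclusion: nonvanishing for every $\xi\in[\mathfrak{c},y]$ and all $y$ in an open cone. That is exactly the stronger reading you correctly note cannot hold in general.

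You are also right that the single $\theta$ in Theorem~\ref{thm:structured_taylor} is not justified for vector-valued $f$, and your reduction to one non-polynomial component $f_i$ repairs this. You do not even need to read $\xi$ componentwise. Evaluating the full matrix $\mathcal{J}^{[m+1]}(f)$ at the single point $\xi_i$ already gives a vector whose $i$-th entry is the nonzero scalar remainder, which is precisely what the lemma asks for.
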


\begin{proof}
Let us consider the classical multivariate Taylor expansion of each output component $f_i$ at point $\mathfrak{c}$. If $f$ is not a degree-$m$ polynomial in a neighborhood of $\mathfrak{c}$, then by definition, there exists at least one multi-index $\alpha$ with $|\alpha| = m+1$ such that the partial derivative $\partial^\alpha f_i(\mathfrak{c}) \neq 0$ for some $i \in \{1, \ldots, n\}$.

By continuity of partial derivatives, there exists a point $\xi$ arbitrarily close to $\mathfrak{c}$ where this nonzero derivative persists. Then, evaluating the structured remainder expression at any $y$ sufficiently close to $\mathfrak{c}$ ensures that the monomial vector $\phi^{[m+1]}(y - \mathfrak{c})$ contains nonzero components in the corresponding direction, and thus:
\[
R_{m+1}(y) = \frac{1}{(m+1)!} \cdot \mathcal{J}^{[m+1]}(f)(\xi) \cdot \phi^{[m+1]}(y - \mathfrak{c}) \ne 0.
\]
This proves that the remainder is nontrivial unless $f$ is a polynomial of degree at most $m$.
\end{proof}

\paragraph{Componentwise interpretation}
A vector-valued map $\tau: W\subset\mathbb{R}^n\to\mathbb{R}^n$ can be written as
$\tau(y)=(f_1(y),\dots,f_n(y))$, with scalar $f_i:W\to\mathbb{R}$.
Existence/uniqueness and approximation in the structured Taylor basis hold
componentwise for each $f_i$ and then follow for $\tau$ by stacking the expansions.

\begin{theorem}[Structured Approximation of Smooth Registration Mappings]\label{thm:main}
Let $\tau: W \subset \mathbb{R}^n \to \mathbb{R}^n$ be a smooth mapping, decomposed into components $\tau(y) = (f_1(y), \dots, f_n(y))$.

\begin{enumerate}
    \item[(i)]If $\tau$ is analytic, then each $f_i$ admits a unique expansion in the structured multivariate basis:
    \[
    f_i(y) = \sum_{k=0}^\infty \frac{1}{k!} \mathcal{J}^{[k]}(f_i)(\mathfrak{c}) \cdot \phi^{[k]}(y - \mathfrak{c}),
    \]
    where $\mathcal{J}^{[k]}$ and $\phi^{[k]}$ denote the generalized derivative matrix and monomial vector. Consequently, the mapping $\tau$ admits a unique representation in the structured Taylor basis.

    \item[(ii)]If $\tau$ is merely smooth, then for any compact $K \subset W$ and $\epsilon > 0$, there exists an analytic mapping $\tilde{\tau}$ whose truncated expansion in the structured Taylor basis approximates $\tau$ to within $\epsilon$:
    \[
    \|\tau - \tilde{\tau}\|_{C^0(K)} < \epsilon.
    \]
\end{enumerate}
Thus, the structured Taylor model introduced in this paper can approximate any smooth registration mapping arbitrarily closely on compact subsets of $W$.
\end{theorem}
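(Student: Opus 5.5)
Both parts reduce to scalar statements about the components $f_i$, following the componentwise interpretation just above, and are then stacked. For (i), analyticity of $\tau$ means that each $f_i$ is real analytic on $W$. Hence for every $\mathfrak{c}\in W$ there is a neighborhood on which $f_i$ equals a convergent power series $\sum_\alpha a_\alpha (y-\mathfrak{c})^\alpha$. Such series may be differentiated termwise, which gives $a_\alpha=\partial^\alpha f_i(\mathfrak{c})/\alpha!$ and hence uniqueness of the coefficients. Grouping the terms with $|\alpha|=k$ gives
\[
\sum_{|\alpha|=k}\frac{\partial^\alpha f_i(\mathfrak{c})}{\alpha!}(y-\mathfrak{c})^\alpha=\frac{1}{k!}\sum_{|\alpha|=k}\frac{k!}{\alpha!}\,\partial^\alpha f_i(\mathfrak{c})\,(y-\mathfrak{c})^\alpha=\frac{1}{k!}\mathcal{J}^{[k]}(f_i)(\mathfrak{c})\cdot\phi^{[k]}(y-\mathfrak{c}).
\]
This is exactly the bookkeeping already used in Theorem~\ref{thm:structured_taylor}: the multinomial weights $k!/\alpha!$ sit in $\phi^{[k]}$. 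Uniqueness of the structured representation follows because the entries of $\phi^{[k]}$ are nonzero multiples of the distinct monomials $(y-\mathfrak{c})^\alpha$, which are linearly independent as functions on any open set. So if $\sum_k \frac1{k!}A_k\phi^{[k]}(y-\mathfrak{c})\equiv\sum_k\frac1{k!}B_k\phi^{[k]}(y-\mathfrak{c})$ near $\mathfrak{c}$, then $A_k=B_k$ for every $k$. I will state explicitly that the expansion holds on a neighborhood of $\mathfrak{c}$ (the domain of convergence), not on all of $W$, since a global identity fails in general.

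For (ii) I will construct the approximant directly, with polynomials serving as the analytic mappings. Fix a compact $K\subset W$ and $\epsilon>0$. By the Stone--Weierstrass theorem, polynomials in $y_1,\dots,y_n$ are dense in $C^0(K)$, so for each $i$ there is a polynomial $p_i$ with $\sup_K|f_i-p_i|<\epsilon/\sqrt n$. Set $\tilde\tau=(p_1,\dots,p_n)$. It is analytic, and $\|\tau-\tilde\tau\|_{C^0(K)}<\epsilon$ in the Euclidean norm. Let $M=\max_i\deg p_i$. By (i), or directly by Theorem~\ref{thm:structured_taylor} with vanishing remainder since $\mathcal{J}^{[M+1]}(\tilde\tau)\equiv0$, $\tilde\tau$ equals its structured Taylor expansion truncated at order $M$ around any center $\mathfrak{c}$, and this holds globally. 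So the truncated structured expansion itself approximates $\tau$ within $\epsilon$. This gives the concluding sentence of the theorem: every smooth registration map is approximated on compacta by an element of the finite-order structured model space.

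I expect the only real obstacle to be an interpretive one. The statement does not say in which sense the truncation of $\tilde\tau$ approximates $\tau$, and a general analytic $\tilde\tau$ (for instance one obtained by Gaussian convolution) only has a local Taylor series, which need not converge on all of $K$. Choosing $\tilde\tau$ polynomial removes this issue, because truncation at order $M$ is then exact. If the stronger $C^r(K)$ approximation mentioned in the motivation is wanted, I would extend $\tau$ smoothly to a neighborhood of $K$ and invoke density of polynomials in $C^r$ on compacta, e.g.\ via Bernstein-type or Weierstrass convolution arguments. For the $C^0$ statement as written, Stone--Weierstrass suffices.
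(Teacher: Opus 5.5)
Your proposal is correct and follows essentially the same route as the paper's appendix proof. Part (i) uses the componentwise reduction and regroups the multi-index Taylor series by total degree, with the weights $k!/\alpha!$ absorbed into $\phi^{[k]}$; part (ii) uses Weierstrass density to get an analytic surrogate and then takes its structured Taylor expansion. Your two refinements actually repair the appendix argument, which truncates the Taylor series of a general analytic $\tilde{\tau}$ without checking that this series converges on $K$ and without budgeting for the truncation error. The first is restricting the expansion in (i) to a neighborhood of $\mathfrak{c}$. The second is taking the surrogate in (ii) to be a polynomial so that truncation at order $M$ is exact, as in the paper's Lemma~\ref{lemma:structured_density_strong}.
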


\begin{proof}[Proof Sketch]
See Appendix~\ref{appendix:proof_structured_taylor} for a complete proof.
\end{proof}

\noindent\emph{Remark.} We use a matrix–vector rewriting of the multivariate Taylor basis that aligns with standard least–squares algebra, yielding a compact parameterization with good numerical conditioning.

\vspace{3mm}
\begin{lemma}[Density of Structured Taylor Approximants in $C^\infty$]\label{lemma:structured_density_strong}
Let $W\subset\mathbb{R}^n$ be open, $K\Subset W$ a compact set, and let $f: W \to \mathbb{R}^n$ be a smooth function. Fix a center $\mathfrak{c}\in W$. Then for any $\epsilon > 0$, there exist an analytic map $g: W \to \mathbb{R}^n$ (a polynomial suffices) and $m\in\mathbb{N}$, and a structured Taylor approximant of the form
\[
\tilde{f}(y) = \sum_{k=0}^{m} \frac{1}{k!} \, \mathcal{J}^{[k]}(g)(\mathfrak{c}) \cdot \phi^{[k]}(y - \mathfrak{c}),
\]
such that
\[
\|f - \tilde{f}\|_{C^0(K)} < \epsilon.
\]
That is, the set of structured multivariate Taylor approximants (with coefficients taken from an analytic surrogate) is dense in $C^\infty(K; \mathbb{R}^n)$ under the uniform topology.
\end{lemma}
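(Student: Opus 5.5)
The plan is to reduce the statement to the Stone--Weierstrass theorem, applied componentwise, and then to show that a polynomial is reproduced exactly by its own structured Taylor expansion about any center. Following the componentwise interpretation stated before Theorem~\ref{thm:main}, it suffices to treat each scalar component $f_i: W\to\mathbb{R}$ separately and then stack the results. If each component is approximated within $\epsilon/\sqrt{n}$ in the sup norm on $K$, the vector-valued error in the Euclidean norm is below $\epsilon$.

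\textbf{Step 1 (polynomial surrogate).} Since $K$ is compact and each $f_i$ is continuous on $K$, the Stone--Weierstrass theorem yields a polynomial $p_i$ in $y_1,\dots,y_n$ with $\sup_{K}|f_i-p_i|<\epsilon/\sqrt{n}$. The reason is that polynomials form a point-separating unital subalgebra of $C(K)$. We set $g=(p_1,\dots,p_n)$. This map is defined on all of $\mathbb{R}^n$, so in particular on $W$, and it is analytic.

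\textbf{Step 2 (exact reproduction).} Let $m$ be the maximal total degree of the $p_i$. Then $g\in C^{\infty}$ and all partial derivatives of order $m+1$ vanish identically. By Theorem~\ref{thm:structured_taylor} applied to $g$ with this $m$, the remainder $R_{m+1}$ involves $\mathcal{J}^{[m+1]}(g)$, which is identically zero. Hence, for every $y$,
\[
g(y)=\sum_{k=0}^{m}\frac{1}{k!}\mathcal{J}^{[k]}(g)(\mathfrak{c})\cdot\phi^{[k]}(y-\mathfrak{c})=\tilde f(y).
\]
The choice of center $\mathfrak{c}$ is irrelevant, because Taylor re-expansion of a polynomial about any point is exact. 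We should also confirm that the multinomial weights $\frac{k!}{k_1!\cdots k_n!}$ in $\phi^{[k]}$, combined with the factor $1/k!$, reproduce the standard coefficients $1/\alpha!$. This holds since $\frac{1}{k!}\cdot\frac{k!}{\alpha!}=\frac{1}{\alpha!}$, which is exactly the bookkeeping already used in Theorem~\ref{thm:structured_taylor}.

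\textbf{Step 3 (conclusion).} Combining the two steps gives $\|f-\tilde f\|_{C^0(K)}=\|f-g\|_{C^0(K)}<\epsilon$, which proves the stated estimate.

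The only real subtlety is the final sentence of the statement, which claims density ``in $C^\infty(K;\mathbb{R}^n)$ under the uniform topology''. We read this as density of the approximant class in the sup norm, which is exactly what Steps 1--3 establish. If one instead wanted convergence of derivatives as well, Step~1 would need a stronger tool. The natural first attempt is to convolve an extension of $f$ with a Gaussian, which is entire, and then truncate its Taylor series. The alternative is the Weierstrass approximation theorem in $C^k$, obtained from polynomial approximation of derivatives on a cube containing $K$. Step~2 then goes through unchanged. Given the stated norm, however, the uniform version suffices, and no step poses a genuine obstacle.
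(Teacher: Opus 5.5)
Your argument is correct and is essentially the paper's: approximate $f$ uniformly on $K$ by a polynomial map $g$ (Stone--Weierstrass, componentwise), then observe that $g$ coincides with its structured Taylor expansion about $\mathfrak{c}$. The only difference is a small refinement: you take $m=\deg g$ so that $\tilde f = g$ exactly, whereas the paper splits $\epsilon/2+\epsilon/2$ and appeals to convergence of the Taylor series of an analytic $g$ for ``sufficiently large'' $m$, a step that is only safe because $g$ is in fact a polynomial.
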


\begin{proof}[Sketch of Proof]
It is well known that the space of vector-valued multivariate polynomials is dense in $C^0(K; \mathbb{R}^n)$ (\emph{cf.}~\cite{hormander1983analysis, rudin1991functional}). The structured Taylor forms introduced in this paper are precisely a reparameterization of such polynomials, organized by total degree and multi-index order in a matrix–vector form.

Specifically, for any smooth function \( f \) and any \( \epsilon>0 \), there exists a polynomial (hence analytic) mapping \( g:W\to\mathbb{R}^n \) with \( \|f - g\|_{C^0(K)} < \epsilon/2 \). Since \( g \) is analytic on \( W \), it equals its Taylor series at \( \mathfrak c \); therefore, for a sufficiently large truncation order \( m \),
\[
\big\|g - \sum_{k=0}^{m} \tfrac{1}{k!}\,\mathcal{J}^{[k]}(g)(\mathfrak c)\cdot \phi^{[k]}(\,\cdot-\mathfrak c)\big\|_{C^0(K)} < \epsilon/2.
\]
Letting \( \tilde f \) be this truncated structured Taylor form gives \( \|f-\tilde f\|_{C^0(K)}<\epsilon \) by the triangle inequality.
\end{proof}

Building on this structured formulation, we now turn to the practical aspects of model fitting. Since our expansion model encapsulates both linear and nonlinear components of the transformation, we propose to decouple and estimate them in a hierarchical manner. In particular, the linear portion of the mapping—corresponding to rotation, scaling, shear, and projective effects—admits a natural decomposition aligned with classical transformation groups such as $\mathrm{SE}(n,\mathbb{R})$.

By isolating and fitting these components in sequence, we not only preserve the interpretability of each transformation stage but also improve numerical stability and convergence. This motivates the following section, where we detail the construction and estimation of the linear substructure prior to introducing higher-order terms.

\subsection{Fitting of Perspective Mapping}\label{sec:linear}

\begin{figure}[t]
  \centering
  \begin{tikzpicture}[>=stealth]
    \node [node font=\large] (Y0) at (-3, 0) {$Y$};
    \node [node font=\large] (Y1) at (-1, 0) {$\widehat{Y}_{\mathrm{rig}}$};
    \node [node font=\large] (Y2) at ( 1, 0) {$\widehat{Y}_{\mathrm{aff}}$};
    \node [node font=\large] (Y3) at ( 3, 0) {$\widehat{Y}_{\mathrm{proj}}$};

    \draw[->] (Y0.east) -- (Y1.west);
    \draw[->] (Y1.east) -- (Y2.west);
    \draw[->] (Y2.east) -- (Y3.west);

    \node [node font=\small] at (-2, 0.30) {$\mathrm{SE}(n,\mathbb{R})$};
    \node [node font=\small] at ( 0, 0.30) {$\frac{\mathrm{AGL}(n,\mathbb{R})}{\mathrm{SE}(n,\mathbb{R})}$};
    \node [node font=\small] at ( 2, 0.30) {$\frac{\mathrm{PGL}(n{+}1,\mathbb{R})}{\mathrm{AGL}(n,\mathbb{R})}$};
  \end{tikzpicture}

  \vspace{1mm}
  \caption{\textbf{Hierarchical decomposition of linear components.}
  The registration pipeline follows the subgroup chain
  $\mathrm{SE}(n,\mathbb{R})\subset \mathrm{AGL}(n,\mathbb{R})\subset \mathrm{PGL}(n{+}1,\mathbb{R})$.
  Each stage fits residual degrees of freedom along the corresponding homogeneous direction:
  first rigid motions $\mathrm{SE}(n,\mathbb{R})$, then affine modes modulo rigid
  $\mathrm{AGL}(n,\mathbb{R})/\mathrm{SE}(n,\mathbb{R})$ (scales, shears), and finally projective modes modulo affine
  $\mathrm{PGL}(n{+}1,\mathbb{R})/\mathrm{AGL}(n,\mathbb{R})$.}
  \label{fig:group_hierarchy}
  \vspace{-0.2mm}
\end{figure}

\paragraph{Hierarchical linear–projective decomposition}
We decompose the linear part of the analytic mapping into a sequence of sub-mappings—\emph{rotation}, \emph{affine stretch/shear}, and \emph{perspective distortion}—which mirrors the structural hierarchy
\[
\mathrm{SE}(n,\mathbb{R})\ \subset\ \mathrm{AGL}(n,\mathbb{R})\ \subset\ \mathrm{PGL}(n{+}1,\mathbb{R}),
\]
and enables stepwise fitting with separate optimization of each component, thereby reducing complexity and improving stability.%
\footnote{Throughout, the notation “$G/H$” denotes a \emph{homogeneous space} (quotient set) rather than a quotient group.}

\paragraph{Rigid $\to$ affine}
Following an initial orthogonal fit~\cite{ICP1992}, we lift to an affine model. Given the fitted linear part \(A\in \mathrm{GL}(n,\mathbb{R})\), we compute its QR (or polar) factorization
\[
A = Q R \quad\text{with}\quad Q\in \mathrm{SO}(n,\mathbb{R}),\ \ R\ \text{upper-triangular (positive diagonal),}
\]
or equivalently \(A=Q S\) with \(S\in \mathrm{Sym}_n^{+}\). This separates the rotational factor \(Q\) from the non-rotational linear factor (\(R\) or \(S\)), where the latter represents a point in the homogeneous space \(\mathrm{GL}(n,\mathbb{R})/\mathrm{SO}(n,\mathbb{R})\).
To pass from the orthogonal stage to the \emph{affine} stage, we keep the previously fitted \(Q\) fixed and estimate the pair \((R,\mathbf t)\) with \(\mathbf t\in\mathbb{R}^{n}\):
\[
x \ \mapsto\ Q\,R\,x + \mathbf t .
\]
\emph{In addition, we estimate a global-frame translation increment \(\delta \mathbf t_A\) and \textbf{accumulate} it to the current translation} (left-multiplicative update):
\[
\mathbf t \leftarrow \mathbf t + \delta \mathbf t_A, 
\qquad
(I,\delta \mathbf t_A)\circ(QR,\mathbf t)=(QR,\,\mathbf t+\delta \mathbf t_A),
\]
which leaves the linear factor \(QR\) unchanged and improves numerical stability; details are discussed later.
In either parameterization, augmenting the rigid component by the extra linear degrees of freedom yields the canonical identification
\[
\mathrm{AGL}(n,\mathbb{R})/\mathrm{SE}(n,\mathbb{R})\ \cong\ \mathrm{GL}(n,\mathbb{R})/\mathrm{SO}(n,\mathbb{R}),
\]
i.e., “affine modulo rigid” corresponds to “linear modulo orthogonal”. This staged design—first \((Q,\mathbf t)\), then \((R,\delta \mathbf t_A)\) with the accumulation \(\mathbf t \leftarrow \mathbf t+\delta \mathbf t_A\)—enhances controllability, numerical stability, and interpretability during optimization.

\paragraph{Affine $\to$ perspective (restricted projective refinement with translation compensation)}
After completing the affine mapping, we \emph{optionally} perform a restricted projective refinement \emph{together with a global translation compensation}. 
Concretely, in 2D we estimate the two projective ``tilt'' parameters $(a_6,a_7)$ \emph{and} a translation increment $(d\mathbf t_x,d\mathbf t_y)$ accumulated in the global frame. 
The corresponding (restricted) homography uses a last row $[\,a_6\ a_7\ 1\,]$, while the top-right entries absorb the translation compensation:
\[
\begin{bmatrix}
x' \\[2pt] y' \\[2pt] w'
\end{bmatrix}
=
\begin{bmatrix}
a_0 & a_1 & a_2 + d\mathbf t_x \\
a_3 & a_4 & a_5 + d\mathbf t_y \\
a_6 & a_7 & 1
\end{bmatrix}
\begin{bmatrix}
x_0 \\[2pt] y_0 \\[2pt] 1
\end{bmatrix},
\qquad
(x,y)=\Big(\tfrac{x'}{w'},\ \tfrac{y'}{w'}\Big).
\]
Linearizing with respect to $(a_6,a_7,d\mathbf t_x,d\mathbf t_y)$ around the current estimate 
$(a_6^o,a_7^o,d\mathbf t_x^o,d\mathbf t_y^o)$ yields the stacked error equations
\begin{equation}\label{eq:perspect_t_re}
\begin{aligned}
\begin{bmatrix}
v_1 \\[2pt] v_2
\end{bmatrix}
\;=\;
\underbrace{\begin{bmatrix}
x_0\,x' & y_0\,x' & -1 & \ \ 0 \\
x_0\,y' & y_0\,y' & \ \ 0 & -1 
\end{bmatrix}}_{J(a_6,a_7,d\mathbf t_x,d\mathbf t_y)}
\begin{bmatrix}
\delta a_6 \\[2pt] \delta a_7 \\[2pt] \delta d\mathbf t_x \\[2pt] \delta d\mathbf t_y
\end{bmatrix}
\;-\;
\begin{bmatrix}
a_0^o x_0 + a_1^o y_0 + a_2^o + d\mathbf t_x^o - x' - x_0 x' a_6^o - y_0 x' a_7^o \\
a_3^o x_0 + a_4^o y_0 + a_5^o + d\mathbf t_y^o - y' - x_0 y' a_6^o - y_0 y' a_7^o
\end{bmatrix},
\end{aligned}
\end{equation}
from which we form normal equations and solve by least squares until convergence (cf.\ indirect adjustment~\cite{ogundare2018understanding}). 
The parameter update is 
\[
(a_6,a_7,d\mathbf t_x,d\mathbf t_y)
\leftarrow
(a_6^o,a_7^o,d\mathbf t_x^o,d\mathbf t_y^o) + (\delta a_6,\delta a_7,\delta d\mathbf t_x,\delta d\mathbf t_y),
\]
and we \emph{accumulate} the global translation as
\[
\mathbf t \ \leftarrow\ \mathbf t + \delta \mathbf t_P,\qquad 
\delta \mathbf t_P \equiv (\,\delta d\mathbf t_x,\,\delta d\mathbf t_y\,)^\top,
\]
leaving the previously estimated linear part $(Q,R)$ unchanged. 
Equivalently, left-multiplying by the translation matrix 
$T(\delta \mathbf t_P)=\bigl[\begin{smallmatrix}1&0&\delta d\mathbf t_x\\[1pt]0&1&\delta d\mathbf t_y\\[1pt]0&0&1\end{smallmatrix}\bigr]$ 
updates only the top two rows of the homography and leaves the bottom row $[\,a_6\ a_7\ 1\,]$ intact, hence decoupling the projective tilt from the affine part and improving stability.

From a group-theoretic viewpoint, the affine model lives in $\mathrm{AGL}(n,\mathbb{R})$, and the projective refinement adds the residual modes along the homogeneous space 
$\mathrm{PGL}(n{+}1,\mathbb{R})/\mathrm{AGL}(n,\mathbb{R})$ (``projective modulo affine''). 
In 2D this corresponds to a plane homography in $\mathrm{PGL}(3,\mathbb{R})$ realized by $(a_6,a_7)$, while $(d\mathbf t_x,d\mathbf t_y)$ implements a global-frame translation accumulation that does not modify the linear factors already fitted.

\paragraph{Orthogonality of stages}
At the zero-tilt linearization point $(a_6,a_7)=(0,0)$ and after zero-mean/unit-scale normalization, the Jacobian columns for the projective-tilt parameters $(a_6,a_7)$ (acting via the denominator $w'$) are nearly orthogonal to those of the affine block (acting on the numerator). Thus the tilt increment $(\delta a_6,\delta a_7)$ can be estimated without disturbing the optimal affine part; the translation compensation $(\delta d\mathbf t_x,\delta d\mathbf t_y)$ is applied by a left translation that alters only the top-right entries, leaving the linear block and bottom row $[\,a_6\ a_7\ 1\,]$ intact.

\textit{Benefits:} improves numerical stability, simplifies solving in weakly coupled blocks, and keeps each update interpretable (affine vs.\ tilt vs.\ translation).

\paragraph{Summary}
We summarize the fitting sequence as
\begin{equation}\label{eq:orthogonal2projective}
Y \ \xrightarrow{\ \tau_{\mathrm{rig}}\ }\ \widehat{Y}_{\mathrm{rig}}
\ \xrightarrow{\ \tau_{\mathrm{aff}}\ }\ \widehat{Y}_{\mathrm{aff}}
\ \xrightarrow{\ \tau_{\mathrm{proj}}\ }\ \widehat{Y}_{\mathrm{proj}},
\end{equation}
where each arrow “$\rightarrow$” denotes a sequential fitting step—rigid, then affine, then projective with translation compensation. 
The staged process and the associated homogeneous directions are visualized in Fig.~\ref{fig:group_hierarchy}.

\subsection{Algorithm for Fitting Smooth Vector-Valued Functions}\label{sec:analyticFit}

\equref{eq:Matrix_form_anydegree} defines the optimization formulation for the $m$-th order Taylor expansion of point mappings in an $n$-dimensional space. We refer to the corresponding optimization algorithm as \textbf{Analytic Multivariate Vector-Valued Function Fitting (AMVFF)}.

We use the 2D Taylor expansion structure as an example to introduce the AMVFF, which can be easily extended to higher dimension. Assuming the transport mapping in 2D point set registration is smooth, we can express it as a Taylor series. In order to apply the 2D Taylor expansion structure to registration, it is necessary to transform the 2D version of \equref{eq:Observation_equation_anydegree} into an observation equation as follow: 
\begin{eqnarray}\label{eq:adjustment_form}
\begin{aligned}
 &\begin{bmatrix} L_{1}\\L_{2}\end{bmatrix}  = \begin{bmatrix} x_{1}\\x_{2}\end{bmatrix}-\begin{bmatrix}
 a_{0} \\ a_{1}\end{bmatrix}-\begin{bmatrix}a_{2}& a_{3}\\ a_{4}&  a_{5}
\end{bmatrix}\begin{bmatrix}y_{1}-\mathfrak{c}_{1} \\y_{2}-\mathfrak{c}_{2}\end{bmatrix}-\frac{1}{2} \begin{bmatrix}
 a_{6} &  a_{7} &  a_{8}\\a_{9} &  a_{10} &  a_{11}\end{bmatrix}\begin{bmatrix}
 \left ( y_{1}-\mathfrak{c}_{1} \right ) ^{2} \\2\left ( y_{1}-\mathfrak{c}_{1} \right )\left ( y_{2}-\mathfrak{c}_{2} \right )\\\left ( y_{2}-\mathfrak{c}_{2} \right )^{2}\end{bmatrix}
 \\&-\dots -\frac{1}{n!}
 \begin{bmatrix}
  a_{n^2+n}&\dots& a_{n^2+n+k} &\dots& a_{n^2+2n}\\
  a_{n^2+2n+1}&\dots& a_{n^2+2n+1+k} &\dots& a_{n^2+3n+1}
\end{bmatrix}\begin{bmatrix}
 \binom{n}{0}\left ( y_{1}-\mathfrak{c}_{1} \right )^{n} \\\vdots\\\binom{n}{k}  \left ( y_{1}-\mathfrak{c}_{1} \right )^{n-k}\left ( y_{2}-\mathfrak{c}_{2} \right )^{k}\\\vdots \\ \binom{n}{n}\left ( y_{2}-\mathfrak{c}_{2} \right )^{n}\end{bmatrix}.
\end{aligned}
\end{eqnarray}

Here, we adopt an alternating update strategy between the mapping coefficients 
$(a_{0}, \cdots, a_{n^{2} + 3n + 1})$ and the center position $(\mathfrak{c}_{1}, \mathfrak{c}_{2})$.
Specifically, the center parameters are updated via linear least-squares adjustment 
based on the formulation in \eqref{eq:error_form}–\eqref{eq:dcy2}, while the mapping 
coefficients are optimized using a quasi-Newton strategy as described below. 
This decoupled iteration avoids global search and leverages the locality and smoothness 
of the Taylor basis, typically requiring only a few steps to converge.

\subsubsection*{Adjustment Theory for Analytic Mapping}
This section introduces a parameter estimation method based on classical least-squares adjustment theory, which has its origins in geodesy and photogrammetry~\cite{ogundare2018understanding}. While this framework is less common in the computer vision and machine learning communities, it provides a principled and computationally efficient way to estimate the large number of parameters in high-order analytic mappings. In our context, the method is adapted to support the iterative fitting of multivariate vector-valued Taylor models by treating mapping coefficients and center positions as unknowns. We now describe the adjustment procedure for the 2D case, which can be extended to arbitrary dimensions.

We convert the $n^2 + 3n + 2$ mapping coefficients and the 2 location parameters in \equref{eq:adjustment_form} into unknowns, and reformulate them into the standard form of an error equation~\cite{ogundare2018understanding} for $m$ corresponding point pairs:
\begin{eqnarray}\label{eq:error_form}
\begin{aligned}
   V = B \hat{x} - l,
\end{aligned}
\end{eqnarray}
where $B$ is defined for the location parameters $(\mathfrak{c}_1, \mathfrak{c}_2)$ as
\begin{eqnarray}\label{eq:B_form4c}
\begin{aligned}
B = \begin{bmatrix}
 d\mathfrak{c}_{1x}^{1} & d\mathfrak{c}_{2x}^{1} \\
 d\mathfrak{c}_{1y}^{1} & d\mathfrak{c}_{2y}^{1} \\
 \vdots & \vdots \\
 d\mathfrak{c}_{1x}^{m} & d\mathfrak{c}_{2x}^{m} \\
 d\mathfrak{c}_{1y}^{m} & d\mathfrak{c}_{2y}^{m}
\end{bmatrix},
\end{aligned}
\end{eqnarray}
with each element defined as follows:

\begin{eqnarray}\label{eq:dcx1}
\begin{aligned}
d\mathfrak{c}_{1x}^{m} &= - \sum_{i=1}^{n} \sum_{j=0}^{i-1} \frac{1}{i!} \binom{i}{j} (i-j) \, a_{i^2 + i + j} (y_{m,1} - \mathfrak{c}_1)^{i-j-1} (y_{m,2} - \mathfrak{c}_2)^j
\end{aligned}
\end{eqnarray}

\begin{eqnarray}\label{eq:dcx2}
\begin{aligned}
d\mathfrak{c}_{2x}^{m} &= - \sum_{i=1}^{n} \sum_{j=1}^{i} \frac{1}{i!} \binom{i}{j} j \, a_{i^2 + i + j} (y_{m,1} - \mathfrak{c}_1)^{i-j} (y_{m,2} - \mathfrak{c}_2)^{j-1}
\end{aligned}
\end{eqnarray}

\begin{eqnarray}\label{eq:dcy1}
\begin{aligned}
d\mathfrak{c}_{1y}^{m} &= - \sum_{i=1}^{n}\sum_{j=0}^{i-1} \frac{1}{i!} \binom{i}{j} (i-j) \, a_{i^2 + 2i + 1 + j} (y_{m,1} - \mathfrak{c}_1)^{i-j-1} (y_{m,2} - \mathfrak{c}_2)^j
\end{aligned}
\end{eqnarray}

\begin{eqnarray}\label{eq:dcy2}
\begin{aligned}
d\mathfrak{c}_{2y}^{m} &= - \sum_{i=1}^{n}\sum_{j=1}^{i-1} \frac{1}{i!} \binom{i}{j} j \, a_{i^2 + 2i + 1 + j} (y_{m,1} - \mathfrak{c}_1)^{i-j} (y_{m,2} - \mathfrak{c}_2)^{j-1}
\end{aligned}
\end{eqnarray}

where $y_{m,1}$ and $y_{m,2}$ denote the $x$ and $y$ coordinates of the $m$-th moving point, respectively.
$B$ is defined for the mapping parameters \((a_{0}, \cdots, a_{n^2+3n+1})\) as
\begin{equation}\label{eq:B_form}
\begin{aligned}
B = \begin{bmatrix}
 S_{1}^{0}  & \cdots  & S_{1}^{n}\\
 \vdots     & \ddots  & \vdots \\
 S_{m}^{0}  & \cdots  & S_{m}^{n}
\end{bmatrix},
\end{aligned}
\end{equation}
where each block \( S_{m}^{k} \) denotes the basis vector constructed from the \( k \)th-order Taylor monomials evaluated at the \( m \)th sample point.

The monomial vector \( Y_{m}^{n} \), used to build the Taylor basis, is defined as
\begin{equation}\label{eq:Ymn_form}
\begin{aligned}
Y_{m}^{n} = \frac{1}{n!}
\begin{bmatrix}
 \binom{n}{0} \left( y_{m,1} - \mathfrak{c}_1 \right)^n \\
 \vdots \\
 \binom{n}{t} \left( y_{m,1} - \mathfrak{c}_1 \right)^{n - t} \left( y_{m,2} - \mathfrak{c}_2 \right)^t \\
 \vdots \\
 \binom{n}{n} \left( y_{m,2} - \mathfrak{c}_2 \right)^n
\end{bmatrix}^{\top}.
\end{aligned}
\end{equation}

We now deduce \( S_{m}^{n} \) in \equref{eq:B_form}, resulting in
\begin{equation}\label{eq:Smn_form}
\begin{aligned}
S_{m}^{n} = \begin{bmatrix}
 Y_{m}^{n}  & 0 & \cdots & 0 \\
 0 & \cdots & 0 & Y_{m}^{n}
\end{bmatrix},
\end{aligned}
\end{equation}
where each row block in \( S_{m}^{n} \) is padded with \( n+1 \) zeros accordingly.

The vector \( l \) in \equref{eq:error_form} is formulated as:
\begin{equation}\label{eq:l_form}
\begin{aligned}
l =
\begin{bmatrix}
x_{1,1} - a_{1} - AY_{1,1}^{1} - \cdots - AY_{1,1}^{n} \\
x_{1,2} - a_{2} - AY_{1,2}^{1} - \cdots - AY_{1,2}^{n} \\
\vdots \\
x_{m,1} - a_{1} - AY_{m,1}^{1} - \cdots - AY_{m,1}^{n} \\
x_{m,2} - a_{2} - AY_{m,2}^{1} - \cdots - AY_{m,2}^{n}
\end{bmatrix},
\end{aligned}
\end{equation}
noting that both mapping and center parameters share the same residual vector expression in \eqref{eq:error_form}.

We define \( AY_{m,1}^{n} \) and \( AY_{m,2}^{n} \) as:
\begin{equation}\label{eq:ay1_form_sigma}
AY_{m,1}^{n} = \sum_{k=0}^{n} \frac{1}{n!} \binom{n}{k} \, a_{n^2 + n + 2k} \, (y_{m,1} - \mathfrak{c}_1)^{n-k} (y_{m,2} - \mathfrak{c}_2)^k
\end{equation}

\begin{equation}\label{eq:ay2_form_sigma}
AY_{m,2}^{n} = \sum_{k=0}^{n} \frac{1}{n!} \binom{n}{k} \, a_{n^2 + n + 1 + 2k} \, (y_{m,1} - \mathfrak{c}_1)^{n-k} (y_{m,2} - \mathfrak{c}_2)^k
\end{equation}

The adjustment criterion is:
\begin{equation}\label{eq:criterion_form}
V^\top P V = \text{minimum},
\end{equation}
where \( P \) is the identity matrix in this case.

Define:
\begin{equation}\label{eq:NBB_W}
M = B^\top P B, \quad W = B^\top P l.
\end{equation}

Then the normal equation for the least-squares adjustment becomes~\cite{ogundare2018understanding}:
\begin{equation}\label{eq:normal_equation}
M \hat{v} = W.
\end{equation}

We solve the update:
\begin{equation}\label{eq:xm_form}
\hat{v} = 
\begin{bmatrix}
\delta a_{1} & \cdots & \delta a_{n^2 + n}
\end{bmatrix},
\end{equation}

and obtain the updated parameter vector:
\begin{equation}\label{eq:adjustment_result}
\hat{X} = X^{o} + \hat{v},
\end{equation}
where \( X^o \) and \( \hat{X} \) denote the initial and updated estimates of the parameter vector, respectively.

The above linear least squares algorithm is further enhanced by incorporating the BFGS\footnote{Broyden–Fletcher–Goldfarb–Shanno} update, which adaptively adjusts the weighting matrix during optimization. This dynamic reweighting improves the convergence rate by enabling the entire optimization procedure to follow a quasi-Newton strategy, thereby accelerating the solution of $\hat{x}$ and enhancing numerical stability. In addition, this quasi-Newton approach adapts well to the high-dimensional parameter space induced by the Taylor expansion, maintaining robustness and efficiency even in the presence of complex nonlinear mappings.

\begin{lemma}[Sufficient Condition for Unique and Stable Structured Taylor Fitting]
Let $f : \mathbb{R}^n \to \mathbb{R}^n$ be a smooth mapping, and let $T_m(y)$ denote its order-$m$ expansion in the structured Taylor basis centered at $\mathfrak{c}$:
\[
T_m(y) = \sum_{k=0}^m \frac{1}{k!} \mathcal{J}^{[k]}(f)(\mathfrak{c}) \cdot \phi^{[k]}(y - \mathfrak{c}).
\]

Suppose we are given $M$ correspondence pairs $\{(y_i, f(y_i))\}_{i=1}^M$ used to estimate the generalized derivative matrices $\mathcal{J}^{[k]} \in \mathbb{R}^{n \times N_k}$ for $k = 0, \dots, m$, where $N_k = \binom{n + k - 1}{n-1}$ is the number of $k$-th order monomial terms.

Then the structured least-squares fitting problem admits a unique and numerically stable solution if the following conditions are satisfied:
\begin{enumerate}
    \item[(C1)] (\textbf{Minimal data condition})  
    The total number of point correspondences satisfies:
    \[
    M \ge \left\lceil \frac{1}{n} \sum_{k = 0}^{m} n \cdot N_k \right\rceil = \left\lceil \sum_{k=0}^m N_k \right\rceil.
    \]
    \item[(C2)] (\textbf{Geometric distribution condition})  
    The data points $\{y_i\}$ are assumed to be quasi-isotropically distributed around the expansion center $\mathfrak{c}$, 
with a sufficiently small radius 
\( r = \max_i \|y_i - \mathfrak{c}\| \).
    \item[(C3)] (\textbf{Algebraic conditioning})  
    The monomial matrix $\Phi \in \mathbb{R}^{M \times \sum_k N_k}$ formed by evaluating $\phi^{[k]}(y_i - \mathfrak{c})$ is full rank and well-conditioned, i.e., $\kappa(\Phi) \ll 10^4$ \protect\footnote{%
Here, $\kappa(\Phi)$ denotes the condition number of the monomial matrix $\Phi$, defined as the ratio between its largest and smallest singular values. A large condition number indicates near-linear dependence among the columns of $\Phi$, which leads to numerical instability in least-squares estimation. In practice, $\kappa(\Phi) \gg 10^4$ suggests potential overfitting, poor excitation of higher-order terms, and amplification of noise in the fitted coefficients.%
}.
    \item[(C4)] (\textbf{Optimization initialization})  
    If an iterative solver (e.g., quasi-Newton) is used, the initial guess lies within the local basin of attraction and trust-region or line-search damping is applied.
\end{enumerate}

Under these conditions, the solution to the structured Taylor fitting problem is unique, numerically stable, and exhibits robust convergence under noise.
\end{lemma}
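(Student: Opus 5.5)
The fitting problem should be recast as a standard linear least-squares problem. Uniqueness will then follow from a rank condition, and stability from classical perturbation theory. Stack the unknowns $\mathcal{J}^{[0]},\dots,\mathcal{J}^{[m]}$ (with the $1/k!$ factors absorbed) into a coefficient matrix $C\in\mathbb{R}^{n\times P}$, where $P=\sum_{k=0}^m N_k$. Let $\Phi\in\mathbb{R}^{M\times P}$ be the monomial matrix whose $i$-th row is $[\phi^{[0]},\phi^{[1]}(y_i-\mathfrak{c})^\top,\dots,\phi^{[m]}(y_i-\mathfrak{c})^\top]$, and let $F\in\mathbb{R}^{M\times n}$ collect the targets $f(y_i)^\top$. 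The objective~\eqref{eq:Matrix_form_anydegree} is then $\|F-\Phi C^\top\|_F^2$. As in the componentwise interpretation preceding Theorem~\ref{thm:main}, this decouples into $n$ independent scalar problems $\min_{c_j}\|F_{:,j}-\Phi c_j\|^2$ that share the same design matrix $\Phi$. This is the reason (C1) counts $\sum_k N_k$ rather than $n\sum_k N_k$.

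\textbf{Step 1 (uniqueness).} Each scalar problem is strictly convex exactly when $\Phi^\top\Phi$ is positive definite, i.e.\ when $\operatorname{rank}\Phi=P$. Condition (C1) gives $M\ge P$, which is necessary for full column rank. Condition (C3) asserts full rank outright, so the normal equations~\eqref{eq:normal_equation} have the unique solution $c_j=(\Phi^\top\Phi)^{-1}\Phi^\top F_{:,j}$. I would also explain why (C2) makes (C3) plausible. Full rank fails only if the points lie on an algebraic hypersurface of degree $\le m$. Quasi-isotropic sampling excludes this generically, since a nonzero polynomial of degree $\le m$ cannot vanish on a set that is unisolvent for $\mathcal{P}_m$.

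\textbf{Step 2 (stability).} I would invoke the standard least-squares perturbation bound (Wedin/Golub--Van Loan). For perturbations $\delta F$ (noise) and $\delta\Phi$ (for example, errors in $\mathfrak{c}$), the relative coefficient error is bounded by $\kappa(\Phi)\,\frac{\|\delta F\|}{\|F\|}$ plus a term of order $\kappa(\Phi)^2\tan\theta\,\frac{\|\delta\Phi\|}{\|\Phi\|}$. Hence $\kappa(\Phi)\ll10^4$ yields bounded noise amplification. The role of the small radius in (C2) is twofold. Theorem~\ref{thm:structured_taylor} bounds the model-mismatch residual $R_{m+1}$ by $O(r^{m+1})$, which keeps $\tan\theta$ small. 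Moreover, after rescaling $y-\mathfrak{c}$ by $r$ the columns are $O(1)$, so a diagonal column scaling controls $\kappa$.

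\textbf{Step 3 (iterative solver).} When $\mathfrak{c}$ is also updated, or BFGS is used, the problem is nonlinear. Near a solution, the Gauss--Newton Hessian is $\Phi^\top\Phi$ augmented by the center-Jacobian block from~\eqref{eq:B_form4c}. Full rank makes this matrix positive definite, so the minimizer is locally isolated. Under (C4), standard damped quasi-Newton convergence theory (Dennis--Moré, with line search or trust region) then gives convergence to this unique local minimizer.

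\textbf{Main obstacle.} The hard part is making ``numerically stable'' and ``robust under noise'' precise. Conditions (C2) and (C3) are heuristic and pull in tension: shrinking $r$ reduces the remainder but makes the raw monomial columns scale like $r^k$, which inflates $\kappa$. My plan is to state the result for the column-normalized matrix $\Phi D^{-1}$ with $D=\operatorname{diag}(r^k)$, which is invariant to this scaling. I would then present uniqueness and local convergence as the rigorous content. The stability claim would be framed as the explicit perturbation bound in terms of $\kappa$, rather than as an absolute guarantee.
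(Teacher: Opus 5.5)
Your Steps 1 and 2 follow the paper's own sketch, and they make it sharper. In both, (C1) and (C3) give a full-column-rank design and hence a unique solution of the normal equations, while (C2) and (C3) control noise amplification; you add an explicit least-squares perturbation bound. Your remark that raw monomial columns scale like $r^k$ is correct and goes beyond the paper: shrinking $r$ in (C2) pushes $\kappa(\Phi)$ toward violating (C3). Three small points on these steps:
\begin{itemize}
\item (C3) is a hypothesis on $\Phi$ itself, so you can apply the bound with $\kappa(\Phi)$ directly. Passing to $\Phi D^{-1}$ proves stability of the rescaled coefficients $r^k\mathcal{J}^{[k]}/k!$, which is a reasonable variant rather than the stated claim.
\item The first-order Golub--Van Loan bound also has a term $\kappa(\Phi)\|\delta\Phi\|/\|\Phi\|$ and a factor $1/\cos\theta$, which your version drops. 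Even with zero residual the coefficients remain sensitive to $\delta\Phi$.
\item The unisolvence remark is circular, since unisolvence for $\mathcal{P}_m$ is literally $\operatorname{rank}\Phi=P$. It is harmless because (C3) assumes it.
\end{itemize}

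The step that fails is Step 3, in the case where $\mathfrak{c}$ is updated jointly with all of $\mathcal{J}^{[0]},\dots,\mathcal{J}^{[m]}$. That joint parametrization is not identifiable. Every polynomial of degree $\le m$ in $y-\mathfrak{c}$ re-expands as one in $y-\mathfrak{c}'$ for any $\mathfrak{c}'$, so each fitted map comes from an $n$-dimensional family of parameter vectors. Infinitesimally, $\partial/\partial\mathfrak{c}_j$ of $\sum_{k}\frac{1}{k!}\mathcal{J}^{[k]}\phi^{[k]}(y-\mathfrak{c})$ is a polynomial of degree $\le m-1$ in $y-\mathfrak{c}$. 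Hence the center-Jacobian columns in \eqref{eq:B_form4c} lie in the column span of the blockwise monomial matrix.

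The augmented Gauss--Newton matrix therefore has an $n$-dimensional null space, however well conditioned $\Phi$ is. So ``full rank makes this matrix positive definite'' and ``the minimizer is locally isolated'' are false, and you cannot invoke local convergence theory to an isolated minimizer. In fact, at the linear least-squares optimum the residual is orthogonal to $\operatorname{range}\Phi$, so the gradient in $\mathfrak{c}$ vanishes and center updates are redundant.

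The repair is to keep $\mathfrak{c}$ fixed, as the lemma does, since only the $\mathcal{J}^{[k]}$ are estimated. The objective is then a strictly convex quadratic whose block-diagonal Hessian has blocks $\Phi^\top\Phi\succ 0$. BFGS with a Wolfe line search, or a trust-region method, then converges globally to the unique minimizer of Step 1, and the basin requirement in (C4) is automatic. This is the precise content behind the paper's one-line appeal to a cost surface that is ``convex enough.'' If you want a variable center, you can only assert uniqueness of the fitted polynomial modulo this recentering gauge, not uniqueness of the parameters. Relatedly, using BFGS does not by itself make a linear least-squares problem nonlinear.
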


\begin{proof}[Sketch]
Condition (C1) ensures that the number of linear constraints $n \cdot M$ is no smaller than the number of unknowns $P_m = \sum_k n \cdot N_k$, and the monomial matrix is full rank, so the least-squares solution exists and is unique.

Conditions (C2) and (C3) guarantee that the monomial basis is well-excited and that the system is not ill-conditioned, preventing numerical amplification of noise. These are standard assumptions in polynomial approximation and inverse problem stability.

Condition (C4) ensures that for iterative solvers such as quasi-Newton methods, the optimization remains within a region where the cost surface is smooth and convex enough for convergence.

Together, these conditions guarantee the solvability and stability of the structured Taylor fitting process.
\end{proof}

The following lemma characterizes how the approximation order increases when analytic mappings are composed iteratively, as in our progressive fitting algorithm. It provides theoretical support for the expressiveness growth across iterations.

\begin{lemma}[Order Amplification in Composition of Analytic Mappings]\label{lemma:composition_order}
Let \( \tau_i : \mathbb{R}^n \to \mathbb{R}^n \) be a sequence of analytic mappings, each admitting an expansion of order at most \( O_i \) in the structured Taylor basis centered at \( \mathfrak{c}_i \). i.e.,
\[
\tau_i(y) = \sum_{k=0}^{O_i} \frac{1}{k!} \mathcal{J}^{[k]}(\tau_i)(\mathfrak{c}_i) \cdot \phi^{[k]}(y - \mathfrak{c}_i) + R_i(y),
\]
where \( \mathcal{J}^{[k]}(\tau_i)(\mathfrak{c}_i) \in \mathbb{R}^{n \times N_k} \) is the $k$-th order generalized derivative matrix, and \( \phi^{[k]}(y - \mathfrak{c}_i) \in \mathbb{R}^{N_k} \) is the corresponding generalized monomial vector.

Then, the composition
\[
\tau = \tau_n \circ \cdots \circ \tau_1
\]
is also analytic, and its Taylor expansion around the induced center admits a maximum order bounded by
\[
O = \prod_{i=1}^{n} O_i.
\]
\end{lemma}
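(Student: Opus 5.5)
The proof has two parts: analyticity of the composition, and the order bound. The order bound is the substantive one. Throughout I write $N$ for the number of composed maps, to avoid a clash with the ambient dimension $n$; the statement's $\prod_{i=1}^{n}$ is read as $\prod_{i=1}^{N}$.

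For analyticity, I would invoke the classical fact that compositions of real-analytic maps are real analytic. Componentwise, each $f_j\circ\tau_1$ is obtained by substituting the convergent power series of $\tau_1$ at $\mathfrak{c}_1$ into the convergent series of $f_j$ at the induced center $\tau_1(\mathfrak{c}_1)$. Convergence near $\mathfrak{c}_1$ follows from the standard majorant (Cauchy estimate) argument. Induction on $N$ then gives analyticity of $\tau_N\circ\cdots\circ\tau_1$ near $\mathfrak{c}_1$, with induced centers $\mathfrak{c}_{i+1}=\tau_i(\mathfrak{c}_i)$. By Theorem~\ref{thm:main}(i), the structured Taylor expansion of the composition is then unique.

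For the order bound, I would first fix how the statement must be read. The algorithm composes the \emph{truncated} structured models $T_i(y)=\sum_{k=0}^{O_i}\frac{1}{k!}\mathcal{J}^{[k]}(\tau_i)(\mathfrak{c}_i)\,\phi^{[k]}(y-\mathfrak{c}_i)$, that is, with $R_i\equiv 0$. The claim is then that $T_N\circ\cdots\circ T_1$ has a structured expansion whose generalized derivative matrices $\mathcal{J}^{[k]}$ vanish for $k>\prod_i O_i$. The proof goes as follows.

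\begin{enumerate}
\item Each $T_i$ is a vector of polynomials of total degree $\le O_i$, since each entry of $\phi^{[k]}(y-\mathfrak{c}_i)$ is a monomial of degree $k$ in $y-\mathfrak{c}_i$.
\item Re-centering does not change total degree. Expanding $(y-\mathfrak{c})^\alpha$ with $y-\mathfrak{c}=(y-\mathfrak{c}')+(\mathfrak{c}'-\mathfrak{c})$ by the binomial theorem yields only monomials of degree $\le|\alpha|$ in $y-\mathfrak{c}'$.
\item For polynomial maps $p,q$ one has $\deg(p\circ q)\le\deg p\cdot\deg q$. A monomial of degree $k\le\deg p$ in the components of $q$ is a product of $k$ polynomials of degree $\le\deg q$, hence has degree $\le k\deg q$.
\item Induction on $N$ gives $\deg(T_N\circ\cdots\circ T_1)\le\prod_i O_i$. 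By Theorem~\ref{thm:structured_taylor} with zero remainder, equivalently by uniqueness of Taylor coefficients, all $\mathcal{J}^{[k]}$ with $k>\prod_i O_i$ vanish.
\end{enumerate}

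The main obstacle is the interpretation when the remainders $R_i$ are nonzero. A genuinely analytic, non-polynomial $\tau_i$ has infinitely many nonzero $\mathcal{J}^{[k]}$ by Lemma~\ref{lemma:remainder_nonzero}. The composition then has no finite order, so the bound cannot hold literally for the exact maps. I would therefore state explicitly that $O$ bounds the order of the polynomial part generated by composing the truncations. I would also verify that the remainders only contribute terms that are $o(\|y-\mathfrak{c}_1\|^{\min_i(O_i)})$ near the center, so they do not interfere with the low-order structure. Finally, I would remark that the bound is sharp generically, for example when the top-degree components do not cancel, as with $y\mapsto y^{O_i}$ componentwise.
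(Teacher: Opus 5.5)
Your argument is correct, and its core is the same as the paper's: total degree multiplies (at most) when polynomial maps are composed. The paper's sketch derives this from the multivariate Fa\`a di Bruno formula. It asserts that each composition ``multiplies the highest contributing monomial degree from the previous layer by the current layer's maximal order'' and omits the coefficients. You instead use the elementary bound $\deg(p\circ q)\le\deg p\cdot\deg q$, plus the observation that re-centering preserves total degree; the paper leaves that point buried in the phrase ``induced center.''

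The more substantive difference is that you state what the paper's sketch tacitly assumes. The product bound concerns compositions of the truncated models $T_i$ (i.e.\ $R_i\equiv 0$). It cannot hold literally for non-polynomial analytic factors, because every analytic map admits an ``order-$O_i$ expansion plus remainder.'' For example, componentwise $\sin\circ\sin$ has nonzero $\mathcal{J}^{[k]}$ for infinitely many $k$. What Fa\`a di Bruno would add is explicit formulas for the composite's generalized derivative matrices in terms of those of the factors. These are not needed for the degree bound, so your route is shorter and more transparent about the hypothesis actually used.

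A few of your side remarks need adjusting.
\begin{itemize}
\item ``The composition then has no finite order'' is only generic: for $n=1$, $\operatorname{arcsinh}\circ\sinh$ is the identity.
\item That a non-polynomial analytic map has infinitely many nonzero $\mathcal{J}^{[k]}(\mathfrak{c})$ follows from convergence of its Taylor series near $\mathfrak{c}$ together with the identity theorem. Lemma~\ref{lemma:remainder_nonzero} by itself only says that $f$ differs from its degree-$m$ Taylor polynomial near $\mathfrak{c}$; it does not locate nonzero derivatives at $\mathfrak{c}$.
\item Your estimate that the remainders contribute only $o(\|y-\mathfrak{c}_1\|^{\min_i O_i})$ requires the truncations to be taken at chained centers $\mathfrak{c}_{i+1}=(\tau_i\circ\cdots\circ\tau_1)(\mathfrak{c}_1)$. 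With independently chosen centers, as in the fitting algorithm, $R_{i+1}$ is evaluated away from $\mathfrak{c}_{i+1}$ and need not be small.
\end{itemize}
None of this affects your main argument, which works for arbitrary centers thanks to your re-centering step.
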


\vspace{2mm}

\begin{proof}[Sketch]
The closure of analyticity under composition ensures that the composite function \( \tau \) remains analytic. The expansion of \( \tau = \tau_n \circ \cdots \circ \tau_1 \) can be systematically derived using the multivariate version of the Fa\`a di Bruno formula. Each composition multiplies the highest contributing monomial degree from the previous layer by the current layer's maximal order, leading to a total order bounded by the product \( \prod_i O_i \). The structure of each expansion in terms of generalized derivative matrices and monomial vectors is preserved, though the resulting coefficients grow combinatorially in complexity. Detailed coefficient forms are omitted here.
\end{proof}

This lemma demonstrates that iterative fitting using low-order Taylor approximations achieves exponential expressive growth. For example, using second-order expansions \( O_i = 2 \), the final approximation \( \tau \) effectively spans the function space of \( 2^n \)-order analytic maps. This enables our algorithm to approximate highly nonlinear deformations while maintaining per-iteration efficiency and avoiding the instability of direct high-order fitting.

\begin{remark}{operating regimes}
Low-order ($m\!=\!1$–$2$) Taylor models are accurate in several common settings:
(i) \emph{planar patterns under mild tilt}, where a homography is well-approximated by affine\,+\,quadratic when $|a_6 x + a_7 y|\!\ll\!1$;
(ii) \emph{camera lens distortion} near the principal point, where classical radial\,+\,tangential models (Brown--Conrady, division model, OpenCV) are analytic or rational and admit stable low-order Taylor approximations in $(x,y)$;
(iii) \emph{small, low-curvature surface patches}, where the in-plane warp has $O(r^3)$ remainder for patch radius $r$ and maximum curvature $\kappa_{\max}$ with $\kappa_{\max} r^2\!\ll\!1$.
Outside these regimes (e.g., strong fisheye, large nonlocal displacements, or heterogeneous warps), we \emph{do not} rely on a single global polynomial; instead we use either the restricted projective refinement (Sec.~\ref{sec:regist}) or \emph{structured Taylor lifting with a few centers} (Sec.~\ref{sec:experiment}), activating higher order only near convergence. This strategy retains numerical stability while preserving approximation power.
\end{remark}

\section{Fast Registration Using the AMVFF Framework}
\label{sec:regist}

AMVFF can be embedded into various registration frameworks to enable efficient and robust non-rigid point set registration, including ICP, RPM, CPD, BCPD, and others. In this work, we present a specific instantiation within the classical ICP framework, implementing a fast non-rigid registration algorithm named Analytic Iterative Closest Point (Analytic-ICP). Unlike the core theory in Section~\ref{sec:main}, where the analytic mapping is assumed known and correspondences are fixed, here we explicitly compute point correspondences via nearest-neighbor search, as is standard in ICP.

We alternate between (i) updating correspondences—either hard nearest-neighbor (closest-point) ICP matches or soft assignments (RPM/CPD-style)—and (ii) optimizing the AMVFF objective in Section~\ref{sec:main} conditioned on those correspondences to update the mapping parameters.

\subsection{Motivation}

Most non-rigid registration algorithms are built upon the theory of reproducing kernels~\cite{maiseli2017recent,2022Survey}, constructing deformation fields using local polynomial, spline, or Gaussian basis functions. While these models are expressive enough to capture large deformations, they are prone to overfitting in the presence of singularities, as highlighted in manifold spline theory~\cite{gu2005manifold}. Moreover, the computational complexity of non-rigid registration is significantly higher than that of rigid registration. In cases involving only minor deformations, existing non-rigid frameworks often fail to achieve both robustness and efficiency.

However, high efficiency, accuracy, and strong transferability are indispensable requirements in industrial production~\cite{bergstrom2011repeated}. High detection accuracy is essential for achieving the goal of ``replacing humans with machines,'' which not only improves automation but also reduces the costs associated with quality control and facilitates broader technological adoption. 

Moreover, to ensure that production throughput is not compromised, the algorithm must respond within a sufficiently low latency. Transferability is equally critical, as the algorithm may need to be seamlessly adapted across different production lines and heterogeneous operating environments.

\subsection{Analytic-ICP: Point Set Registration via AMVFF}\label{sec:analyticicp}

ICP is used to update point correspondences, while AMVFF is used to fit a smooth transport map conditioned on the current correspondences. After a rigid and an affine stage, the pipeline branches into two alternative refinement routes:
\begin{equation}\label{eq:orthogonal2Analytic}
Y \xrightarrow{\tau_{\mathrm{rig}}} \widehat{Y}_{\mathrm{rig}}
\xrightarrow{\tau_{\mathrm{aff}}} \widehat{Y}_{\mathrm{aff}}
\xrightarrow{\;\;\text{either}\;\;}
\begin{cases}
\widehat{Y}_{\mathrm{proj}}, & \text{(Route A: restricted projective refinement, } \widehat{Y}_{\mathrm{proj}}=\tau_{\mathrm{proj}}(\widehat{Y}_{\mathrm{aff}})\text{)}\\[2pt]
\widehat{Y}^{(n)}, & \text{(Route B: AMVFF / Taylor lifting, } \widehat{Y}^{(n)}=(\tau^{(n)}\circ\cdots\circ\tau^{(2)})(\widehat{Y}_{\mathrm{aff}})\text{)}
\end{cases}
\end{equation}

\noindent
Here $\widehat{Y}^{(n)}=\tau^{(n)}(\widehat{Y}_{\mathrm{aff}})$ denotes the transformed set under the $n$-th order Taylor model.

In practice, we start from a low truncation order (typically $m=2$) after the rigid/affine initialization, and increase $m$ up to the cap $m_w$ only when the residual stagnates. Although each AMVFF update fits only a low-order Taylor model, the approximation capacity of the overall deformation grows through iterative composition (cf.\ Lemma~\ref{lemma:composition_order}): the effective polynomial degree of the composed map is bounded by the product of the per-iteration truncation orders. This improves final accuracy while keeping each individual fitting step numerically stable and computationally lightweight.

\begin{algorithm}[t]
  \caption{Staged rigid/affine (optional projective) initialization used in Analytic-ICP}
  \label{alg:offline}
  \begin{algorithmic}[1]
    \Require 
      $X$~(fixed point set), 
      $Y$~(moving point set), 
      $\bar{\nu}$~(residual threshold), 
      $\psi$~(maximum iteration steps, typically $\leq 4$)
    \Ensure 
      $\mathcal{C}$~(correspondence),
      $(\Re,\mathbf t)$~(rigid estimate),
      $(R,\delta \mathbf t_A)$~(affine refinement),
      $(\Lambda,\delta \mathbf t_P)$~(optional projective refinement)

    \Procedure{StageInit}{$X$, $Y$}
      \State Initialize: first-order term $\gets$ identity; all other parameters $\gets 0$; $i \gets 1$
      \While{residual $\hat{v}$ from~\eqref{eq:adjustment_result} $\geq \bar{\nu}$ and $i < \psi$}
        \State Compute $\mathcal{C}$ and rigid $(\Re,\mathbf t)$ via a standard ICP step (NN correspondence by default)
        \If{$i > 1$}
          \State Estimate affine refinement $(R,\delta \mathbf t_A)$ (see Sec.~\ref{sec:linear})
        \EndIf
        \If{$i > 2$}
          \State Optionally estimate projective refinement $(\Lambda,\delta \mathbf t_P)$ (see Sec.~\ref{sec:linear})
        \EndIf
        \State $i \gets i + 1$
      \EndWhile
    \EndProcedure
  \end{algorithmic}
\end{algorithm}

\paragraph{Coarse-to-fine linear stages and a branching refinement}
For numerical stability, we first perform a coarse-to-fine linear initialization by estimating a rigid transform $\tau_{\mathrm{rig}}$ followed by an affine refinement $\tau_{\mathrm{aff}}$.
After the affine stage, we choose one of two refinement routes depending on the deformation regime:
(i) a restricted projective refinement $\tau_{\mathrm{proj}}$ for planar patterns under perspective effects; or
(ii) AMVFF-based Taylor lifting, where we apply a sequence of truncated Taylor updates to progressively increase the effective approximation order up to $n$.
Details of the linear parameterization and the translation update are given in Section~\ref{sec:linear}.

 \begin{algorithm}
          \caption{Fitting of mapping parameters ($a_{1},\cdots , a_{N}$) in $n$-dimensional $m$-order Truncated Taylor Series of Vector-Valued Functions (refer to \equref{eq:adjustment_form})}
          \label{alg:mappingParam}
          \algorithmicrequire{\hspace{5mm}
             $N \equiv {\textstyle \sum_{d=0}^{m}} \left ( n\cdot C_{d+n-1}^{n-1}  \right )$ \dots the number of parameters  \newline
            \textcolor{myWhite}{.\hspace{12mm}} $X$ \dots the fixed point set\newline  
            \textcolor{myWhite}{.\hspace{13mm}}$Y$ \dots the moved point set\newline  
            \textcolor{myWhite}{.\hspace{13mm}}$m$ \dots the order of Taylor series\newline 
            \textcolor{myWhite}{.\hspace{13mm}}$\bar{\nu}$ \dots threshold for residual\newline  
            \textcolor{myWhite}{.\hspace{13mm}}$\psi$ \dots the number of iterations\newline  
            \textcolor{myWhite}{.\hspace{13mm}}$a_{1},\cdots , a_{N}$ \dots mapping parameters\newline  
            \textcolor{myWhite}{.\hspace{13mm}}$\mathfrak{c}_{1},\cdots , \mathfrak{c}_{n}$ \dots location parameters\newline  
          }
          \algorithmicensure{\hspace{2mm} $a_{1},\cdots , a_{N}$}
          \begin{algorithmic}[1]
                \Procedure{quasi-newton solver}{}

                \State Initialization:\newline  
              \textcolor{myWhite}{.\hspace{9mm}}$N$-order matrix $H_{k}$ $\gets$ identity matrix \newline
                \textcolor{myWhite}{.\hspace{9mm}}$N$-order matrix $U_{k}$ $\gets$ zero matrix \newline
                \textcolor{myWhite}{.\hspace{9mm}}$N$-order vector $h_{k}$ $\gets$ zero vector \newline
                \textcolor{myWhite}{.\hspace{9mm}}$i  \gets 1$
                 \While {$\hat{v}$ of \equref{eq:adjustment_result} is not less than $\bar{\nu}$ and $i<\psi$}

                 \State $H_{k}$ $\gets$ $H_{k}+U_{k}$

                \State $\hat{v}$ $\gets$  execute \equref{eq:error_form} and \equref{eq:normal_equation} 

               \State $h_{k}$ $\gets$ $H_{k}\hat{v}$

               \State $\hat{X}$ $\gets$ $X^{o}+h_{k}$ 

                \State $U_{k}$ $\gets$ execute BFGS algorithm

                \State $i++$
                \EndWhile
               \State $a_{1},\cdots , a_{N} \gets \hat{X}$
                \EndProcedure
          \end{algorithmic}
      \end{algorithm}

Algorithm~\ref{alg:mappingParam} estimates the coefficients of the truncated multivariate Taylor model using a quasi-Newton scheme, where the BFGS update is used to accelerate convergence.
When a higher-order layer is activated, we optimize only the newly introduced parameters while keeping the previously fitted components fixed; this staged strategy improves stability
and avoids redundant recomputation.

\begin{algorithm}
          \caption{Fitting of location parameters ($\mathfrak{c}_{1},\cdots , \mathfrak{c}_{n}$) in $n$-dimensional $m$-order Truncated Taylor Series of Vector-Valued Functions (refer to \equref{eq:adjustment_form})}
          \label{alg:locationParam}
          \algorithmicrequire{\hspace{5mm}
               $X$ \dots the fixed point set\newline  
            \textcolor{myWhite}{.\hspace{13mm}}$Y$ \dots the moved point set\newline  
            \textcolor{myWhite}{.\hspace{13mm}}$m$ \dots the order of Taylor series\newline 
            \textcolor{myWhite}{.\hspace{13mm}}$\bar{\nu}$ \dots threshold for residual\newline  
            \textcolor{myWhite}{.\hspace{13mm}}$\psi$ \dots the number of iterations\newline  
            \textcolor{myWhite}{.\hspace{13mm}}$a_{1},\cdots , a_{N}$ \dots mapping parameters\newline  
            \textcolor{myWhite}{.\hspace{13mm}}$\mathfrak{c}_{1},\cdots , \mathfrak{c}_{n}$ \dots location parameters\newline  
          }
          \algorithmicensure{\hspace{2mm} $\mathfrak{c}_{1},\cdots , \mathfrak{c}_{n}$}
          \begin{algorithmic}[1]
                \Procedure{linear least square solver}{}

                \State Initialization:\newline  
                \textcolor{myWhite}{.\hspace{9mm}}$i  \gets 1$
                 \While {$\hat{v}$ of \equref{eq:adjustment_result} is not less than $\bar{\nu}$ and $i<\psi$}

                \State $\hat{v}$ $\gets$  execute \equref{eq:error_form} and \equref{eq:normal_equation} 

               \State $\hat{X}$ $\gets$ $X^{o}+\hat{v}$ \Comment{refer to \equref{eq:adjustment_result}}

                \State $i++$
                \EndWhile
               \State $\mathfrak{c}_{1},\cdots , \mathfrak{c}_{n} \gets \hat{X}$
                \EndProcedure
          \end{algorithmic}
      \end{algorithm}

The final output of Algorithm~\ref{alg:analyticcode} is a \emph{composition} of multiple low-order
Taylor mappings (one produced per outer iteration). As these analytic maps are composed, the
effective polynomial degree of the composite mapping grows rapidly; see Lemma~\ref{lemma:composition_order}.
Importantly, each outer iteration fits only a truncated model of modest size involving
\(
N = {\textstyle \sum_{d=0}^{m}} \bigl( n \cdot C_{d+n-1}^{\,n-1} \bigr)
\)
parameters, so the per-iteration optimization remains lightweight. This progressive lifting
avoids an aggressively high-order global fit at early iterations and activates additional
expressiveness only as the residual decreases.

\begin{algorithm}[t]
\caption{Analytic Iterative Closest Point}
\label{alg:analyticcode}
\algorithmicrequire{\hspace{5mm}
  $N \equiv {\textstyle \sum_{d=0}^{m}} \bigl( n \cdot {C}_{d+n-1}^{\,n-1} \bigr)$ \dots number of parameters\newline
  \textcolor{myWhite}{.\hspace{13mm}} $X$ \dots fixed point set \qquad
  $Y$ \dots moved point set\newline
  \textcolor{myWhite}{.\hspace{13mm}} $\phi$ \dots RMSE threshold \qquad
  $m_w$ \dots order cap\newline
  \textcolor{myWhite}{.\hspace{13mm}} $\psi$ \dots max outer iterations \qquad
  $k_{\mathrm{step}}$ \dots order--lifting period (default $=2$; $=1$ means every iter)
}

\algorithmicensure{\hspace{2mm} \newline
  \textcolor{myWhite}{.\hspace{9mm}}$\mathcal{C}$ \dots one-to-one correspondence (final)\newline
  \textcolor{myWhite}{.\hspace{9mm}}
  $(a_{1}^{(1)},\ldots,a_{N}^{(1)}),\,\ldots,\,(a_{1}^{(\psi)},\ldots,a_{N}^{(\psi)})$ \dots mapping parameters across outer iterations\newline
  \textcolor{myWhite}{.\hspace{9mm}}
  $(\mathfrak{c}_{1}^{(1)},\ldots,\mathfrak{c}_{n}^{(1)}),\,\ldots,\,
   (\mathfrak{c}_{1}^{(\psi)},\ldots,\mathfrak{c}_{n}^{(\psi)})$ \dots centers across outer iterations
}
\begin{algorithmic}[1]
\Procedure{Alternate iteration solver}{}
  \State $i \gets 1$
  \While {$i \le \psi$}
    \State $a_{1}^{\mathbf{(i)}},\ldots , a_{N}^{\mathbf{(i)}} \gets 0$
    \State $a_{n}^{\mathbf{(i)}},\ldots , a_{n^2+n-1}^{\mathbf{(i)}} \gets I$ \Comment{first-order block init. to identity}
    \State $\mathfrak{c}_{1}^{\mathbf{(i)}},\ldots , \mathfrak{c}_{n}^{\mathbf{(i)}} \gets 0$
    \State $i \gets i{+}1$
  \EndWhile
  \State $X, Y \gets$ data normalization
  \State $i \gets 1$;\quad \textbf{deg} $\gets 2$
  \State $\mathrm{err} \gets \textsc{RMSE}(X, Y)$ \Comment{mean NN distance (or w.r.t.\ current $\mathcal{C}$)}

  \While {$\mathrm{err} \ge \phi$ \textbf{and} $i<\psi$}

    \State $\mathcal{C},\, \Re,\, \mathbf t,\, R,\, \delta \mathbf t_{A} \gets$ execute Algorithm~\ref{alg:offline}
    \State $a_{1}^{\mathbf{(i)}},\ldots , a_{N}^{\mathbf{(i)}} \gets \Re,\, \mathbf t,\, R,\, \delta \mathbf t_{A}$

    \If{$i>1$}
      \If{$i \bmod k_{\mathrm{step}} = 0$}
        \State \textbf{deg} $\gets \min(\textbf{deg}{+}1,\, m_w)$
      \EndIf

      \If{$m_{w} \ge i$}
        \State $a_{1}^{\mathbf{(i)}},\ldots , a_{N}^{\mathbf{(i)}} \gets$ execute Algorithm~\ref{alg:mappingParam} \ \ ($m \gets$ \textbf{deg})
        \State $\mathfrak{c}_{1}^{\mathbf{(i)}},\ldots , \mathfrak{c}_{n}^{\mathbf{(i)}} \gets$ execute Algorithm~\ref{alg:locationParam} \ \ ($m \gets$ \textbf{deg})
      \Else
        \State $a_{1}^{\mathbf{(i)}},\ldots , a_{N}^{\mathbf{(i)}} \gets$ execute Algorithm~\ref{alg:mappingParam} \ \ ($m \gets m_w$)
        \State $\mathfrak{c}_{1}^{\mathbf{(i)}},\ldots , \mathfrak{c}_{n}^{\mathbf{(i)}} \gets$ execute Algorithm~\ref{alg:locationParam} \ \ ($m \gets m_w$)
      \EndIf
    \EndIf

    \State \textbf{$Y_{\mathrm{nr}} \gets \textsc{ApplyTaylor}(a^{\mathbf{(i)}}, \mathfrak c^{\mathbf{(i)}},\, \textbf{deg};\, Y)$}
    \State \textbf{$Y \gets Y_{\mathrm{nr}}$} \Comment{update moved set by current (composed) mapping}
    \State $\mathrm{err} \gets \textsc{RMSE}(X, Y)$ \Comment{recompute mean distance after update}
    \State $i \gets i{+}1$
  \EndWhile
\EndProcedure
\end{algorithmic}
\end{algorithm}

Another advantage of \textbf{Analytic-ICP} is its minimal dependence on tunable thresholds, which
improves transferability across datasets and operating conditions. In Algorithm~\ref{alg:analyticcode},
only a small set of control parameters is required in practice: the stopping tolerance $\phi$ (measured
by RMSE), the Taylor truncation order (and its cap $m_w$), and the maximum number of outer iterations
$\psi$ (with an optional order-lifting period $k_{\mathrm{step}}$). Unless otherwise stated, the residual
used for convergence checking is the RMSE between the fixed set $X$ and the currently transformed moving
set $Y$ under the current correspondence $\mathcal{C}$.

The time complexity of the proposed Analytic-ICP method is as follows:
\begin{eqnarray}\label{eq:complexity}
\begin{aligned}
\mathcal{O}\left( N_{r}\log N_{r} + p^{2}N_{r} + p^{3} \right),
\end{aligned}
\end{eqnarray}
where $N_{r}$ is the number of points in the reference point set and $p$ is the number of parameters to be fitted. The first term arises from the nearest neighbor search (e.g., using KD-tree), the second from computing and updating mapping parameters across all point pairs, and the third from BFGS-based global optimization in parameter space.

In contrast, classical methods such as CPD and TPS-RPM exhibit time complexity on the order of $\mathcal{O}(N_{r}^{3})$ due to dense kernel matrix construction and inversion. Additionally, the spatial complexity of Analytic-ICP is approximately $\mathcal{O}(N_{r}p)$, while CPD and TPS-RPM require $\mathcal{O}(N_{r}^{2})$ memory, which becomes prohibitive for large-scale data. Since in practice $N_r \gg p$, our method is highly scalable.

\vspace{1mm} \noindent\textbf{Remark.}
In practice, the spatial complexity becomes a bottleneck in large-scale scenarios. For instance, CPD and TPS-RPM often fail or significantly slow down when $N_r > 10^5$ due to their $\mathcal{O}(N_r^2)$ memory requirements. In contrast, Analytic-ICP successfully handles such large point sets on modest hardware (e.g., a laptop with limited memory), thanks to its $\mathcal{O}(N_r \cdot p)$ spatial complexity.

\subsection*{Summary of Practical Value of Analytic-ICP}

Before moving to experimental evaluations, we briefly summarize the key advantages of the proposed \textbf{Analytic-ICP} algorithm from a practical perspective.

Unlike conventional non-rigid registration algorithms that rely on intricate kernel designs or parameter-heavy models, Analytic-ICP exhibits a unique combination of mathematical simplicity and engineering utility:

\begin{itemize}
    \item It achieves \textbf{fine-grained control of deformation complexity} through the iterative composition of low-order Taylor mappings, enabling fast and stable approximation of complex geometric transformations.
    \item Its modular architecture, based on \textbf{orthogonal $\to$ affine $\to$ perspective} decomposition, ensures that each stage is computationally lightweight, interpretable, and separable.
    \item The algorithm has \textbf{minimal dependence on hyperparameters} — with only a handful of thresholds controlling residual, series order, and iteration count — making it well-suited for industrial applications with high demands on robustness and transferability.
    \item Thanks to its analytic formulation and convergence-friendly design, the method delivers reliable results even on challenging datasets, without the need for delicate tuning or data-specific heuristics.
\end{itemize}

\subsection{Progressive Lifting with Residual-Gated Order Activation}

To balance the expressive power of high-order structured Taylor models with numerical stability and computational control, we adopt a progressive lifting strategy combined with residual-based gating.

Instead of activating all Taylor terms up to a fixed order $m$ from the start, we incrementally increase the expansion order in a staged manner. At each step, higher-order terms are activated only when the residual error from the current approximation becomes sufficiently small. This design reflects the empirical observation that early activation of high-order polynomial terms can destabilize the iteration process, whereas controlled late-stage activation improves final precision.

Let $T_m(y)$ denote the structured Taylor approximation of order $m$, and define the current residual:
\[
\mathcal{R}^{(m)}(y) = \|f(y) - T_m(y)\|.
\]
We introduce a residual gating function such that the $(m+1)$-th order derivative matrix $\mathcal{J}^{[m+1]}$ is activated only if:
\[
\mathcal{R}^{(m)} < \delta_m,
\]
where $\delta_m = \delta_0 \cdot \alpha^m$ with fixed decay rate $0 < \alpha < 1$. This ensures that:
\begin{itemize}
  \item The center $\mathfrak{c}$ is close to $y$, keeping $\|\phi^{[m]}(y - \mathfrak{c})\|$ under control;
  \item The deformation field has already stabilized in lower-order modes before injecting additional degrees of freedom;
  \item Numerical blow-up due to early-stage high-order oscillation is avoided.
\end{itemize}

From a theoretical perspective, this strategy is consistent with the composition behavior discussed in Lemma~\ref{lemma:composition_order}, where uncontrolled stacking of high-order maps can lead to exponential growth in polynomial degree. In contrast, the gated lifting mechanism introduces new basis terms only when they can safely contribute to refining the residual.

\begin{remark}
In practice, we find that orders $m=2$ or $3$ are often sufficient to capture the majority of geometric deformation, and higher orders should be enabled only near convergence. The residual-gated strategy leads to smoother convergence, avoids early-stage overfitting, and enhances robustness against noise and under-sampled regions.
\end{remark}

In the following section, we empirically evaluate the performance of Analytic-ICP and compare it with state-of-the-art baselines under both synthetic and real-world conditions.

\section{Empirical Validation of Analytic-ICP}
\label{sec:experiment}

In this section we empirically evaluate \textbf{Analytic-ICP} for accuracy, robustness, and runtime across rigid, affine, and smooth non-rigid deformations (including heterogeneous pairs) generated via truncated Taylor maps. We compare against two representative baselines—\textbf{CPD} and \textbf{TPS-RPM}—using both quantitative metrics and visual results.

Unless otherwise noted, experiments run on a standard laptop (Intel i5-6200U, 8\,GB RAM). Our implementation is single-threaded C++ (OpenCV + Eigen); ICP uses the public point-to-point variant by Andreas Geiger (2011). Before registration, source (moving) and target (fixed) point sets are zero-centered and scaled to unit size. For experiments executed on a different workstation (explicitly indicated in their captions/subsections), wall-clock times are comparable \emph{within} that subsection only.

\subsection{Stepwise Evaluation of Analytic-ICP}

To illustrate the theoretical and practical advantages of our Analytic-ICP framework, we performed a stepwise analysis of the fitting process. Specifically, instead of conventional ablation studies, we directly present intermediate fitting results obtained at each iteration stage. This step-by-step presentation clearly reveals how the proposed AMVFF-based method progressively improves registration accuracy by systematically expanding the mapping space—from the identity transformation, through rigid and affine mappings, and ultimately towards higher-order Taylor expansions for full non-rigid transformations.

Importantly, this hierarchical process not only highlights how lower-order expansions effectively approximate more complex transformations in a computationally efficient manner but also demonstrates the practical benefit of incremental fitting, where the algorithm retains efficiency comparable to affine methods even as complexity grows. This progression underscores the computational robustness of Analytic-ICP, emphasizing its capability to handle complex non-rigid deformation scenarios while maintaining the efficiency and reliability characteristic of rigid registration algorithms.

In the following sections, we present detailed experimental results in both 2D and 3D settings, demonstrating clearly how each incremental stage enhances registration quality and effectiveness.

\subsubsection{Performance of 2D Analytic-ICP}
We begin with 2D registration tasks to evaluate the stepwise deformation fitting capability of Analytic-ICP. The experiments demonstrate how the method transitions from rigid to high-order non-rigid mapping with stable convergence and increasing accuracy.

\begin{figure*}[htb]
  \centering
  \includegraphics[width=0.9\textwidth]{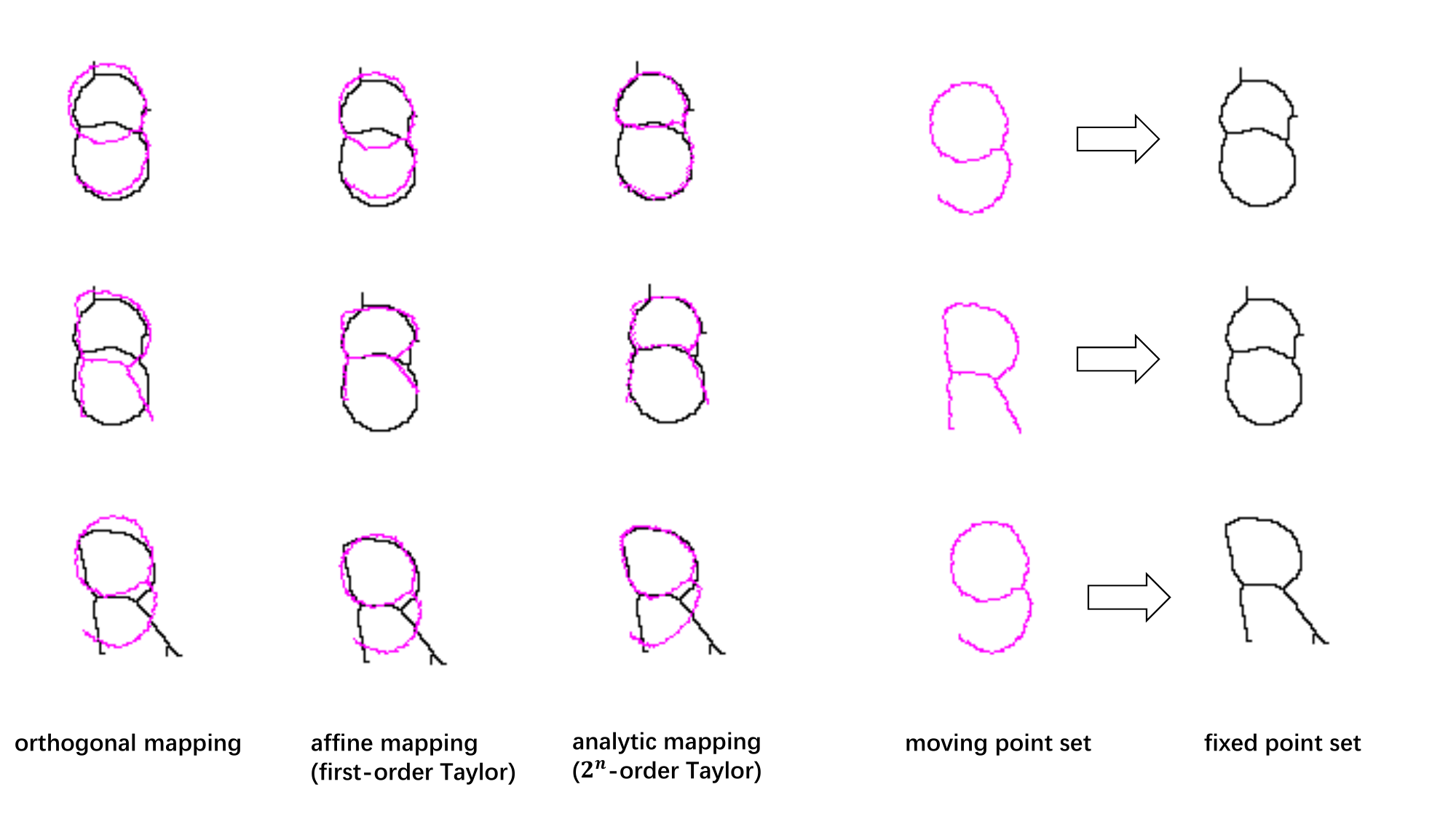}%
  \caption{\textbf{Performance of 2D stepwise Analytic-ICP.}
  The registration is decomposed into three stages: \emph{orthogonal} (rigid), \emph{affine}, and a \emph{structured analytic} refinement.
  The analytic stage is implemented as a composition of \emph{quadratic} (second-order) Taylor lifts, yielding an overall low-order but expressive mapping.
  In each row, the right panel shows the fixed (red) and moving (green) point sets; the left panels visualize the intermediate results after each stage.}
  \label{fig:decompo}
\end{figure*}

\paragraph{For heterogeneous pairs of typical point sets} 
As illustrated in Fig.~\ref{fig:decompo}, the 2D Analytic-ICP proceeds with order lifting across stages: 
each stage uses a second-order structured Taylor map\footnote{Composing $n$ second-order stages yields an \emph{upper bound} on the effective polynomial degree of $\le 2^{n}$. 
This is only a worst-case bound; cancellations often reduce the practical degree. In our runs $n$ is small.} 
and the maps are composed across $n$ stages.

\begin{figure*}[htbp]
\centering
\scriptsize
\centerline{Increasing the (effective) order via \emph{stagewise order lifting} reduces residuals.}
\vspace{2mm}

\begin{minipage}{0.15\linewidth}\centering
\includegraphics[width=\linewidth]{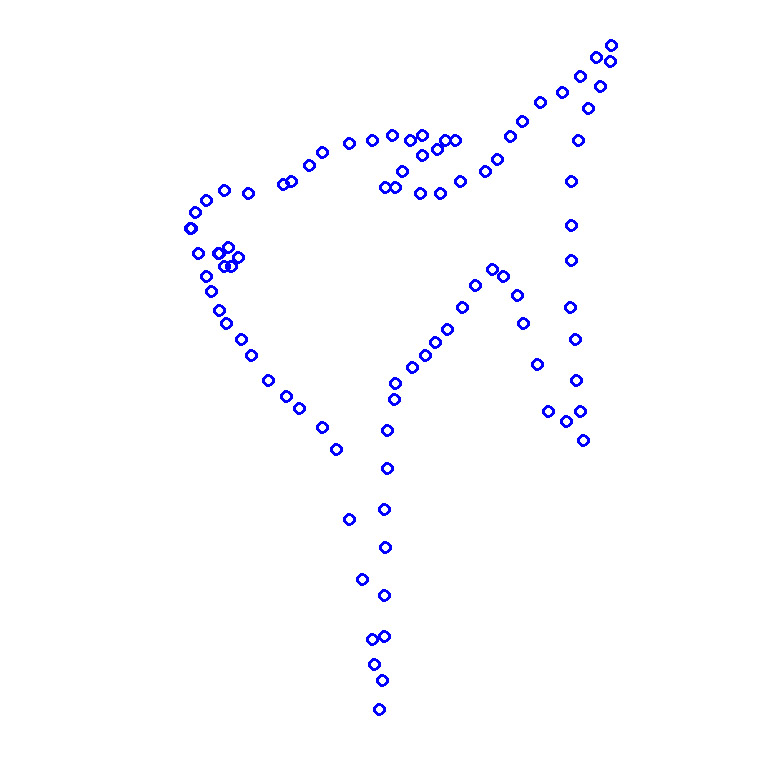}
\end{minipage}\hspace{1mm}
\begin{minipage}{0.15\linewidth}\centering
\includegraphics[width=\linewidth]{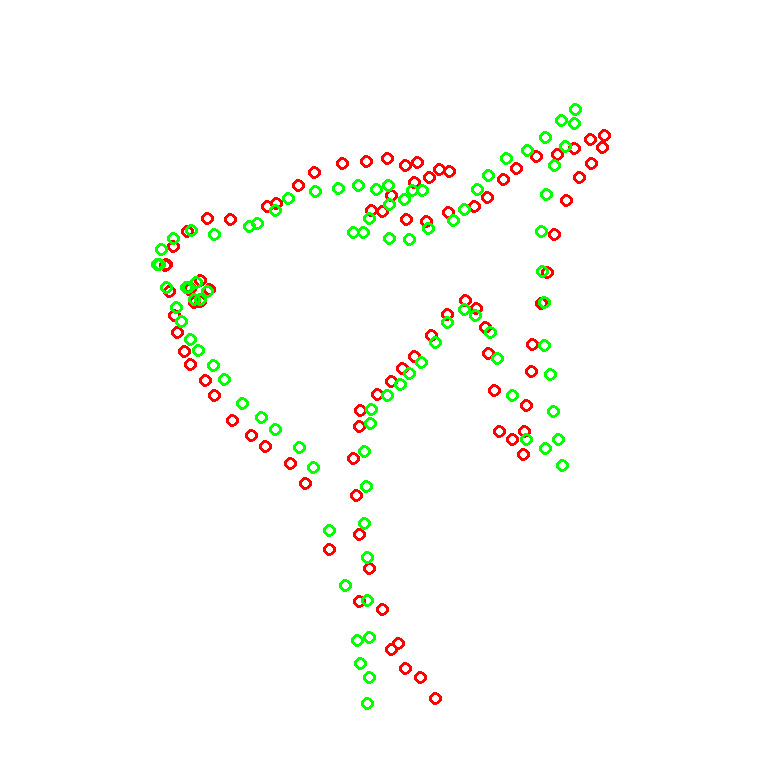}
\end{minipage}\hspace{1mm}
\begin{minipage}{0.15\linewidth}\centering
\includegraphics[width=\linewidth]{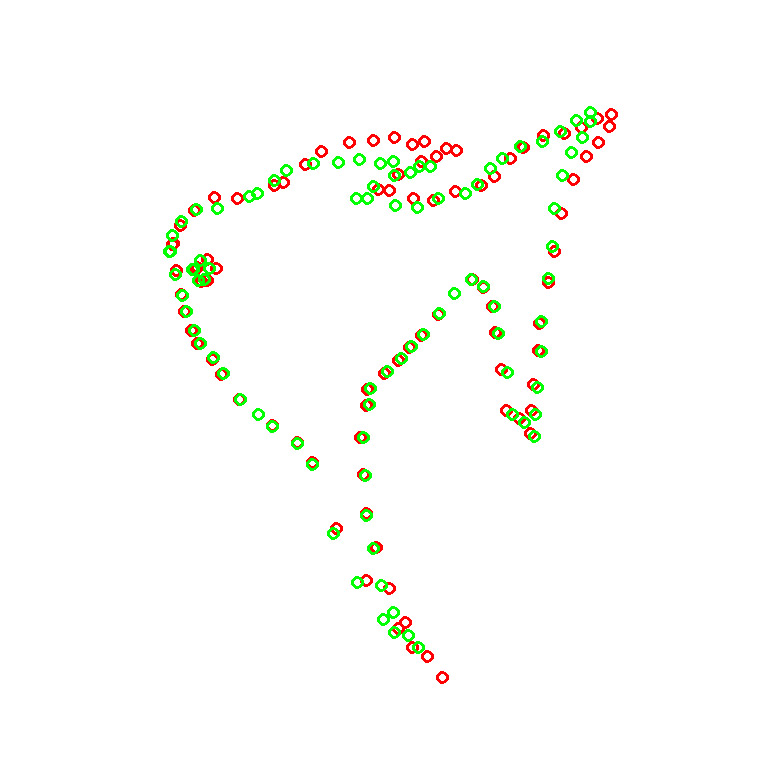}
\end{minipage}\hspace{1mm}
\begin{minipage}{0.15\linewidth}\centering
\includegraphics[width=\linewidth]{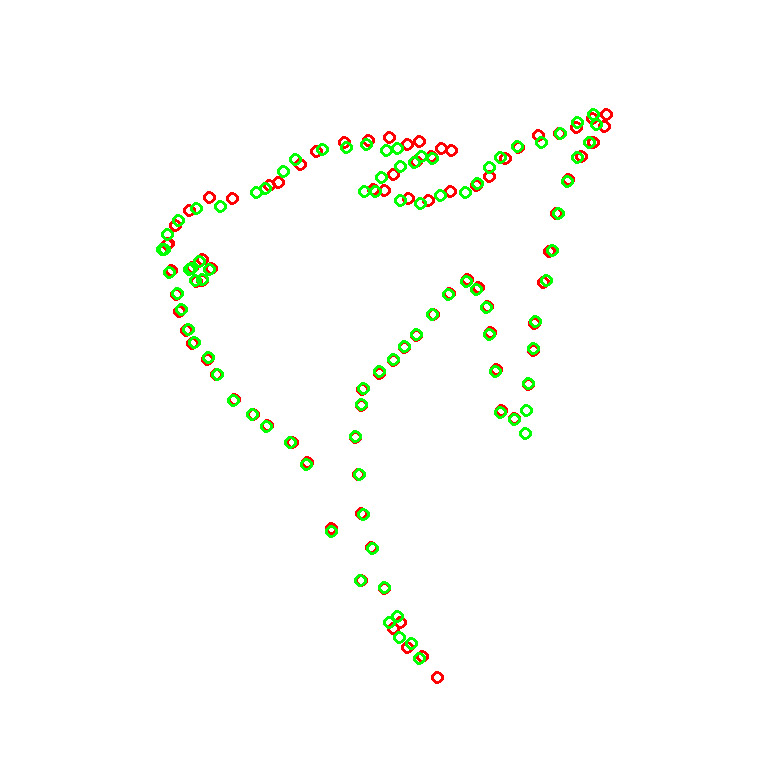}
\end{minipage}\hspace{1mm}
\begin{minipage}{0.15\linewidth}\centering
\includegraphics[width=\linewidth]{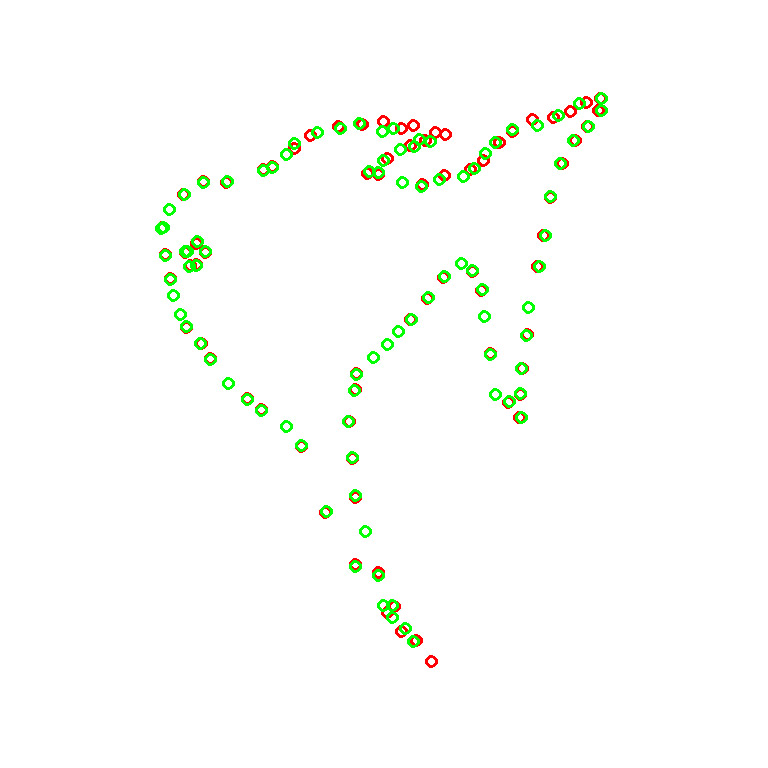} 
\end{minipage}\hspace{1mm}
\begin{minipage}{0.15\linewidth}\centering
\includegraphics[width=\linewidth]{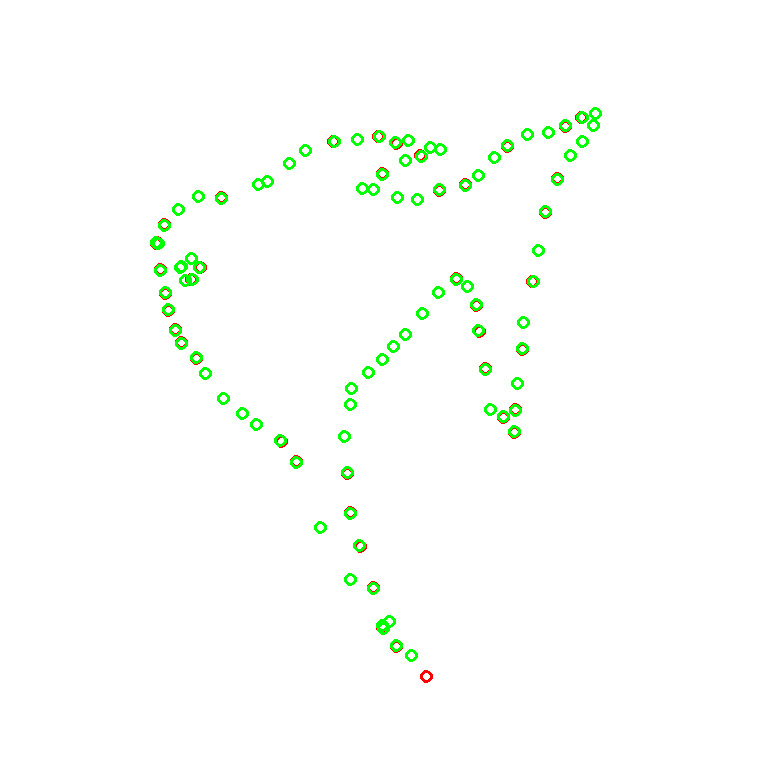}
\end{minipage}

\vspace{2mm}

\begin{minipage}{0.15\linewidth}\centering
\includegraphics[width=\linewidth]{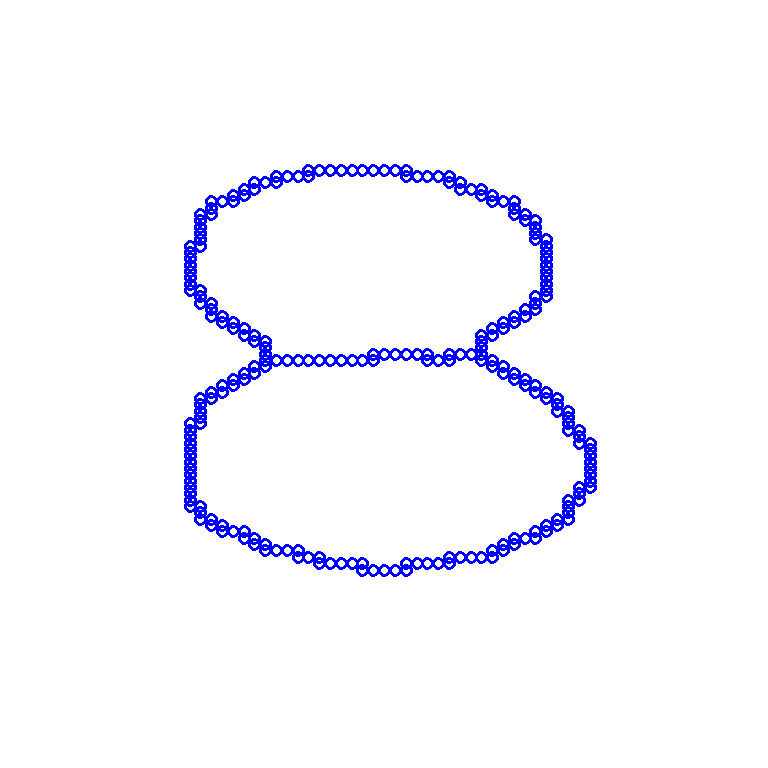}
\end{minipage}\hspace{1mm}
\begin{minipage}{0.15\linewidth}\centering
\includegraphics[width=\linewidth]{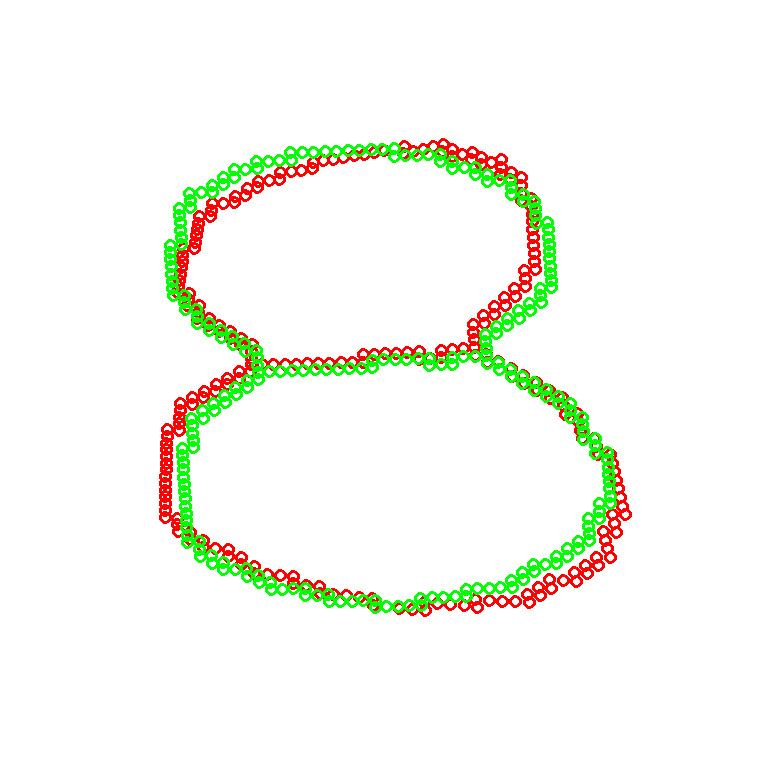}
\end{minipage}\hspace{1mm}
\begin{minipage}{0.15\linewidth}\centering
\includegraphics[width=\linewidth]{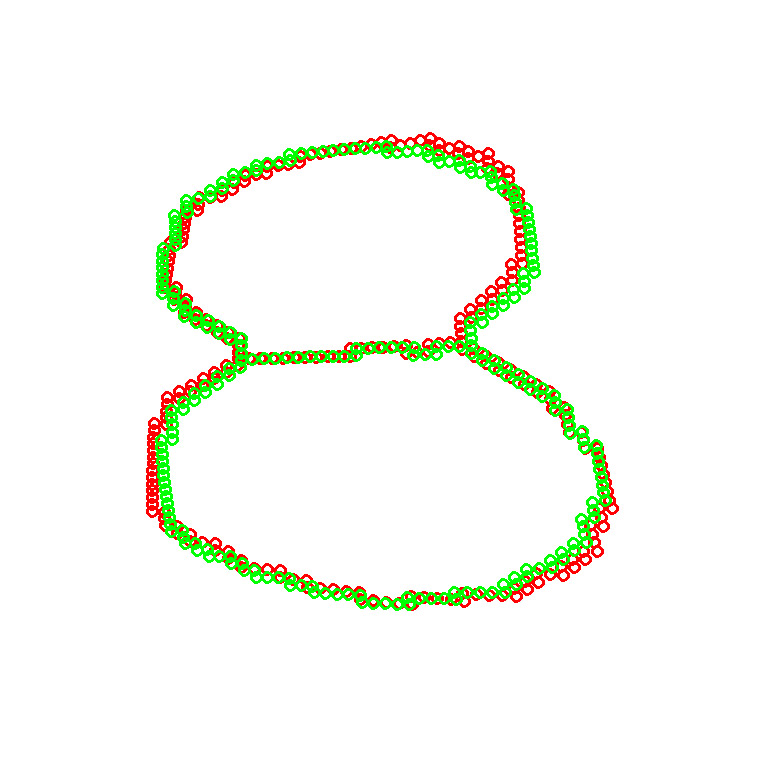} 
\end{minipage}\hspace{1mm}
\begin{minipage}{0.15\linewidth}\centering
\includegraphics[width=\linewidth]{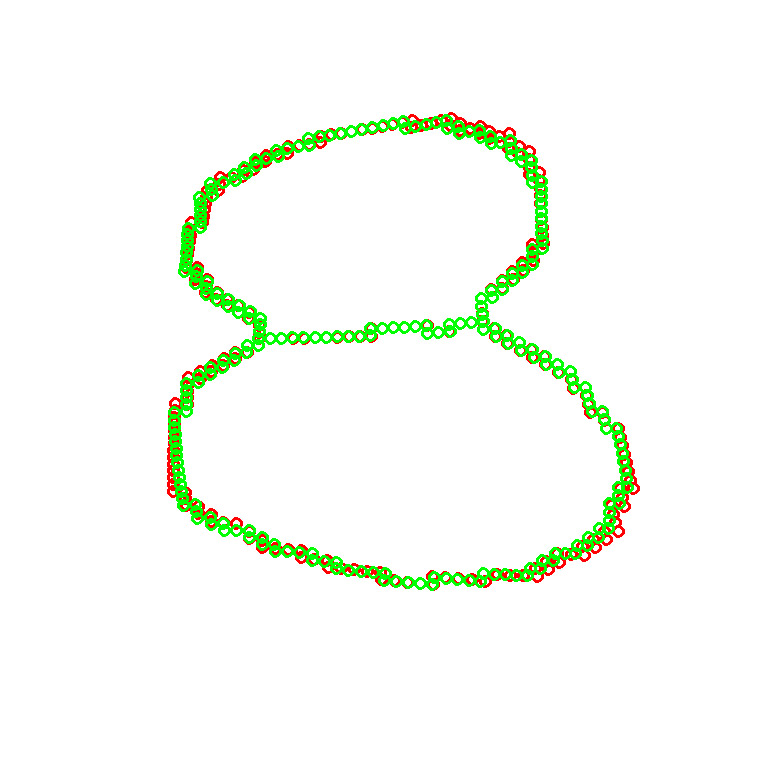}
\end{minipage}\hspace{1mm}
\begin{minipage}{0.15\linewidth}\centering
\includegraphics[width=\linewidth]{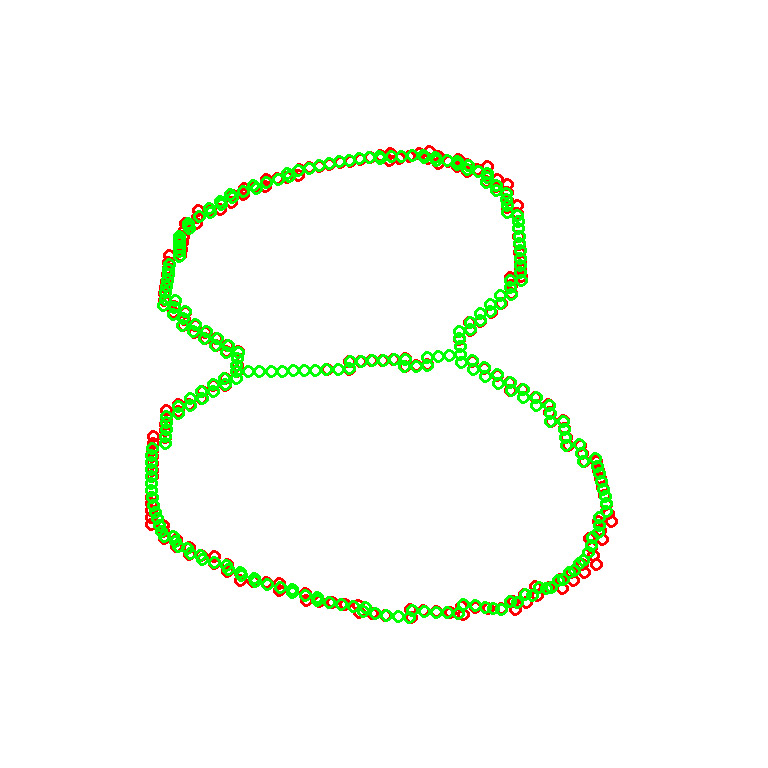}
\end{minipage}\hspace{1mm}
\begin{minipage}{0.15\linewidth}\centering
\includegraphics[width=\linewidth]{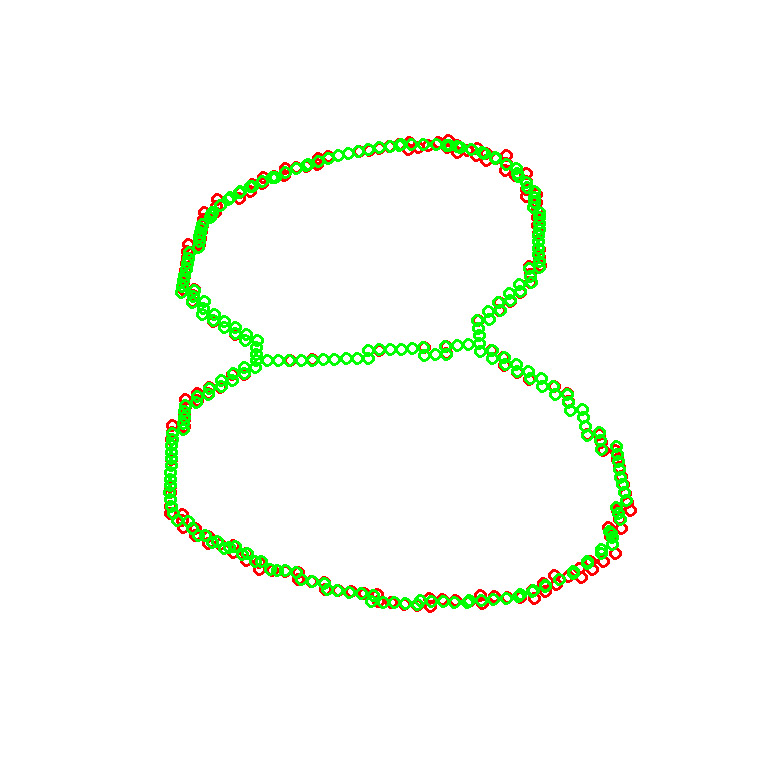}
\end{minipage}

\vspace{2mm}

\begin{minipage}{0.15\linewidth}\centering
\includegraphics[width=\linewidth]{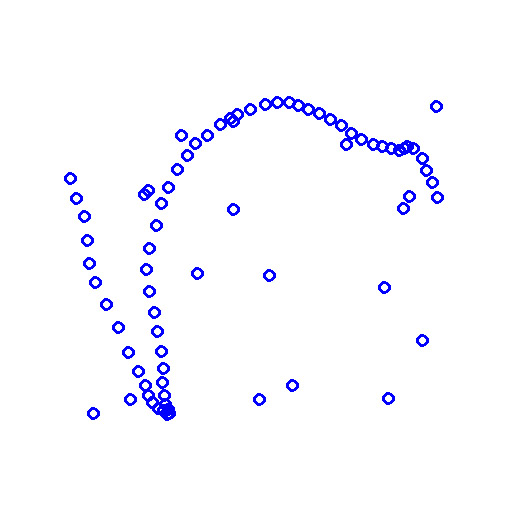}
\newline moving point set
\end{minipage}\hspace{1mm}
\begin{minipage}{0.15\linewidth}\centering
\includegraphics[width=\linewidth]{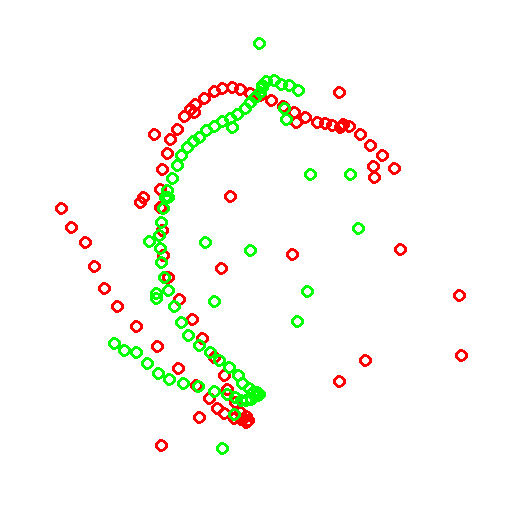}
\newline affine mapping
\end{minipage}\hspace{1mm}
\begin{minipage}{0.15\linewidth}\centering
\includegraphics[width=\linewidth]{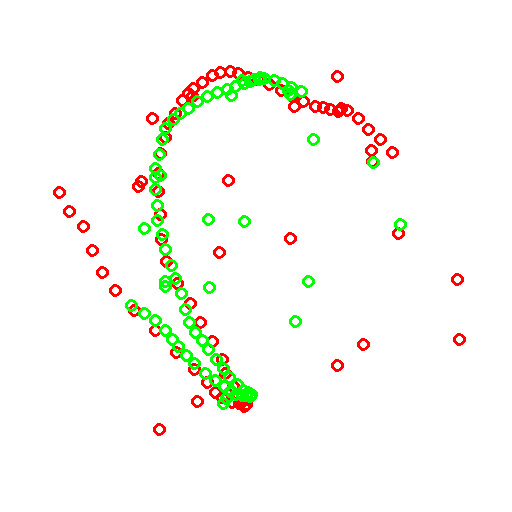}
\newline stage order: $3$
\end{minipage}\hspace{1mm}
\begin{minipage}{0.15\linewidth}\centering
\includegraphics[width=\linewidth]{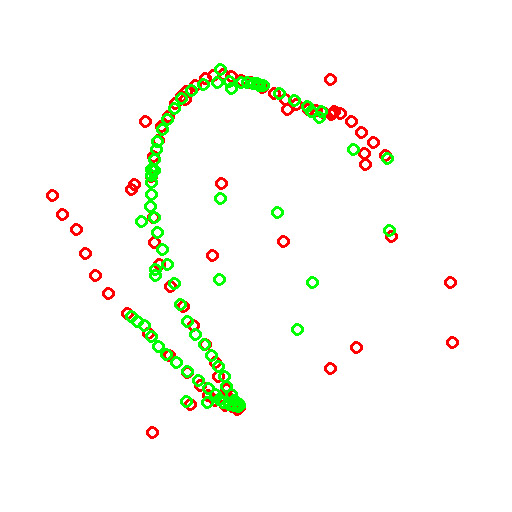}
\newline $2 \!\rightarrow\! 3 \!\rightarrow\! 4$ \\(eff.\ $\le 24$)
\end{minipage}\hspace{1mm}
\begin{minipage}{0.15\linewidth}\centering
\includegraphics[width=\linewidth]{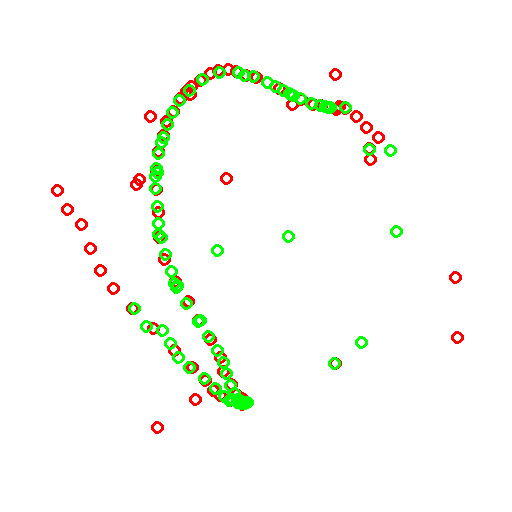}
\newline $2 \!\rightarrow\! 3 \!\rightarrow\! \cdots \!\rightarrow\! 6$\\ (eff.\ $\le 720$)
\end{minipage}\hspace{1mm}
\begin{minipage}{0.15\linewidth}\centering
\includegraphics[width=\linewidth]{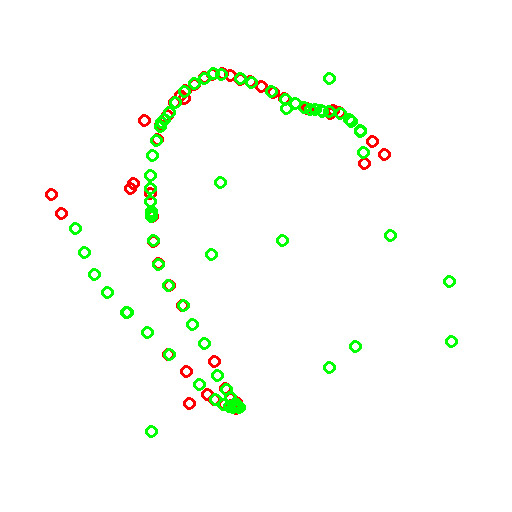}
\newline $2 \!\rightarrow\! 3 \!\rightarrow\! \cdots \!\rightarrow\! 8$ \\(eff.\ $\le 40320$)
\end{minipage}

\vspace{2mm}
\caption{\textbf{Residual visualization under stagewise order lifting.}
Each row shows a 2D point set (fish, numeral 8, curve). Left to right: moving set, affine result, and compositions of low-order stages with increasing \emph{stage orders} (labels indicate a possible order sequence and the corresponding \emph{effective degree upper bound} \(\le\!\prod_s o_s\)). Residuals decrease as effective order increases.}
\label{fig:difforder}
\end{figure*}

\paragraph{For isomorphic pairs of typical point sets} 
To further evaluate the approximation capability of Analytic-ICP, we investigate how the registration residual evolves as the order of the truncated Taylor series increases. Specifically, we consider three representative point cloud models: a \textit{fish}, the Arabic numeral \textit{8}, and a \textit{curve}. The fish and curve datasets are widely used in the point set registration literature, while the Arabic numeral 8 corresponds to the medial axis of the character ``8'' in bitmap images.

To generate test data, we apply a slight smooth distortion to each original model using a Taylor series transformation (see \equref{eq:adjustment_form}). The mapping coefficients $a_{0}, \cdots, a_{n^{2}+3n+1}$ are randomly sampled in the range $[-0.3, 0.3]$, with the exception that the diagonal entries of the first-order matrix are fixed at 1 to preserve approximate scale. Since the applied transformation is at least third-order, it introduces non-affine deformations. As a result, affine registration alone will leave substantial residual errors.

We then use Analytic-ICP to align the original point set to its distorted version, and observe the reduction of residual as the Taylor series order increases. This experiment provides a controlled setting to showcase the expressive power of the proposed framework in capturing subtle, smooth, and non-affine geometric variations.

\begin{figure}[!ht]
\centering
\includegraphics[width=5in]{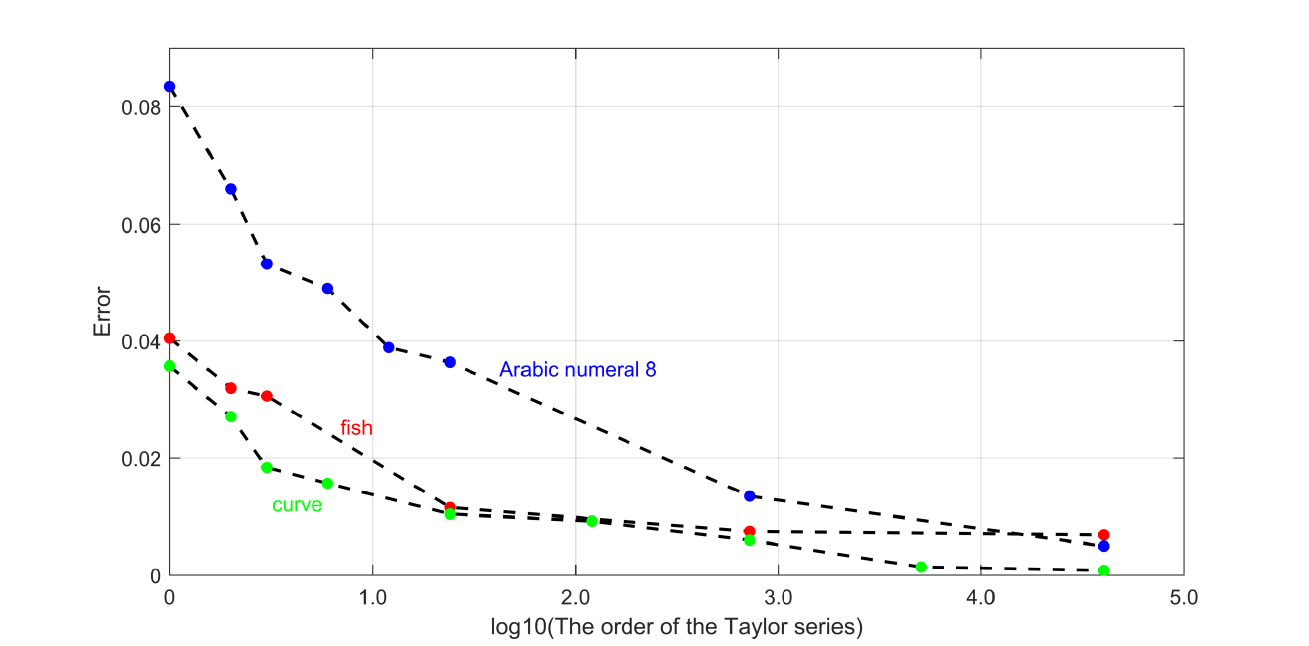}
\caption{
\textbf{Error vs. Order Curve for 2D Analytic-ICP.}  
This plot shows the registration residuals corresponding to different truncation orders of the Taylor series for the three 2D point sets presented in \figref{fig:difforder}. For improved readability, the horizontal axis uses a logarithmic scale (base 10), representing the Taylor series order. The results demonstrate a general trend of improved fitting accuracy as the order increases.
}
\label{fig:taylor_error}
\end{figure}

In Fig.~\ref{fig:difforder} we assess approximation quality as the model order increases.
We adopt \emph{stagewise order lifting}: at stage $s$ we fit a low-order structured Taylor map of order $o_s$
(e.g., $o_1{=}2,\,o_2{=}3,\,o_3{=}4,\ldots$) and compose the stages. The resulting
\emph{effective polynomial degree} of the composed map is upper-bounded by $\prod_s o_s$
(e.g., $2\!\times\!3\!\times\!4{=}24$), so increasing the stage orders (or the number of stages)
systematically enlarges model capacity.

Across the three heterogeneous 2D cases (fish, numeral ``8'', curve), the residuals consistently
decrease as the effective order grows, demonstrating the expressive power of the structured analytic model.
This trend also appears in Fig.~\ref{fig:taylor_error} (log-scale residuals vs.\ order). We note that
for \emph{adjacent} orders local non-monotonicity may occur, mainly due to ICP correspondence updates and
shape geometry, but the overall order–accuracy relation remains clear.

\subsubsection{Performance of 3D Analytic-ICP}

Next, we evaluate the \emph{stagewise} 3D Analytic-ICP. We study how accuracy improves with higher
effective orders (via composition of low-order stages) and briefly examine scalability by increasing
the registered points to $>10^5$, confirming both expressiveness and efficiency in 3D.

\begin{figure*}[htb]
\centering
\includegraphics[width=5.5in]{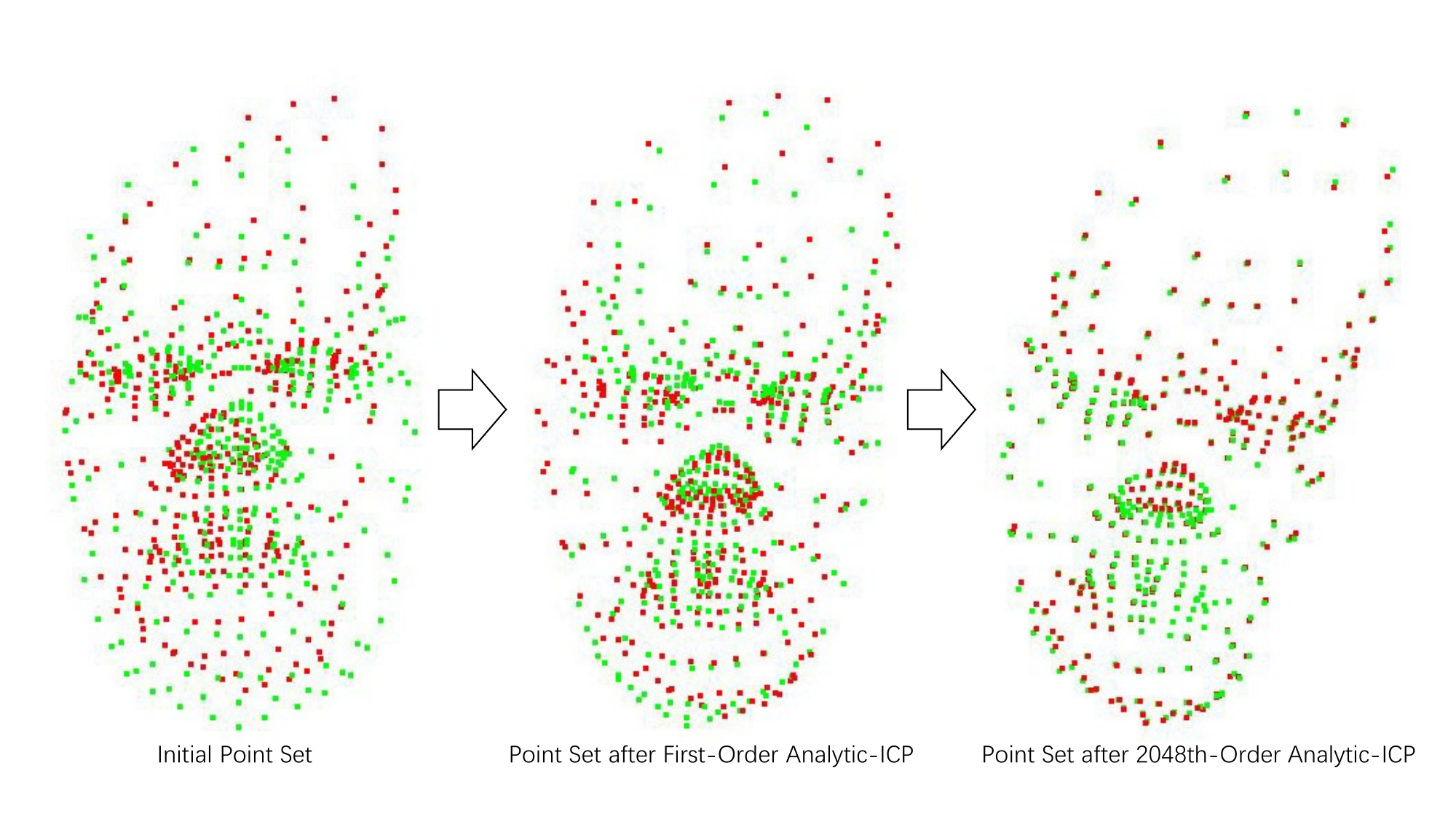}
\caption{\textbf{Performance of 3D stagewise Analytic-ICP.}
We synthesize a smooth non-affine deformation on a 3D facial point cloud (392 points) using our
structured analytic model: the first-order (affine) matrix has ones on the diagonal and all other
coefficients are sampled i.i.d.\ from $[-0.3,\,0.3]$. We then run second-order Analytic-ICP for
11 stages; composing $11$ quadratic stages yields an effective degree $\le 2^{11}=2048$.
Left: initial source (green) and target (red). Middle: affine result. Right: result after the
stagewise quadratic lifting (effective $\sim\!2048$th order), showing markedly reduced residuals.}
\label{fig:3Dface}
\end{figure*}

For this small 3D case, a single run of Analytic-ICP with quadratic stages ($o_s=2$) composed over
11 stages achieves an effective degree bounded by $2^{11}$, substantially improving upon the affine
baseline (middle of Fig.~\ref{fig:3Dface}). The clear residual reduction illustrates how composing
low-order structured Taylor maps quickly yields high expressive power while keeping each stage
numerically well conditioned.

\begin{figure*}[htb]
\centering
\scriptsize \textbf{By increasing the order of the 3D Taylor series, the accuracy of the fitting is improved.}

\vspace{1mm}

\begin{minipage}[c]{0.48\textwidth}
  \centering
  \includegraphics[width=0.45\linewidth]{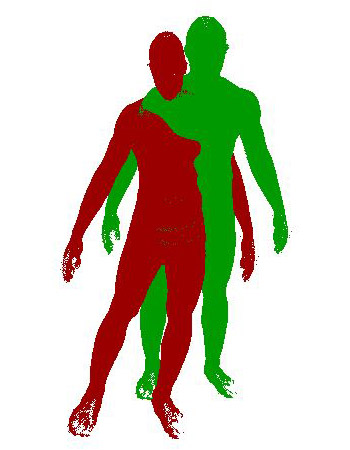}
  \hspace{2mm}
  \includegraphics[width=0.45\linewidth]{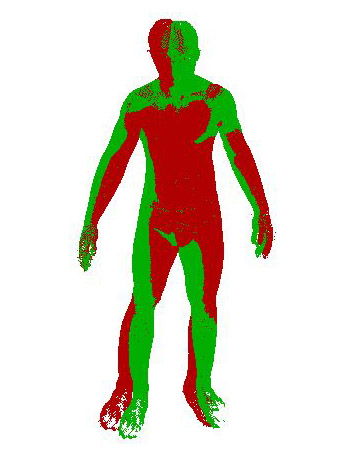}
\end{minipage}
\hfill
\begin{minipage}[c]{0.48\textwidth}
  \centering
  \includegraphics[width=0.45\linewidth]{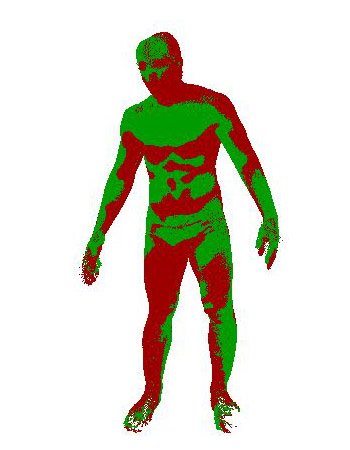}
  \hspace{2mm}
  \includegraphics[width=0.45\linewidth]{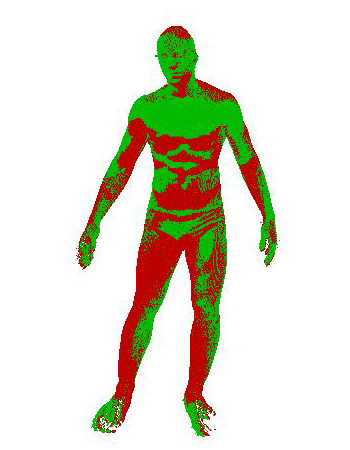}
\end{minipage}

\vspace{3mm}

\begin{minipage}[c]{0.48\textwidth}
  \centering
  \includegraphics[width=0.45\linewidth]{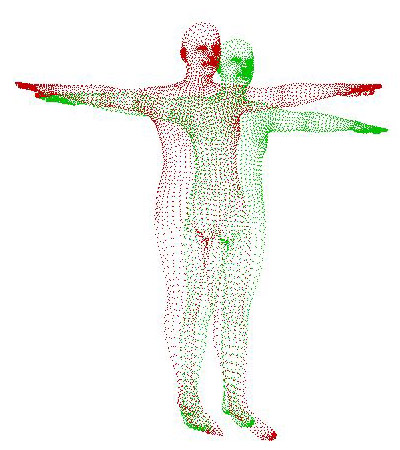}
  \hspace{2mm}
  \includegraphics[width=0.45\linewidth]{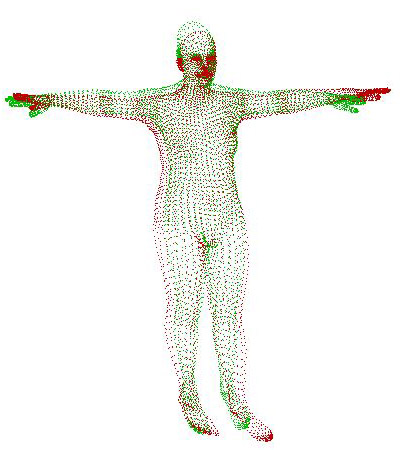} \\
  \vspace{1mm}
  original models \hspace{18mm} affine (first-order)
\end{minipage}
\hfill
\begin{minipage}[c]{0.48\textwidth}
  \centering
  \includegraphics[width=0.45\linewidth]{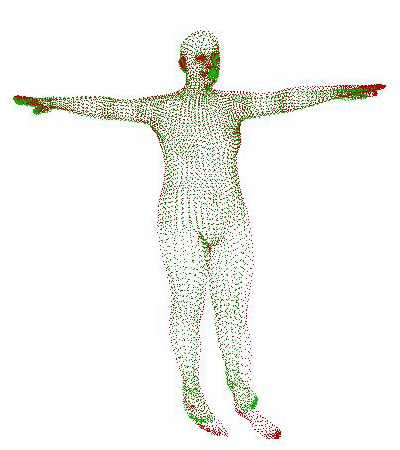}
  \hspace{2mm}
  \includegraphics[width=0.45\linewidth]{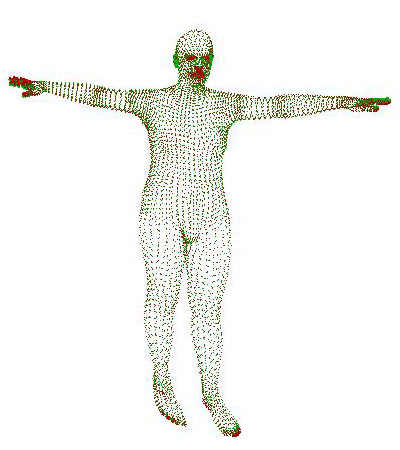} \\
  \vspace{1mm}
  256th-order Taylor \hspace{16mm} 32{,}768th-order Taylor
\end{minipage}

\caption{\textbf{3D Analytic-ICP registration performance under increasing Taylor series orders.}
Two different 3D human body point clouds (top and bottom rows) are registered using truncated Taylor expansions of increasing effective order. From affine (first-order) to very high-order Taylor approximations, residuals decrease markedly, demonstrating the expressive power of the analytic model.
\emph{Note: composing a second-order map $L$ times yields an effective degree $\le 2^{L}$ (e.g., $L{=}8\!\Rightarrow\!256$, $L{=}15\!\Rightarrow\!32{,}768$).}}
\label{fig:body4}
\end{figure*}

\begin{figure}[!ht]
\centering
\includegraphics[width=4in]{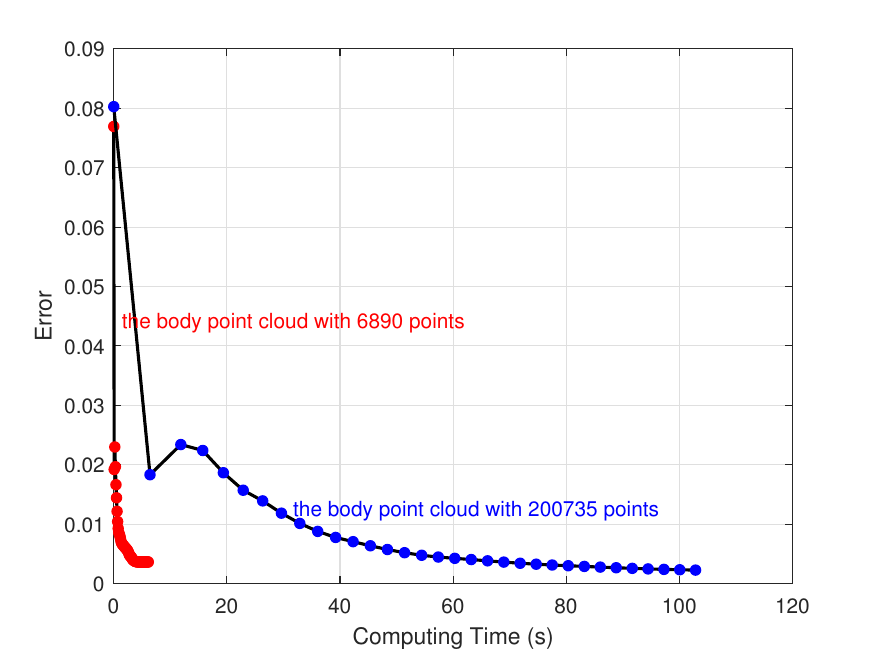}%
\hfil
\caption{ \textbf{Computation time vs. registration residuals for 3D Analytic-ICP.} 
    The plots show the trade-off between computation time and registration accuracy for two human body point clouds used in \figref{fig:body4}. Higher-order Taylor expansions lead to lower residuals at the cost of increased computation time.}
\label{fig:Compare3DPlot}
\end{figure}

Although our primary focus lies in the theoretical formulation and analytic expressiveness of the proposed registration model, we also conducted large-scale registration experiments to demonstrate its scalability and numerical robustness in practical settings.

\subsubsection{Performance on Large-Scale 3D Point Clouds}

To further demonstrate efficiency and scalability, we applied Analytic-ICP to two SHREC’19 models
with $6{,}890$ and $200{,}735$ points. Each model was perturbed by a structured analytic warp of
controlled complexity, generated by composing low-order (first/second) truncated Taylor maps. We
consider three complexity levels: (i) compositions of first-order maps, (ii) compositions of
second-order maps to an effective degree $\le 2^{8}$, and (iii) to an effective degree $\le
2^{15}$.

We then registered the original (fixed) model to each perturbed (moving) version using
Analytic-ICP. As the effective Taylor order increases (via stagewise order lifting), residuals
steadily decrease, indicating stable, expressive fits on large point sets; see
Fig.~\ref{fig:body4} for visual results across orders and Fig.~\ref{fig:Compare3DPlot} for a
representative qualitative summary.

On the $200{,}735$-point model, Analytic-ICP converged in \textbf{102.86\,s} (final RMSE:
\textbf{0.0022}); on the $6{,}890$-point model, it converged in \textbf{4.6\,s} (final RMSE:
\textbf{0.0036}). These results corroborate both the expressive power (via order lifting) and the
favorable scalability of our approach.

\begin{figure*}[htb]
\centering
\scriptsize \textbf{Heterogeneous registration: Analytic-ICP vs.\ CPD (2D)}

\vspace{2mm}

\begin{minipage}{0.30\linewidth}
  \centering
  \includegraphics[width=0.58\linewidth]{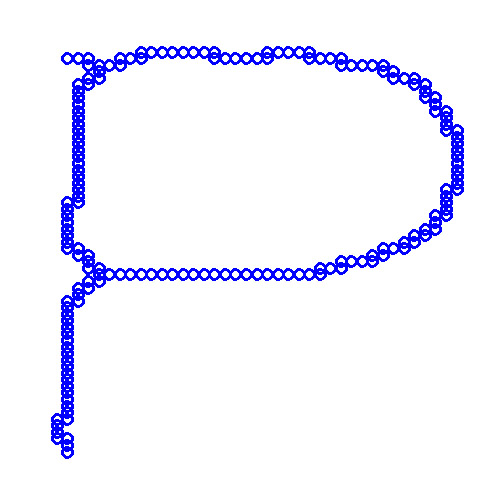}\\[0.8mm]
  \textbf{original (moving)}
\end{minipage}
\begin{minipage}{0.30\linewidth}
  \centering
  \includegraphics[width=0.58\linewidth]{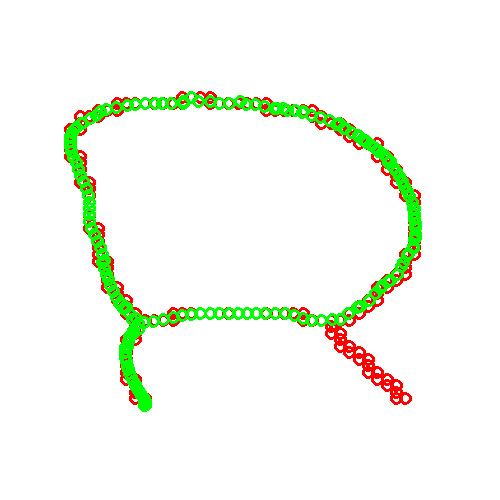}\\[0.8mm]
  \textbf{Analytic-ICP}\\
  residual: 0.0084,\quad time: 0.096\,s
\end{minipage}
\begin{minipage}{0.30\linewidth}
  \centering
  \includegraphics[width=0.58\linewidth]{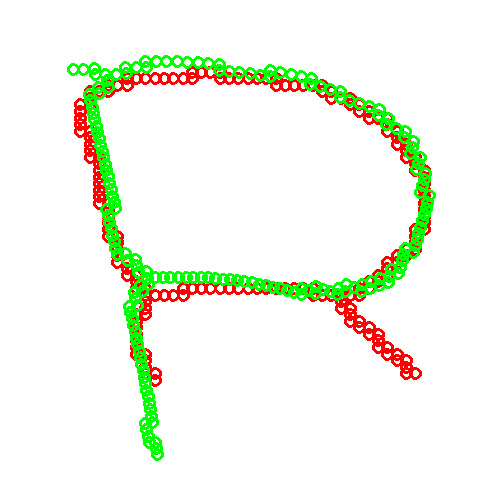}\\[0.8mm]
  \textbf{CPD}\\
  residual: 0.6345,\quad time: 1.331\,s
\end{minipage}

\vspace{3mm}

\begin{minipage}{0.30\linewidth}
  \centering
  \includegraphics[width=0.58\linewidth]{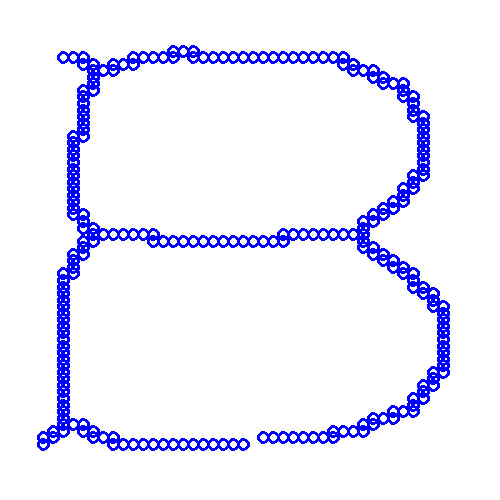}\\[0.8mm]
  \textbf{original (moving)}
\end{minipage}
\begin{minipage}{0.30\linewidth}
  \centering
  \includegraphics[width=0.58\linewidth]{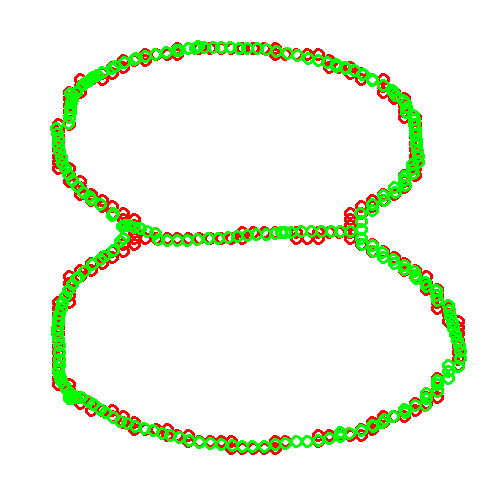}\\[0.8mm]
  \textbf{Analytic-ICP}\\
  residual: 0.0119,\quad time: 0.129\,s
\end{minipage}
\begin{minipage}{0.30\linewidth}
  \centering
  \includegraphics[width=0.58\linewidth]{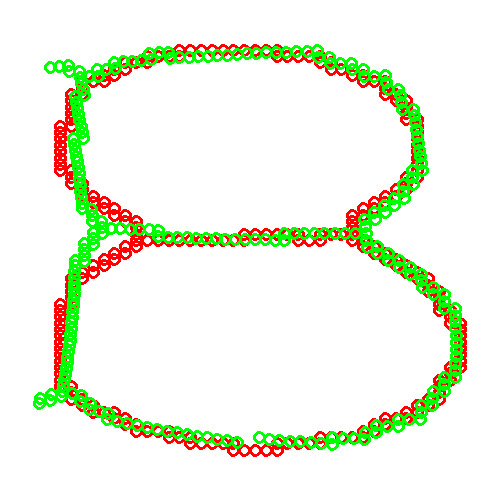}\\[0.8mm]
  \textbf{CPD}\\
  residual: 0.8420,\quad time: 2.507\,s
\end{minipage}

\vspace{2mm}
\caption{\textbf{Heterogeneous 2D registration under identical settings.}
Red points: fixed set; green: transformed moving set. Analytic-ICP uses staged fitting
(rigid $\rightarrow$ affine $\rightarrow$ \emph{low-order Taylor lifting across iterations}),
whereas CPD performs probabilistic soft matching. On both pairs, Analytic-ICP attains
lower residuals with markedly shorter runtime. Wall-clock times are comparable \emph{within
this figure} (same single-threaded build and hardware).}
\label{fig:heteroCompare}
\end{figure*}

\subsection{Comparative Experiments}
\label{sec:bow_results}
\paragraph{Scope and rationale}
Our comparative experiments primarily target \textbf{small-scale point sets}, consistent with the paper’s goal of developing a robust, principled \emph{foundational registration algorithm}. Focusing on small instances isolates algorithmic behavior from confounders such as GPU acceleration, batch parallelism, and platform-dependent throughput, while remaining practically relevant for industrial/embedded/edge scenarios~\cite{vizzo2023kiss}. 

\paragraph{Baselines}
We compare against two widely recognized, theoretically compact methods: \textbf{TPS–RPM} and \textbf{CPD}. TPS–RPM models deformation with thin-plate spline regularization (a solid-mechanics view), whereas CPD enforces motion coherence via a probabilistic Gaussian mixture formulation. Our \textbf{Analytic-ICP} instead fits a structured multivariate analytic map. From a continuum perspective, these span complementary priors (solid-like, flow-like, and analytic), providing a balanced benchmark suite.\footnote{BCPD~\cite{BCPD} extends CPD with Bayesian priors but retains the $O(N^{2})$ correspondence structure and adds overhead from prior inference. Because our focus is on the underlying analytic mapping rather than probabilistic modeling, we do not include BCPD in the main comparisons.}

\paragraph{Evaluation protocol}
All methods operate on zero-centered, unit-scale data and use identical stopping rules where applicable. Unless stated otherwise, CPD uses $\lambda = 3.0,\ \beta = 3.0,\ \omega = 0.1$; TPS–RPM uses $\lambda\!=\!1$, annealing temperature $T\!=\!1$, schedule ratio $r\!=\!0.9$. Analytic-ICP follows the staged pipeline (rigid $\rightarrow$ affine $\rightarrow$ structured Taylor) with residual–gated order lifting and shared ICP correspondence updates. Timing within each subsection is measured on the \emph{same} machine and single-threaded build to ensure fairness.

\subsubsection{Comparison on 2D Non-rigid Registration}
\label{sec:experiment-2d}
\begin{figure*}[ht]
    \centering
    \scriptsize \textbf{Comparison of registration performance for small 2D deformations using Analytic-ICP and CPD.}
    \vspace{2mm}

    \begin{minipage}{0.3\linewidth}
        \centering
        \includegraphics[width=\linewidth]{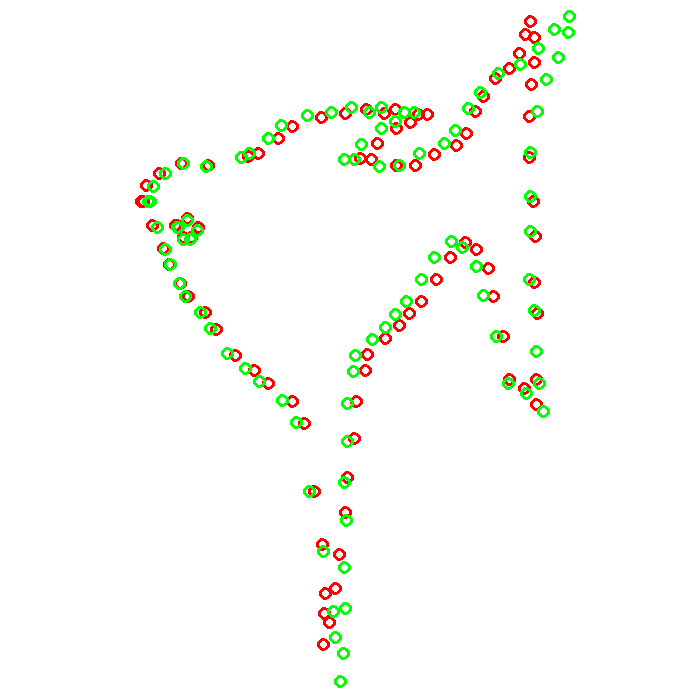}
    \end{minipage}
    \begin{minipage}{0.3\linewidth}
        \centering
        \includegraphics[width=\linewidth]{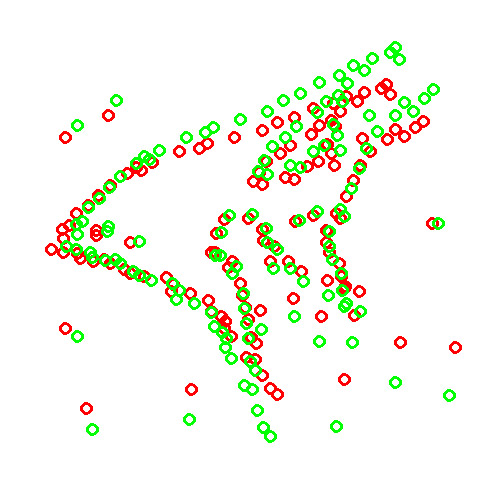}
    \end{minipage}
    \begin{minipage}{0.3\linewidth}
        \centering
        \includegraphics[width=\linewidth]{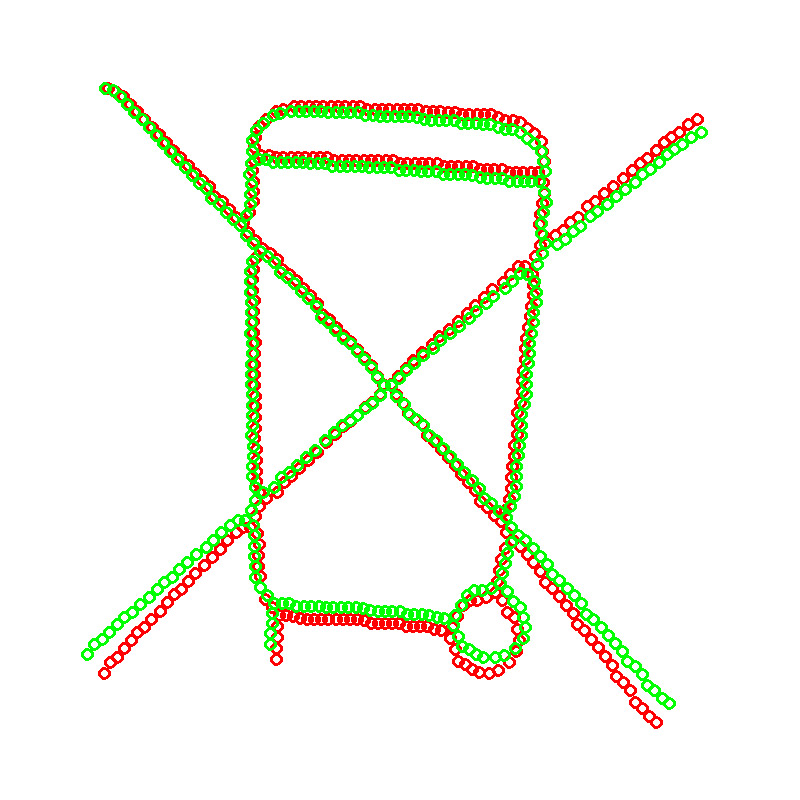}
    \end{minipage}

    \vspace{2mm}

    \begin{minipage}{0.3\linewidth}
        \centering
        \includegraphics[width=\linewidth]{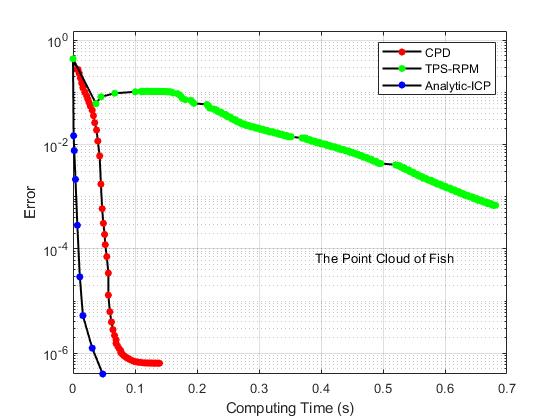}
        \vspace{1mm}
        
        \textbf{\scriptsize registration of a fish \\for small deformation}
    \end{minipage}
    \begin{minipage}{0.3\linewidth}
        \centering
        \includegraphics[width=\linewidth]{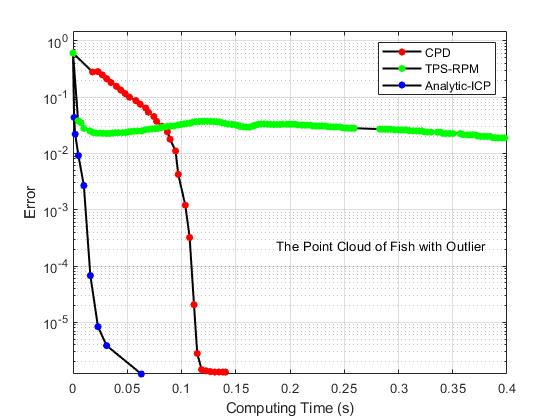}
        \vspace{1mm}
        
        \textbf{\scriptsize registration of a fish with noise \\for small deformation}
    \end{minipage}
    \begin{minipage}{0.3\linewidth}
        \centering
        \includegraphics[width=\linewidth]{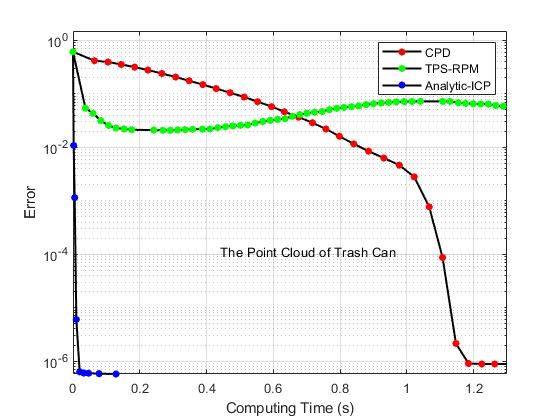}
        \vspace{1mm}
        
        \textbf{\scriptsize registration of a trash can \\for small deformation}
    \end{minipage}

   \caption{
\textbf{Error vs. Computation Time for Analytic-ICP, CPD, and TPS-RPM on 2D Registration Tasks.} 
The top row displays the registration targets: three representative 2D point sets (green) and their perturbed versions (red). 
The bottom row plots the residual error against computation time for the three algorithms. 
Analytic-ICP demonstrates superior accuracy and speed in handling small, smooth deformations.
}
    \label{fig:isoContrast2d}
\end{figure*}

\paragraph{Heterogeneous pairs (Analytic-ICP vs.\ CPD)}
We evaluated \textbf{two heterogeneous pairs} of point sets and compared \textbf{Analytic-ICP} with \textbf{CPD} under identical preprocessing, stopping rules, and \emph{the same} single-threaded hardware build. Analytic-ICP followed the staged pipeline (rigid $\rightarrow$ affine $\rightarrow$ structured Taylor) with \emph{order lifting across outer iterations}. In this run we set an order cap $m_w=8$ and used $k_{\mathrm{step}}{=}1$, executing eight outer iterations; the resulting composed map is effectively high-order (factorial-like growth when each stage contributes a quadratic lift).\footnote{With quadratic per-stage lifts, the effective degree is upper-bounded by $2\cdot3\cdot\cdots\cdot(1+\text{\#lifts})$, e.g., $\le 8!$ after eight lifts; we report this qualitatively to avoid over-specifying run-dependent details.}

As shown in \figref{fig:heteroCompare}, \textbf{Analytic-ICP} yields \emph{lower residuals} and \emph{substantially shorter runtimes} than \textbf{CPD} on these heterogeneous pairs. We attribute this to two factors: (i) CPD’s dense $N{\times}N$ correspondence matrix (E-step) is both computationally heavy and sensitive to ambiguous matches that arise in heterogeneous regions, and (ii) our structured analytic model (AMVFF) provides a strong, low-parameter prior that, when combined with staged fitting, produces sharper, better-conditioned updates.

\emph{Scope.} The principal limitation in this experiment is the \emph{nearest-neighbor correspondence} used in the ICP outer loop: the method is most reliable when the initial pose places a majority of pairs within the ICP basin. This limitation stems from the correspondence mechanism rather than from the analytic model itself; the AMVFF representation remains expressive and is compatible with soft/joint correspondence strategies if desired.

\begin{figure*}[!t]
\centering
\includegraphics[width=5.5in]{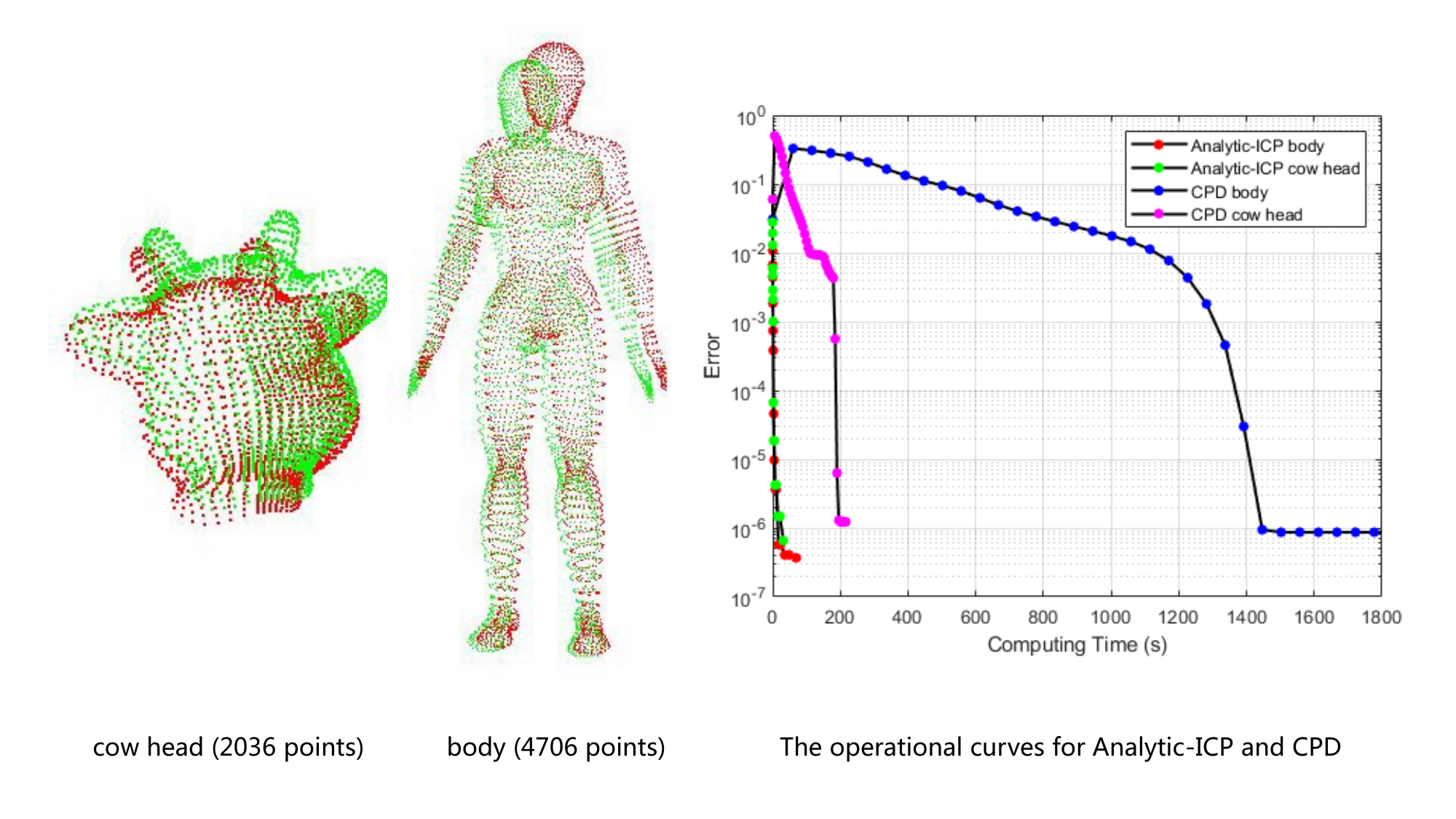}%
\hfil
\caption{%
\textbf{Comparison of the registration performance of Analytic-ICP and CPD on distorted 3D point cloud models.} %
The original point cloud is shown in red, and the perturbed version is shown in green. Each algorithm attempts to align the original point cloud to its distorted counterpart. The figure illustrates the residuals across iterations, highlighting the differences in convergence speed and final accuracy between the two methods.
}
\label{fig:isoCompare3d}
\end{figure*}

\paragraph{Isomorphic pairs of point sets}
{
Analytic-ICP is not designed for very large deformations—where CPD and similar probabilistic matchers often prevail—but it shows clear strengths under \emph{smooth, small} deformations: high accuracy at much lower runtime. This makes it well suited as a reliable local optimizer (fine registration) in practical pipelines.

In this experiment we use the staged pipeline in Algorithm~\ref{alg:analyticcode} with \emph{order lifting across iterations}: each outer iteration fits a low-order (e.g., quadratic) structured Taylor \emph{factor}, and the current map is updated by composing these factors. Thus the effective model order increases progressively without ever fitting a single high-degree polynomial in one shot.

We evaluated three small 2D sets—a fish (91 points), a noisy fish (127 points), and a trash can (359 points). Each target was generated by applying a \emph{small} truncated-Taylor perturbation to the source (first row of \figref{fig:isoContrast2d}; green: original, red: perturbed). We then registered the originals to their perturbed counterparts using Analytic-ICP, CPD, and TPS-RPM.

As shown in the second row of \figref{fig:isoContrast2d} and summarized in \tabref{tab:computingtime2d} and \tabref{tab:residual2d}, Analytic-ICP consistently attains lower residuals with substantially shorter runtime. The advantage grows with point count: for the fish set Analytic-ICP is about \( \sim\!3\times \) faster than CPD, and for the trash-can set (359 points) the speedup exceeds an order of magnitude, while maintaining better or comparable accuracy.

}
 
\begin{table}[ht]
\centering
\renewcommand{\arraystretch}{1.3}
\setlength{\tabcolsep}{10pt}
\caption{\textbf{Registration Residuals on 2D Point Sets.} Comparison of residual errors for CPD, TPS-RPM, and Analytic-ICP on three 2D datasets with small deformations. Lower is better.}
\label{tab:residual2d}
\begin{tabular}{lccc}
\toprule
\rowcolor{maroon!50}
\textbf{Algorithm} & \textbf{Fish} & \textbf{Fish+Noise} & \textbf{Trash Can} \\
\midrule
\rowcolor{maroon!10}
CPD & 0.139 & 0.141 & 1.301 \\
\rowcolor{maroon!10}
TPS-RPM & 0.681 & 0.953 & 7.442 \\
\rowcolor{maroon!10}
\textbf{Analytic-ICP} & \textbf{\textcolor{petr}{0.048}} & \textbf{\textcolor{petr}{0.063}} & \textbf{\textcolor{petr}{0.129}} \\
\bottomrule
\end{tabular}
\end{table}

\begin{table}[ht]
\centering
\renewcommand{\arraystretch}{1.3}
\setlength{\tabcolsep}{10pt}
\caption{\textbf{Registration Time on 2D Point Sets (in seconds).} Comparison of execution time for CPD, TPS-RPM, and Analytic-ICP on three 2D datasets. Lower is better.}
\label{tab:computingtime2d}
\begin{tabular}{lccc}
\toprule
\rowcolor{maroon!50}
\textbf{Algorithm} & \textbf{Fish} & \textbf{Fish+Noise} & \textbf{Trash Can} \\
\midrule
\rowcolor{maroon!10}
CPD & 0.00000064 & 0.0000013 & 0.00000088 \\
\rowcolor{maroon!10}
TPS-RPM & 0.000686 & 0.000387 & 0.000576 \\
\rowcolor{maroon!10}
\textbf{Analytic-ICP} & \textbf{\textcolor{petr}{0.00000040}} & \textbf{\textcolor{petr}{0.0000012}} & \textbf{\textcolor{petr}{0.00000057}} \\
\bottomrule
\end{tabular}
\end{table}

\paragraph{Statistical Analysis: Deformation vs.\ Residual}
{
To examine robustness across deformation levels, we studied how the final residual varies with the \emph{initial} post-rigid RMSD. Starting from the normalized fish point set, we generated random perturbations using a fixed fifth-order truncated Taylor map. After rigid pre-alignment, the resulting RMSD served as a proxy for deformation magnitude. Representative perturbations at the six RMSD levels are visualized in Fig.~\ref{fig:deformandresidual}.

We binned the initial RMSD into six narrow intervals (each with $\approx 10$ retained samples). For every sample, Analytic-ICP, CPD, and TPS-RPM were run to register the original to the perturbed shape; per-bin residuals were then aggregated to produce Fig.~\ref{fig:errorandresidual}. As expected, when the initial error is small (within the ICP convergence basin), Analytic-ICP attains the lowest residuals. As the initial error grows and correspondences become ambiguous, CPD/TPS-RPM can overtake in residual accuracy.
}

\begin{figure}[!ht]
\centering
\includegraphics[width=4in]{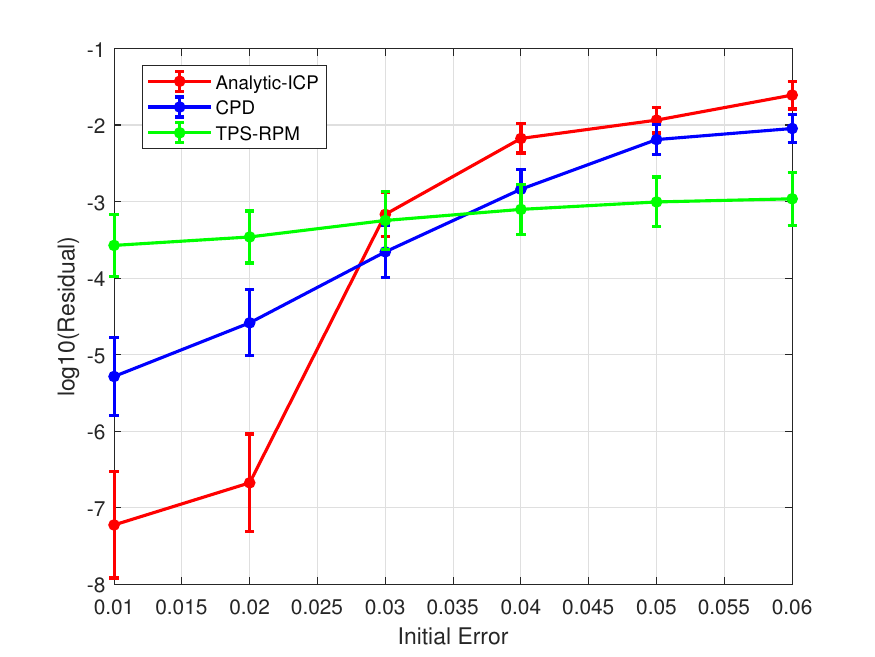}
\caption{\textbf{Residuals vs.\ initial RMSD (log-scale $y$).}
Each curve aggregates results over RMSD bins ($\approx\!10$ retained samples/bin). Minor discrepancies in bin widths/normalization and sampling variability may slightly shift individual points, but the overall trends and method ranking remain consistent.}
\label{fig:errorandresidual}
\end{figure}

\begin{figure*}[ht]
    \centering
    \scriptsize \textbf{The magnitude of model deformation due to random perturbation}
    \vspace{2mm}

    \begin{minipage}{0.15\linewidth}
        \centering
        \includegraphics[width=\linewidth]{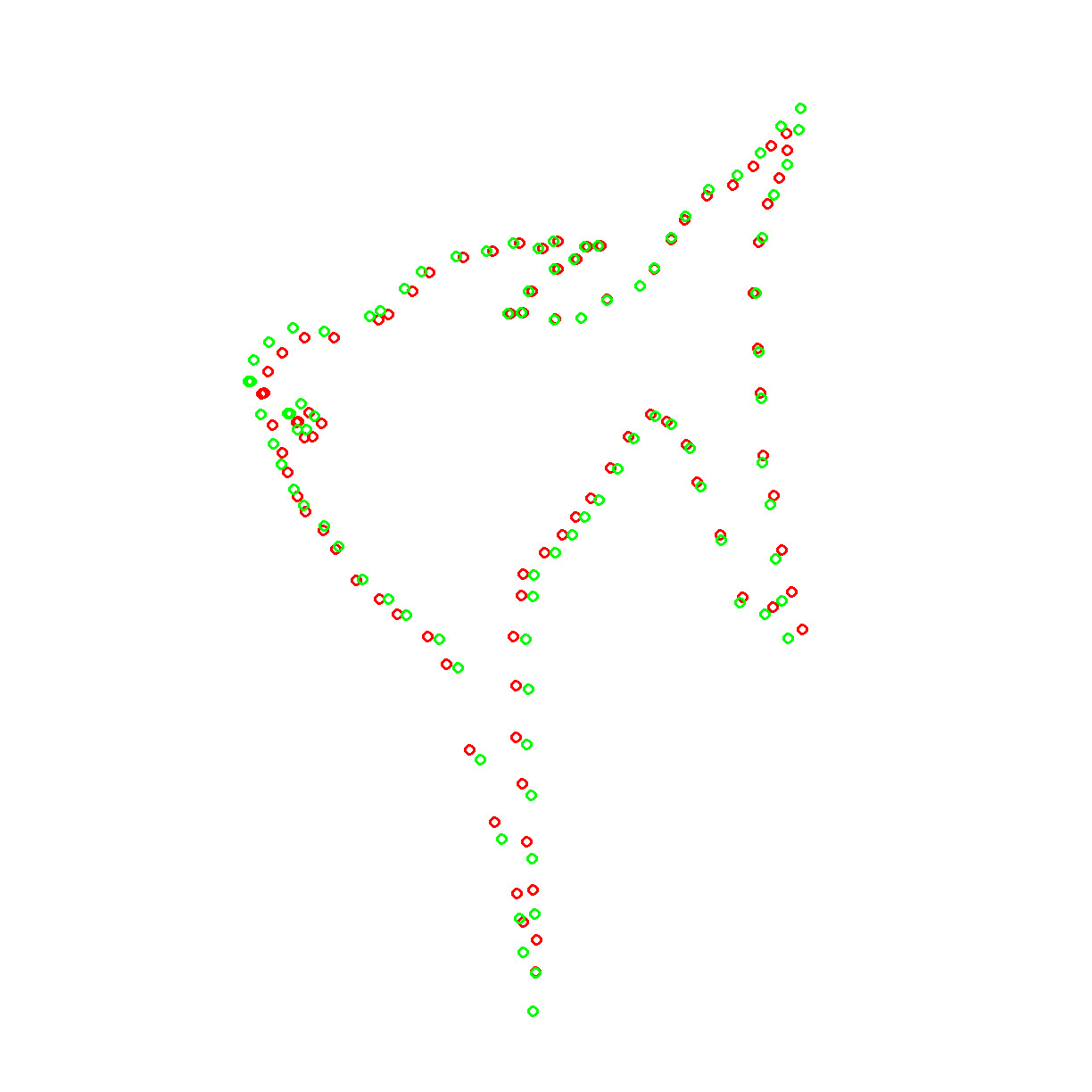}
        \vspace{1mm}
        
        RMSD $\approx$ 0.01
    \end{minipage}
    \hspace{1mm}
    \begin{minipage}{0.15\linewidth}
        \centering
        \includegraphics[width=\linewidth]{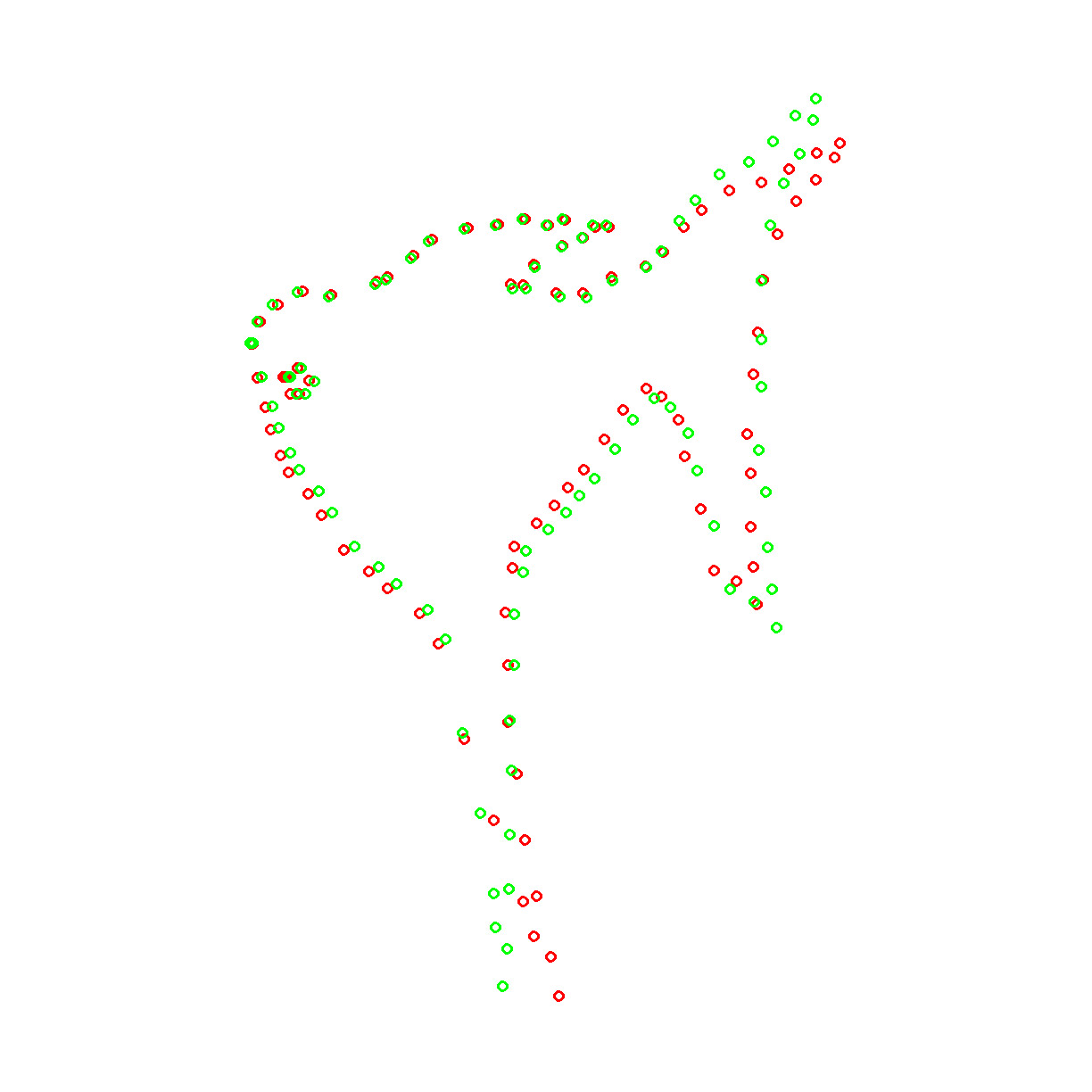}
        \vspace{1mm}
        
        RMSD $\approx$ 0.02
    \end{minipage}
    \hspace{1mm}
    \begin{minipage}{0.15\linewidth}
        \centering
        \includegraphics[width=\linewidth]{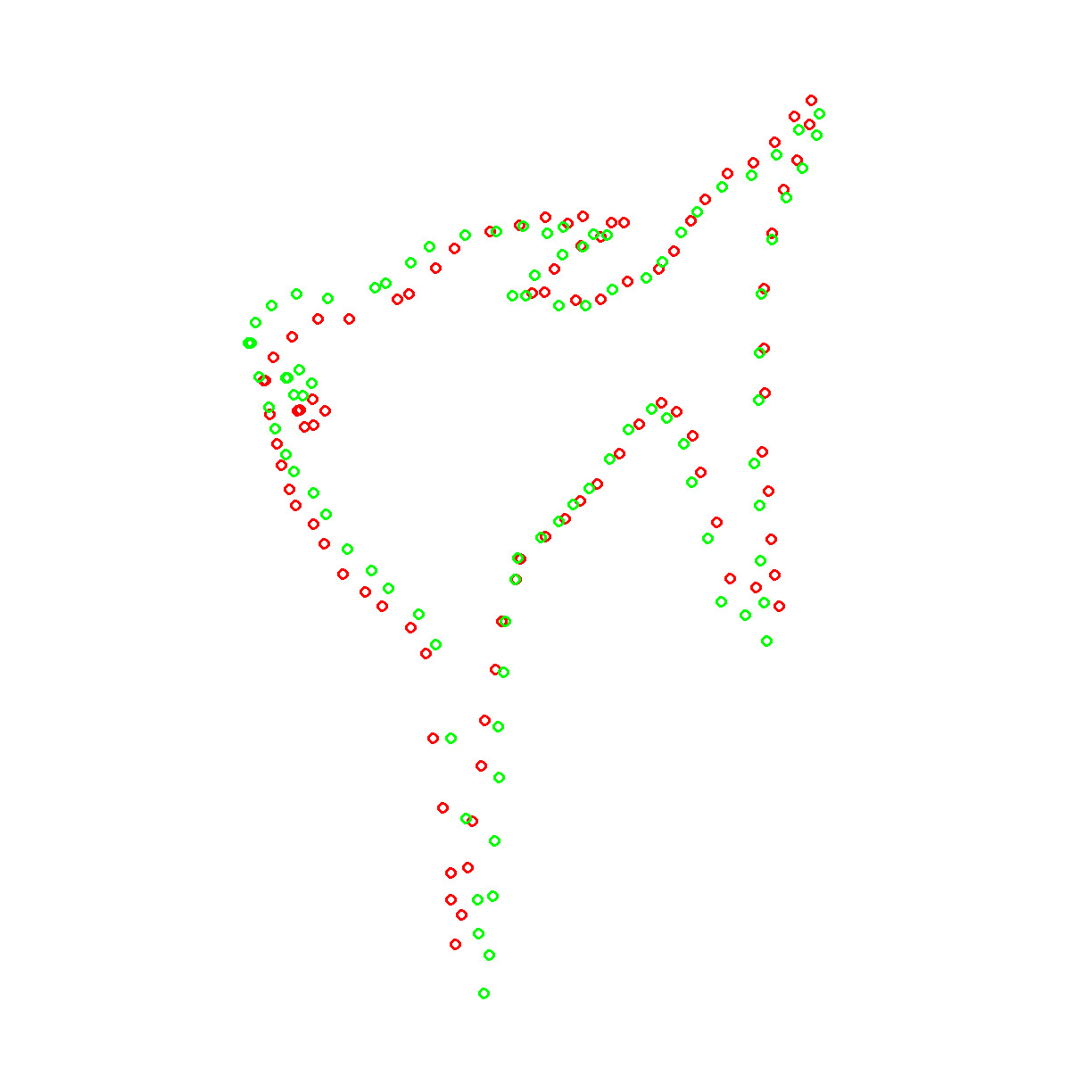}
        \vspace{1mm}
        
        RMSD $\approx$ 0.03
    \end{minipage}
    \hspace{1mm}
    \begin{minipage}{0.15\linewidth}
        \centering
        \includegraphics[width=\linewidth]{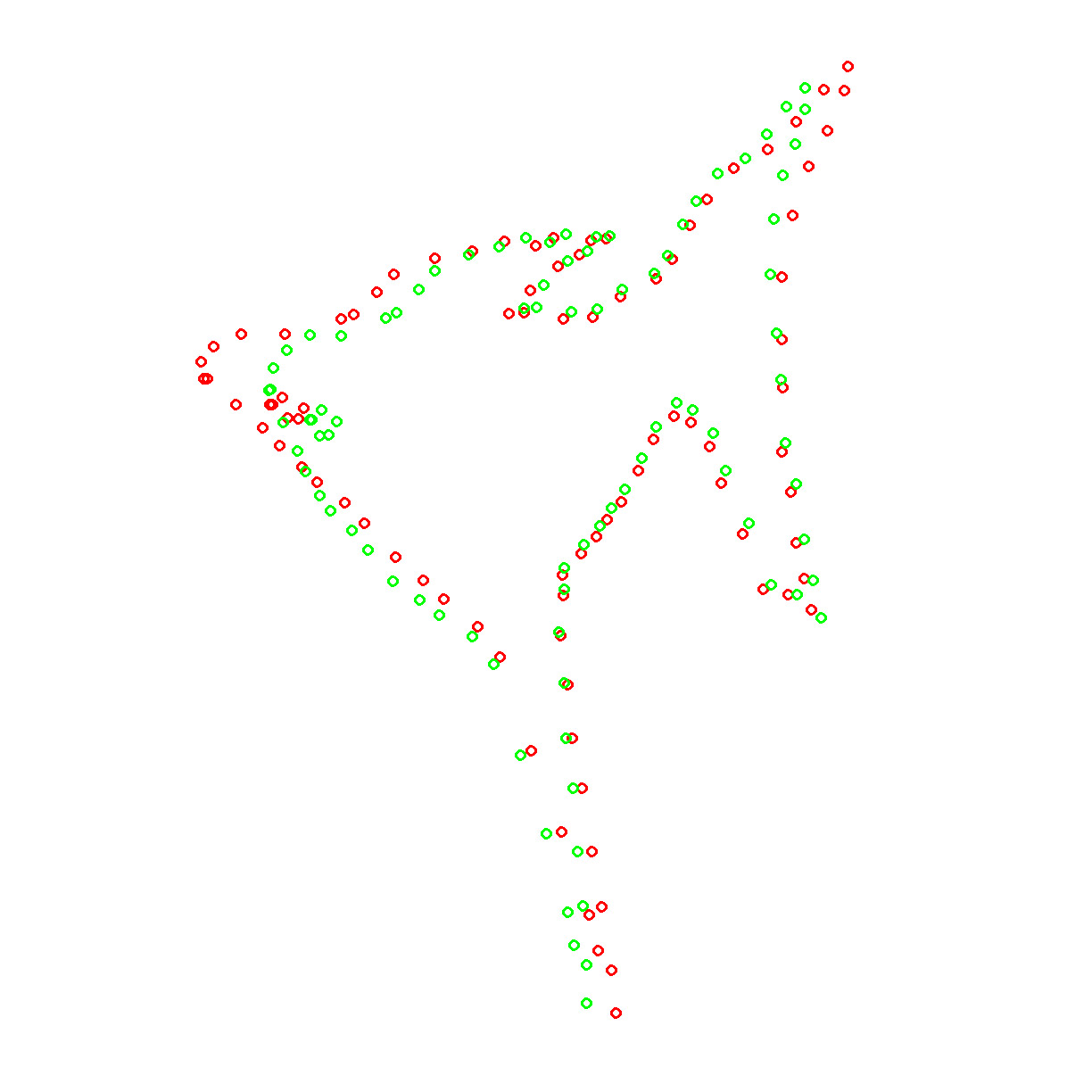}
        \vspace{1mm}
        
        RMSD $\approx$ 0.04
    \end{minipage}
    \hspace{1mm}
    \begin{minipage}{0.15\linewidth}
        \centering
        \includegraphics[width=\linewidth]{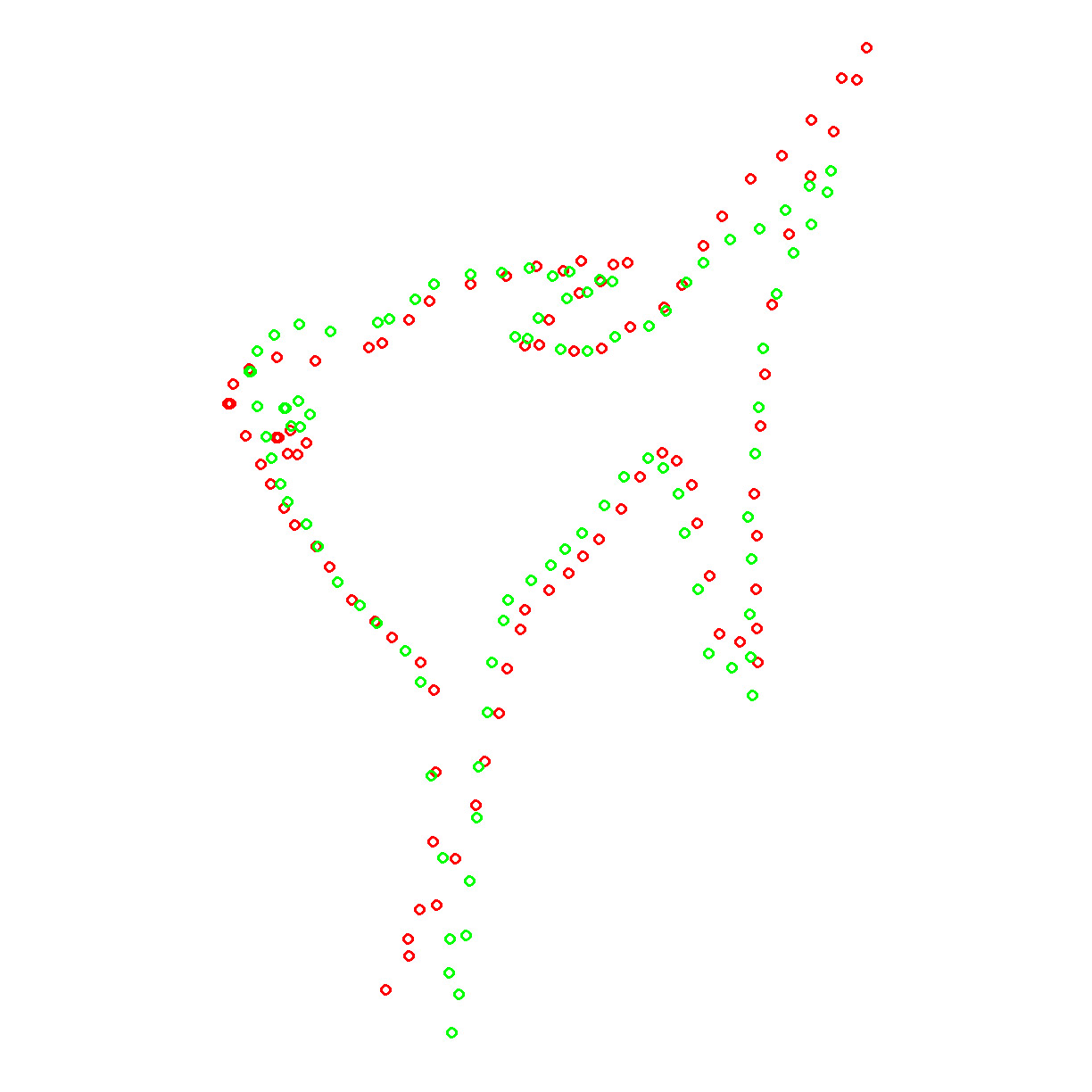}
        \vspace{1mm}
        
        RMSD $\approx$ 0.05
    \end{minipage}
    \hspace{1mm}
    \begin{minipage}{0.15\linewidth}
        \centering
        \includegraphics[width=\linewidth]{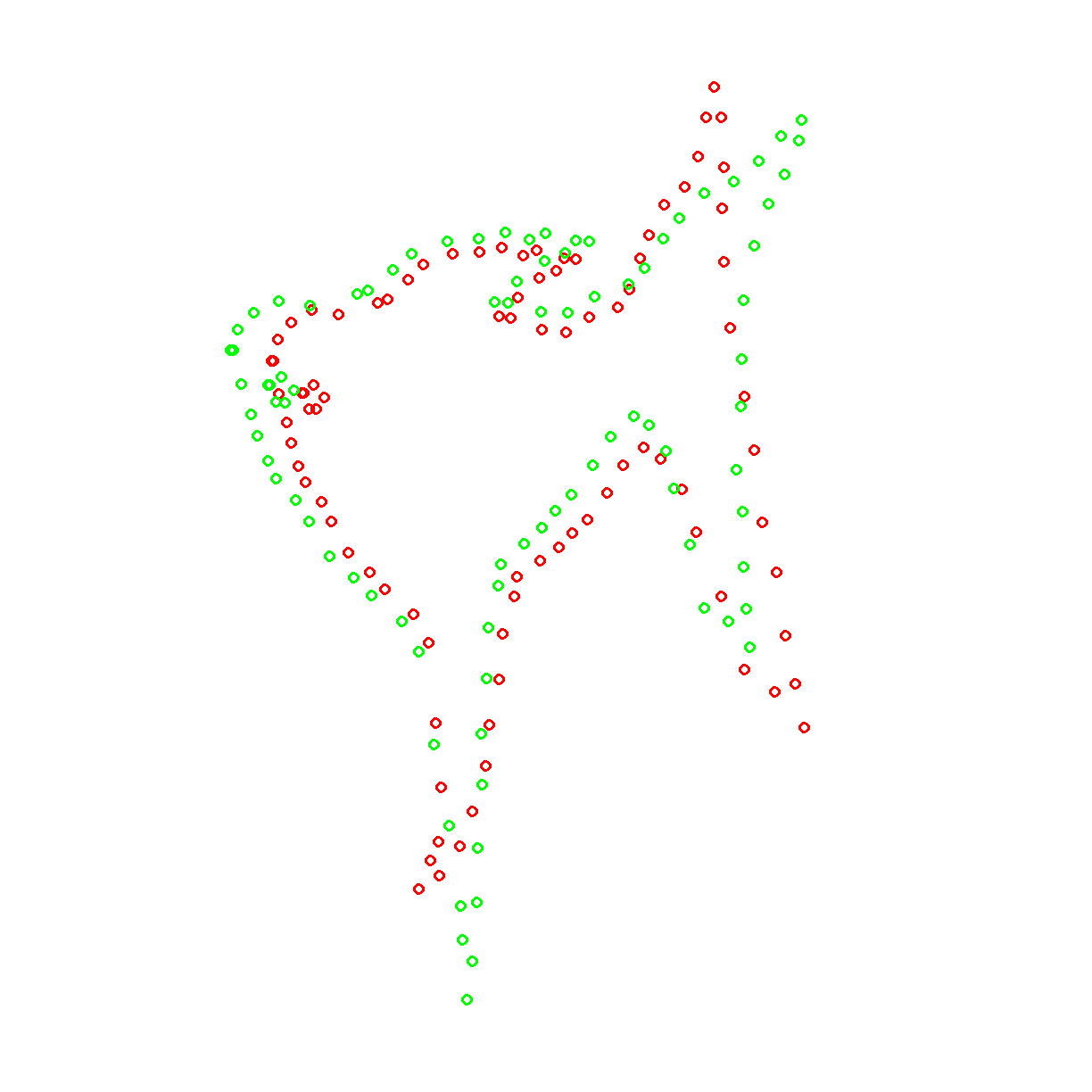}
        \vspace{1mm}
        
        RMSD $\approx$ 0.06
    \end{minipage}

    \caption{\textbf{Visualization of increasing deformation magnitude in 2D point sets.} All deformations are generated using randomly truncated Taylor expansions, with increasing root mean square distances (RMSD) from left to right.}
    \label{fig:deformandresidual}
\end{figure*}

\begin{table}[ht]
\centering
\renewcommand{\arraystretch}{1.4}
\setlength{\tabcolsep}{10pt}
\caption{\textbf{Registration performance on large 3D point clouds.} Both residuals and computation time are reported for CPD and Analytic-ICP on two datasets.}
\label{tab:ComputingTime3d}
\begin{tabular}{lcc}
\toprule
\rowcolor{maroon!50}
\textbf{Algorithm} & \textbf{Cow head (2036 pts)} & \textbf{Body (4706 pts)} \\
\midrule
\rowcolor{maroon!10}
CPD & Time: 215.86s, Residual: 1.23e-6 & Time: 2108.01s, Residual: 8.7e-7 \\
\rowcolor{maroon!10}
\textbf{Analytic-ICP} & \textcolor{petr}{Time: 31.86s, Residual: 6.6e-7} & \textcolor{petr}{Time: 69.57s, Residual: 3.7e-7} \\
\bottomrule
\end{tabular}
\end{table}

\subsubsection{Comparison on 3D Non-rigid Registration}

We evaluate 3D non-rigid registration on two small-scale point clouds: a cow head (2{,}036 points; course assignment data) and a mannequin (4{,}706 points; SHREC’19). Following the 2D protocol, we use the same Analytic-ICP settings as in Sec.~\ref{sec:experiment-2d} and generate \emph{small, smooth} deformations by a third-order truncated Taylor map: the first-order diagonal is fixed to $1$, while other coefficients are sampled i.i.d.\ from $[-0.2,\,0.2]$. Each original cloud is registered to its perturbed counterpart by \textbf{Analytic-ICP} and \textbf{CPD}.

As shown in Fig.~\ref{fig:isoCompare3d} and Table~\ref{tab:ComputingTime3d}, under small deformations Analytic-ICP achieves lower residuals and shorter runtimes than CPD on both models. The same settings used in 2D transfer directly to 3D, indicating good robustness across dimensions. Moreover, the gap widens with the number of points, consistent with the lower time/space complexity of Analytic-ICP.

\begin{figure}[!ht]
\centering
\includegraphics[width=0.9\linewidth]{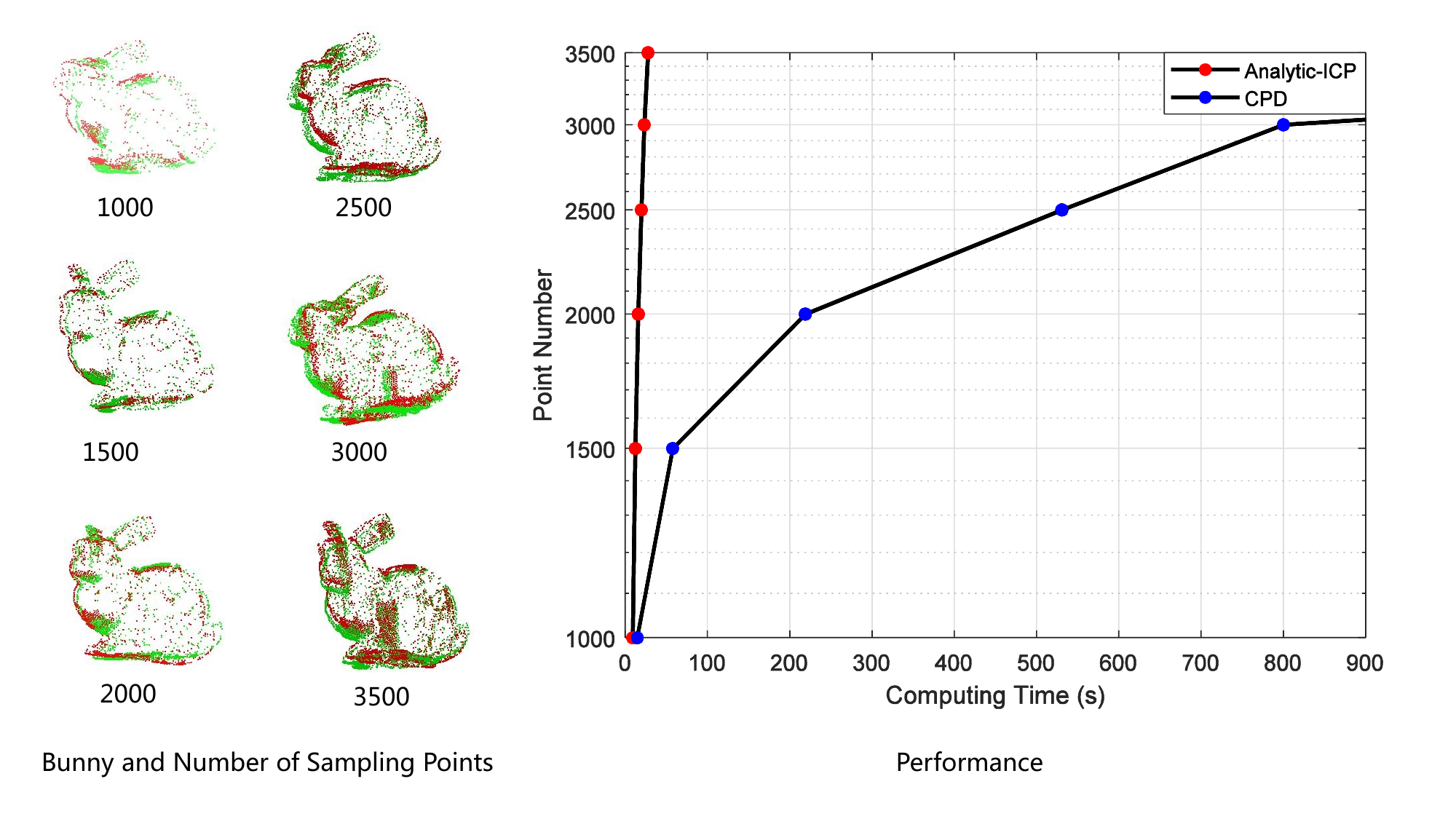}
\caption{\textbf{Runtime vs.\ point count (3D).}
Analytic-ICP scales favorably as the point count increases, whereas CPD exhibits a much steeper growth, highlighting Analytic-ICP's suitability for larger 3D sets.}
\label{fig:numberandtime}
\end{figure}

We further study scaling on the Stanford Bunny by uniform down/upsampling to $1{,}000$, $1{,}500$, $2{,}000$, $2{,}500$, $3{,}000$, and $3{,}500$ points. For each size, we synthesize deformations (truncated Taylor) while keeping the post-rigid RMSD at $\approx 0.05$ to normalize difficulty. On identical inputs, we compare Analytic-ICP and CPD and report runtimes when Analytic-ICP reaches residuals no worse than CPD, ensuring a fair timing comparison at matched (or better) accuracy. Results in Fig.~\ref{fig:numberandtime} show clearly improved scaling for Analytic-ICP, whereas CPD slows markedly with size, supporting the practical advantage of Analytic-ICP for larger 3D registration tasks.

\subsubsection{Registration Under Heterogeneous \texorpdfstring{$C^\infty$}{C-infinity} Non-Analytic Deformations}

To stress-test robustness beyond analytic maps, we construct a \textbf{globally smooth but non-analytic}
deformation using compactly supported $C^\infty$ bump weights—i.e., cases not representable by a single
Taylor polynomial. Experiments in this subsection were run on a different workstation
(\emph{AMD Ryzen 7 5800H @ 3.20\,GHz, 32\,GB RAM}) with a 64-bit single-threaded C++ build.
Unless otherwise noted, comparison settings (Analytic-ICP vs.\ CPD) follow the earlier sections; wall-clock
times are therefore comparable \emph{within this subsection} but not across hardware.

For Analytic-ICP, we employ a Taylor-order lifting controller with $m_w=10$, enabling single-step
10th-order fits, which we found to be numerically stable and accurate; the outer iteration count
remains modest and is not critical for this experiment.

\paragraph{Non-analytic deformation model}
We use a two-center, smoothly blended quadratic field:
\begin{itemize}
  \item Two second-order coefficient blocks \(Q_1,Q_2\in\mathbb{R}^{3\times 6}\) and two centers \(c_1,c_2\in\mathbb{R}^3\).
  \item The baseline Taylor coefficients follow our standard initialization: for each order \(k=0,\ldots,\deg\), the entries are drawn i.i.d.\ from \(\mathcal{U}[-0.07,0.07]\); the first-order diagonal is fixed to identity (\(A=\mathrm{diag}(1,1,1)\)), and the remaining first-order off-diagonals remain in \([-0.07,0.07]\). The translation \(t\) is small (from the zeroth-order block).
  \item Let \(\bar Q\in\mathbb{R}^{3\times 6}\) denote the backed-up \emph{second-order} block (consistent with Sec.~\ref{sec:analyticFit}). We form two fixed templates
  \[
    Q_1=\bar Q+\Delta_1,\qquad Q_2=\bar Q+\Delta_2,\qquad
    (\Delta_1,\Delta_2)_{ij}\sim\mathcal{U}[-0.03,0.03].
  \]
  \item Define the compactly supported $C^\infty$ bump
  \[
    b(r;\sigma)=
      \begin{cases}
        \exp\!\bigl(-\tfrac{1}{\,1-(r/\sigma)^2\,}\bigr), & r<\sigma,\\[2pt]
        0, & r\ge \sigma,
      \end{cases}
  \]
  and normalized weights
  \[
    w_i(y)=\frac{b(\|y-c_i\|;\sigma)}{\,b(\|y-c_1\|;\sigma)+b(\|y-c_2\|;\sigma)\,},\quad i=1,2,
  \]
  with \(\sigma=\tfrac12\|c_1-c_2\|\) (centers taken as the min/max-$x$ points of the model).
  \item The deformation for each \(y\in\mathbb{R}^3\) is
  \[
    \tau(y)\;=\;A\,y\;+\;t\;+\;\tfrac12\Big(\,w_1(y)\,Q_1+w_2(y)\,Q_2\,\Big)\,\phi^{[2]}(y-\mathfrak c),
  \]
  where \(\phi^{[2]}(\cdot)\) is the structured quadratic monomial vector used throughout (in 3D:
  \([\,(y_1-\mathfrak c_1)^2,\ 2(y_1-\mathfrak c_1)(y_2-\mathfrak c_2),\ 2(y_1-\mathfrak c_1)(y_3-\mathfrak c_3),\ (y_2-\mathfrak c_2)^2,\ 2(y_2-\mathfrak c_2)(y_3-\mathfrak c_3),\ (y_3-\mathfrak c_3)^2\,]^\top\)).
\end{itemize}

Because the bump \(b\) is $C^\infty$ but \emph{non-analytic} at \(r=\sigma\) (all derivatives vanish at the boundary while no convergent power series equals \(b\) there), the blended field is globally smooth yet non-analytic whenever \(Q_1\not\equiv Q_2\). This yields a heterogeneous, non-analytic test bed. A representative registration example under this deformation model is shown in Fig.~\ref{fig:body4smoothSpace}.

\begin{figure*}[htb]
  \centering
  \scriptsize \textbf{Analytic-ICP under heterogeneous $C^\infty$ non-analytic deformation}
  \vspace{2mm}

  \begin{minipage}[c]{0.32\textwidth}
    \centering
    \includegraphics[width=0.48\linewidth]{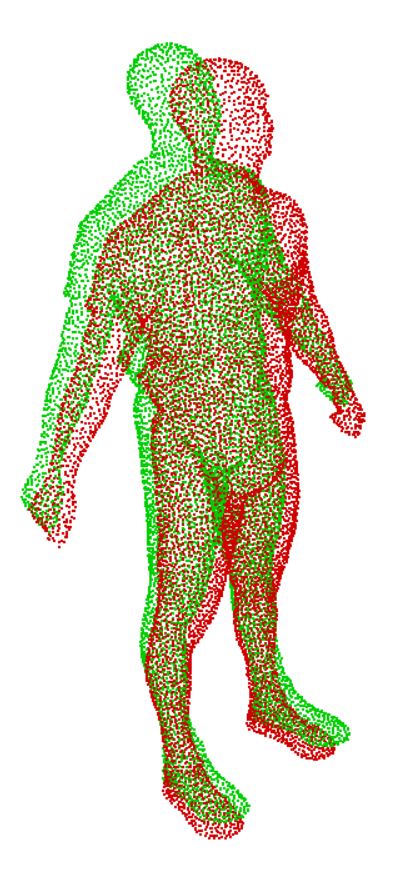}

    \textbf{Original models}\\
    Green: moving\quad Red: fixed
  \end{minipage}
  \hfill
  \begin{minipage}[c]{0.32\textwidth}
    \centering
    \includegraphics[width=0.48\linewidth]{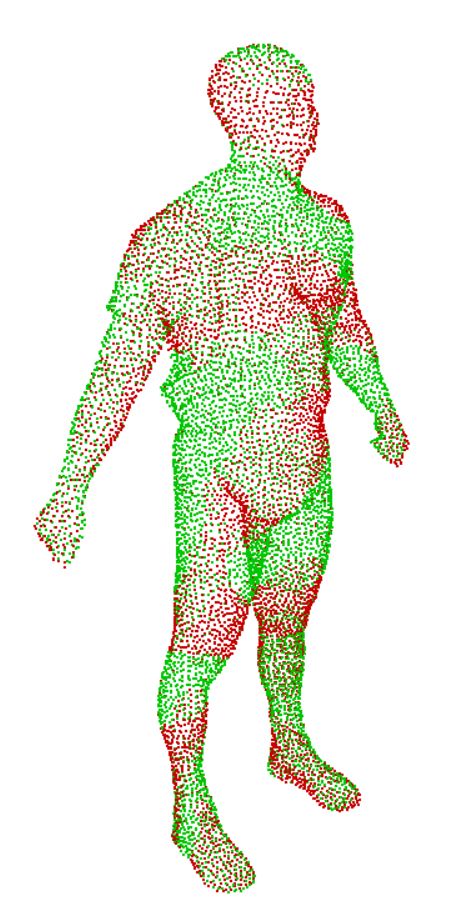}

    \textbf{CPD Result}\\
    RMSE \(= 7.07\times10^{-4}\)\\
    Time \(= 38{,}146.53\) s
  \end{minipage}
  \hfill
  \begin{minipage}[c]{0.32\textwidth}
    \centering
    \includegraphics[width=0.48\linewidth]{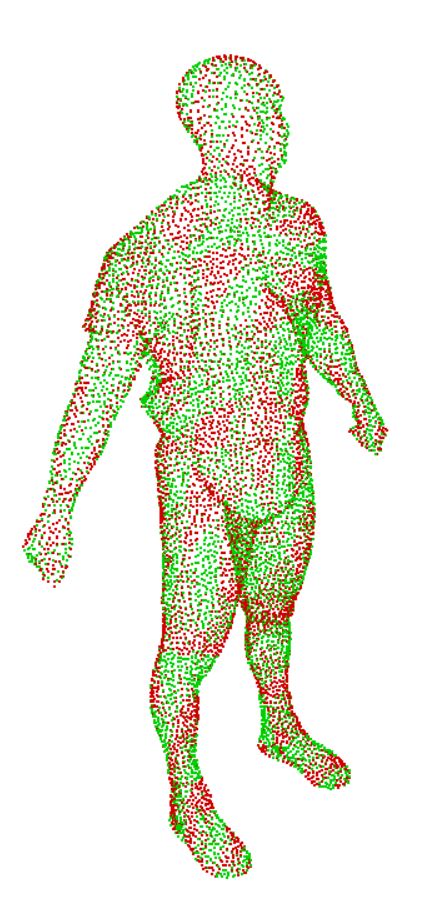}

    \textbf{Analytic-ICP Result}\\
    RMSE \(= 4.69\times10^{-4}\)\\
    Time \(= 7.621\) s
  \end{minipage}

  \caption{\textbf{Robustness to heterogeneous, smooth, non-analytic deformation.}
  The model (10{,}050 points) is perturbed by a two-center, bump-weighted quadratic field.
  On the same hardware, Analytic-ICP achieves lower error with dramatically reduced runtime compared with CPD.}
  \label{fig:body4smoothSpace}
\end{figure*}

\paragraph{Results}
We evaluated Analytic-ICP and CPD on a 3D human body point set with \(\,\mathbf{10{,}050}\,\) points.
After rigid alignment, the initial RMSD was \(\mathbf{0.03575}\).
Under identical parameter settings,
\textbf{Analytic-ICP} achieved \(\mathbf{RMSE=4.69\times10^{-4}}\) in \(\mathbf{7.621}\) s,
whereas \textbf{CPD} reached \(\mathbf{RMSE=7.07\times10^{-4}}\) in \(\mathbf{38{,}146.53}\) s.
Thus, on this heterogeneous $C^\infty$ non-analytic deformation, Analytic-ICP attains lower error with a $\sim$4-orders-of-magnitude speed advantage.

\subsubsection{On the Use of Analytic Deformation in Experiments}
We generate deformations using \emph{truncated multivariate Taylor expansions} consistent with our structured model. By Theorem~\ref{thm:main} and Lemma~\ref{lemma:structured_density_strong}, analytic/Taylor approximants are dense for smooth maps on compact domains; thus residuals under such synthetic deformations are a theoretically sound proxy for real smooth transformations. Compared with ad-hoc benchmarks, this construction affords precise control of magnitude, anisotropy, and nonlinearity by tuning $\mathcal{J}^{[k]}$ and the truncation order, enabling systematic tests of accuracy, conditioning, and convergence. The hierarchy is explicit: first order recovers affine (incl.\ rigid/similarity); higher orders introduce controlled non-affine effects. To our knowledge, this is the first systematic use of \emph{structured} Taylor expansions to synthesize deformation fields for point-set registration, providing a reproducible and principled evaluation space.

\FloatBarrier

\section{Discussion: ICP Correspondences vs.\ Model Capacity}
\label{sec:limitations}

Our strong results for small, smooth deformations reflect a limitation of the \emph{correspondence strategy} rather than the analytic mapping itself. The current instantiation uses ICP; when displacements are large or nonlocal, hard nearest-neighbor matches can be unreliable, causing suboptimal minima~\cite{TPS–RPM}. In contrast, the analytic approximation (AMVFF) is highly expressive: a single closed-form expansion unifies rigid, affine, and higher-order nonrigid terms, and composition of low-order lifts increases the effective order rapidly (Sec.~\ref{sec:analyticFit}).

It is therefore important to distinguish (i) the \textbf{geometric capacity} of AMVFF—able in principle to approximate arbitrary smooth deformations on compact sets—from (ii) the \textbf{practical capture range} of an ICP-style pipeline, which is governed by correspondence quality. Future variants can broaden this range by replacing ICP with robust or probabilistic matches (e.g., RPM/CPD/BCPD) while retaining the same AMVFF fitting core.

\subsection{Implementation and Reproducibility}
\label{sec:code}
We provide source code and scripts to reproduce all figures and tables:
\url{https://github.com/zhenglab/Analytic-ICP}.

\section{Conclusions}
\label{sec:conclusions}

This paper presents a structured approximation model for smooth vector-valued mappings, based on truncated Taylor expansions expressed in a matrix–vector form. To the best of our knowledge, this is the first application of such structured expansions in the context of computer vision and geometric registration.

The model unifies rigid, affine, and non-rigid transformations within a single analytic expression. Supported by the Structured Approximation Theorem and its associated density result, it can approximate any smooth registration mapping on compact domains with controllable complexity. The hierarchical formulation captures affine behavior at low orders and introduces nonlinear deformation through higher-order terms.

We embed this model into an ICP-style algorithm, Analytic-ICP, and demonstrate through experiments that it achieves competitive accuracy with significantly reduced computational cost—particularly as the number of points increases.

The method is well suited for fine registration tasks involving smooth deformations and provides a mathematically grounded alternative to heuristic deformation models. Future work will explore algorithmic acceleration, multi-center (atlas) structured expansions to better capture spatially heterogeneous deformations beyond a single global expansion center, and broader integration into unsupervised registration frameworks.

\appendix
\section{Proof of Theorem~\ref{thm:main}}\label{appendix:proof_structured_taylor}

We provide a complete proof of Theorem~\ref{thm:main}, which establishes the structured Taylor representation for analytic and smooth registration mappings.

\begin{proof}
Let $\tau : W \subset \mathbb{R}^n \to \mathbb{R}^n$ be a smooth (or analytic) registration mapping. Since $\mathbb{R}^n$ is isomorphic to a Cartesian product $\mathbb{R} \times \cdots \times \mathbb{R}$, we can decompose $\tau$ componentwise as:
\[
\tau(y) = (f_1(y), \dots, f_n(y)), \quad \text{with } f_i : W \to \mathbb{R}.
\]

\textbf{(i) Analytic Case.} Suppose $\tau$ is analytic on $W$. Then each $f_i$ is also analytic. By the theory of multivariate analytic functions, each $f_i$ admits a unique convergent Taylor series centered at any $\mathfrak{c} \in W$:
\[
f_i(y) = \sum_{|\alpha| = 0}^{\infty} \frac{D^\alpha f_i(\mathfrak{c})}{\alpha!} (y - \mathfrak{c})^\alpha.
\]
The structured Taylor framework reorganizes this multi-index expansion into a matrix-vector format:
\[
f_i(y) = \sum_{k=0}^{\infty} \frac{1}{k!} \mathcal{J}^{[k]}(f_i)(\mathfrak{c}) \cdot \phi^{[k]}(y - \mathfrak{c}),
\]
where:
- $\phi^{[k]} \in \mathbb{R}^{N_k}$ contains all order-$k$ generalized monomials,
- $\mathcal{J}^{[k]}(f_i) \in \mathbb{R}^{1 \times N_k}$ contains the corresponding mixed partial derivatives,
- and $N_k = \binom{n + k - 1}{n-1}$ is the number of $k$-th order multivariate monomials in $n$ variables.

Since each \( f_i \) admits a unique Taylor expansion, and the matrix-vector structure is preserved under aggregation, the full vector-valued mapping \( \tau \) has a unique representation in the structured Taylor basis:
\[
\tau(y) = \sum_{k=0}^{\infty} \frac{1}{k!} \mathcal{J}^{[k]}(\tau)(\mathfrak{c}) \cdot \phi^{[k]}(y - \mathfrak{c}),
\]
where $\mathcal{J}^{[k]}(\tau) \in \mathbb{R}^{n \times N_k}$ concatenates the rows for all $f_i$.

\textbf{(ii) Smooth Case.} Suppose now that $\tau$ is merely smooth ($C^\infty$). Let $K \subset W$ be a compact set. By the classical multivariate Weierstrass Approximation Theorem (or Whitney Approximation Theorem), for any $\epsilon > 0$, there exists a vector-valued analytic function $\tilde{\tau} : W \to \mathbb{R}^n$ such that:
\[
\sup_{y \in K} \|\tau(y) - \tilde{\tau}(y)\| < \epsilon.
\]
Each component $\tilde{f}_i$ of $\tilde{\tau} = (\tilde{f}_1, \dots, \tilde{f}_n)$ is analytic, and hence possesses a Taylor expansion in the structured basis:
\[
\tilde{f}_i(y) = \sum_{k=0}^{\infty} \frac{1}{k!} \mathcal{J}^{[k]}(\tilde{f}_i)(\mathfrak{c}) \cdot \phi^{[k]}(y - \mathfrak{c}).
\]
Truncating this expansion at order $m$ yields a structured Taylor polynomial that approximates $\tau$ within $\epsilon$ on $K$:
\[
\left\| \tau(y) - \sum_{k=0}^{m} \frac{1}{k!} \mathcal{J}^{[k]}(\tilde{\tau})(\mathfrak{c}) \cdot \phi^{[k]}(y - \mathfrak{c}) \right\| < \epsilon.
\]
Thus, any smooth $\tau$ can be approximated arbitrarily closely (in the uniform norm on $K$) by the structured expansion of an analytic surrogate $\tilde{\tau}$.

\end{proof}

\section*{Acknowledgments}
\funding{This work was supported in part by the National Natural Science Foundation of China (Grant Nos.\ 62571503 and 62171421) and in part by the TaiShan Scholar Youth Expert Program of Shandong Province (Grant No.\ tsqn202306096).}

\bibliographystyle{siamplain}
\bibliography{references}
\end{document}